\newdimen\rh@wd
\newdimen\rh@hta
\newdimen\rh@htb
\newbox\rh@box
\def\rh@measure#1{\setbox\rh@box=\hbox{$#1$}\rh@wd=\wd\rh@box \rh@hta=\ht\rh@box}
\def\widecheck#1{\rh@measure{#1}%
	\setbox\rh@box=\hbox{$\widehat{\vrule height \rh@hta width\z@ \kern\rh@wd}$}%
	\rh@htb=\ht\rh@box \advance\rh@htb\rh@hta \advance\rh@htb\p@
	\ooalign{$\vrule height \ht\rh@box width\z@ #1$\cr
		\raise\rh@htb\hbox{\scalebox{1}[-1]{\box\rh@box}}\cr}}
\definecolor{blue1}{rgb}{0.03, 0.27, 0.49}
\newcommand{\qw}[1][-1]{\ar @{-} [0,#1]}
\newcommand{\gate}[1]{*{\xy *+<.6em>{#1};p\save+LU;+RU **\dir{-}\restore\save+RU;+RD **\dir{-}\restore\save+RD;+LD **\dir{-}\restore\POS+LD;+LU **\dir{-}\endxy} \qw}
\newcommand{\measureD}[1]{*{\xy*+=+<.5em>{\vphantom{\rule{0em}{.1em}#1}}*\cir{r_l};p\save*!R{#1} \restore\save+UC;+UC-<.5em,0em>*!R{\hphantom{#1}}+L **\dir{-} \restore\save+DC;+DC-<.5em,0em>*!R{\hphantom{#1}}+L **\dir{-} \restore\POS+UC-<.5em,0em>*!R{\hphantom{#1}}+L;+DC-<.5em,0em>*!R{\hphantom{#1}}+L **\dir{-} \endxy} \qw}
\newcommand{\multimeasureD}[2]{*+<1em,.9em>{\hphantom{#2}}\save[0,0].[#1,0];p\save !C *{#2},p+LU+<0em,0em>;+RU+<-.8em,0em> **\dir{-}\restore\save +LD;+LU **\dir{-}\restore\save +LD;+RD-<.8em,0em> **\dir{-} \restore\save +RD+<0em,.8em>;+RU-<0em,.8em> **\dir{-} \restore \POS !UR*!UR{\cir<.9em>{r_d}};!DR*!DR{\cir<.9em>{d_l}}\restore \qw}
\newcommand{\amultigate}[2]{*+<1em,.9em>{\hphantom{#2}} \POS[0,0].[#1,0];p !C *{#2},p \save+LU;+RU **\dir{-}\restore\save+RU;+RD **\dir{-}\restore\save+RD;+LD **\dir{-}\restore\save+LD;+LU **\dir{-}\restore}
\newcommand{\multigate}[2]{*+<1em,.9em>{\hphantom{#2}} \qw \POS[0,0].[#1,0];p !C *{#2},p \save+LU;+RU **\dir{-}\restore\save+RU;+RD **\dir{-}\restore\save+RD;+LD **\dir{-}\restore\save+LD;+LU **\dir{-}\restore}
\newcommand{\ghost}[1]{*+<1em,.9em>{\hphantom{#1}} \qw}
\newcommand{\aghost}[1]{*+<1em,.9em>{\hphantom{#1}}}
\newcommand{\gategroup}[6]{\POS"#1,#2"."#3,#2"."#1,#4"."#3,#4"!C*+<#5>\frm{#6}}
\newcommand{\Qcircuit}[1][0em]{\xymatrix @*=<#1>}
\newcommand{\pureghost}[1]{*+<1em,.9em>{\hphantom{#1}}}
\newcommand{\multiprepareC}[2]{*+<1em,.9em>{\hphantom{#2}}\save[0,0].[#1,0];p\save !C
  *{#2},p+RU+<0em,0em>;+LU+<+.8em,0em> **\dir{-}\restore\save +RD;+RU **\dir{-}\restore\save
  +RD;+LD+<.8em,0em> **\dir{-} \restore\save +LD+<0em,.8em>;+LU-<0em,.8em> **\dir{-} \restore \POS
  !UL*!UL{\cir<.9em>{u_r}};!DL*!DL{\cir<.9em>{l_u}}\restore}
\newcommand{\prepareC}[1]{*{\xy*+=+<.5em>{\vphantom{#1\rule{0em}{.1em}}}*\cir{l^r};p\save*!L{#1} \restore\save+UC;+UC+<.5em,0em>*!L{\hphantom{#1}}+R **\dir{-} \restore\save+DC;+DC+<.5em,0em>*!L{\hphantom{#1}}+R **\dir{-} \restore\POS+UC+<.5em,0em>*!L{\hphantom{#1}}+R;+DC+<.5em,0em>*!L{\hphantom{#1}}+R **\dir{-} \endxy}}
\newcommand{\poloFantasmaCn}[1]{{{}^{#1}_{\phantom{#1}}}}
\def\rA{{\mathrm{A}}} \def\rB{{\mathrm{B}}} \def\rC{{\mathrm{C}}} \def\rD{{\mathrm{D}}} \def\rE{{\mathrm{E}}} \def\rF{{\mathrm{F}}} \def\rG{{\mathrm{G}}} \def\rH{{\mathrm{H}}} \def\rI{{\mathrm{I}}} \def\rL{{\mathrm{L}}} \def\rM{{\mathrm{M}}} \def\rN{{\mathrm{N}}} \def\rO{{\mathrm{O}}} \def\rP{{\mathrm{P}}} \def\rR{{\mathrm{R}}} \def\rS{{\mathrm{S}}} \def\rX{{\mathrm{X}}} \def\rY{{\mathrm{Y}}} \def\rZ{{\mathrm{Z}}}
\def\tA{{\mathcal{A}}} \def\tB{{\mathcal{B}}} \def\tC{{\mathcal{C}}} \def\tD{{\mathcal{D}}} \def\tE{{\mathcal{E}}} \def\tF{{\mathcal{F}}} \def\tG{{\mathcal{G}}} \def\tI{{\mathcal{I}}} \def\tL{{\mathcal{L}}} \def\tR{{\mathcal{R}}} \def\tT{{\mathcal{T}}} \def\tU{{\mathcal{U}}} \def\tV{{\mathcal{V}}} \def\tW{{\mathcal{W}}} \def\tX{{\mathcal{X}}} \def\tY{{\mathcal{Y}}} \def\tZ{{\mathcal{Z}}}
\def\Tr{{\mathsf{Tr}}}
\newcommand\st[2][]{\mathsf{St}_{\mathbb{#1}}(\mathrm{#2})} \newcommand\eff[2][]{\mathsf{Eff}_{\mathbb{#1}}(\mathrm{#2})}
\newcommand\transf[3][]{\mathsf{Transf}_{\mathbb{#1}}(\mathrm{#2}\rightarrow\mathrm{#3})}
\newtheorem{theorem}{Theorem}
\newtheorem{defn}{Definition}
\newtheorem{lemma}{Lemma}
\newtheorem{prop}{Proposition}
\newtheorem{corollary}{Corollary}
\newtheorem{axiom}{Axiom}
\begin{document}

	\newgeometry{top=1in, bottom=1in, left=1in, right=1in}
\begin{titlingpage}
	\begin{center}
		
		\vspace{0.3cm}	
		\includegraphics[width=0.25\textwidth]{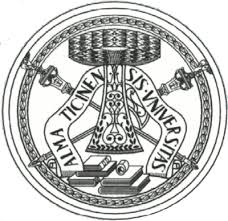}\par\vspace{0.4cm}	
		{\huge\scshape universit\'{a} degli studi di pavia \par}
		{\huge\scshape \textsc{dipartimento di fisica} \par}\vspace{0.1cm}
		{\Large\scshape \textsc{corso di laurea magistrale in scienze fisiche} \par}
		\vspace{3cm}
		{\Huge\bfseries Bit Commitment in Operational Probabilistic Theories\par}
		\vspace*{0.2cm}
	\end{center}
	\vspace{2cm}
	{\large \text{Tesi per la Laurea Magistrale di} \par}
	{\large \textbf{Lorenzo Giannelli} \par}		
	\vspace{2cm}
	
	\raggedleft	
	{\large \text{Relatore}\par}
	{\large \textbf{Chiar.mo Prof. Giacomo Mauro D'Ariano}\par}
	{\large \text{Correlatore}\par}
	{\large \textbf{Dott. Alessandro Tosini}\par}
	\raggedleft	
	\vspace{1cm}
	\begin{center}
		{\large Anno Accademico 2019-2020\par}
	\end{center}
\end{titlingpage}
\restoregeometry


\bigskip\begin{quotation}\begin{center}\begin{em}
			\thispagestyle{empty}
			\hfill A Piero
			\par\end{em}\end{center}\end{quotation}

\cleartoverso 

\renewenvironment{abstract}
{\cleardoublepage\thispagestyle{empty}\null\vfill\begin{center}
		\bfseries Abstract \end{center}}
{\vfill\null}

\begin{abstract}
	\noindent The aim of this thesis is to investigate the bit commitment protocol in the framework of operational probabilistic theories. In particular a careful study is carried on the feasibility of bit commitment in the non-local boxes theory and in order to do this new aspects of the theory are presented.\\
	
	\hfill
	
	\noindent Lo scopo di questa tesi è di investigare il protocollo di bit commitment all'interno delle teorie probabilistiche operazionali. In particolare si è analizzato attentamente la fattibilità del protocollo all'interno della teoria dei non-local boxes e i nuovi aspetti della teoria emersi in questa analisi sono presentati.
	
\end{abstract}
\clearpage
\thispagestyle{empty}
\phantom{a}
\newpage

\linespread{1.3}

\frontmatter 

\pagestyle{plain}
\tableofcontents

\mainmatter 
\pagestyle{Ruled}

\chapterstyle{thesis2}
\chapter*{Introduction}

\addcontentsline{toc}{chapter}{Introduction}
\thispagestyle{plain}

The study of quantum foundations is a discipline of science that seeks to understand the most characterizing aspects of quantum theory, to reformulate it and even propose new generalizations. An active area of research in quantum foundations is therefore to find alternative formulations of quantum theory which rely on physically compelling principles in attempt to find a re-derivation of the quantum formalism in terms of operational axioms. One of the most interesting effort in this direction is made investigating the relations between the current operational axioms and the main results of quantum information theory.\\
As a recent example it has been proved in Ref.~\cite{niwd} that the no information without disturbance (NIWD) theorem, i.e. the impossibility in quantum theory to extract information without disturbing the state of the system or its correlations with other systems, is independent of both local discriminability and purification, two of the defining axioms of quantum theory. Especially the latter, as we will see thoroughly in the first Chapter, is considered as a characteristic and distinctive quantum trait but now NIWD can be exhibited in absence of it and also of most of the principles of quantum theory.\\

The NIWD property spawns other no-go theorems, which represent some of the most famous and classic results in quantum information theory. Among these results one can certainly list the no-cloning theorem, the no-programming and the impossibility of perfectly secure bit commitment. In our thesis our efforts are focused on the latter.\\
A bit commitment (BC) protocol is meant to allow one party, Alice, to send a bit to a second party, Bob, in such a way that Bob cannot read the bit until Alice allows for its disclosure, while Alice cannot change the value of the bit after she encoded it. The bit commitment protocol is a very important primitive in cryptography, and perfectly secure protocols are known to be impossible in classical information theory. This is the case in quantum theory as well. The proof involves a very important characterization theorem for general theories with purification \cite{chiribella}.\\

\thispagestyle{plain}

Numerous bit commitment protocols have been proposed in literature and this cryptography primitive has been deeply studied, especially the possibility of unconditionally secure bit commitment, both in quantum and classical information theory, due to its importance in practical applications. We would like to investigate the relations between the theorem of no-bit commitment and the operational axioms that characterize quantum theory.\\

A general strategy to apprehend the nature of these links is to test the validity of the theorem in a theory that lacks one or more principles; naturally the first try has to be made excluding the purification, the most quantum feature.\\

The answer of how to start in the analysis comes directly from the literature on bit commitment protocol itself. In fact, after it was proved to be impossible in quantum theory the protocol has been tested in more general scenarios, in particular in more non-local scenarios.\\
To understand what it means we need to take a step back to the two parties of the protocol, Alice e Bob. Assume that they are not able to communicate but have access to physical states that they can use to generate joint correlations. In this experiment the outcome of the measurements on the state of their local systems are given by random variables. Obviously, causality constrains the correlations to be non signalling, and on the other side quantum theory prevents the strength of the non-local correlation to violate Bell's inequalities \cite{bell}, where the maximal value is known as Cirel'son's bound \cite{cirelson}. A well-known variant of a Bell inequality is the Clauser, Horne, Shimony \& Holt (CHSH) inequality \cite{CHSH}, which can be expressed as \cite{vanDam}
\begin{equation*}
	\sum_{x,y\in\{0,1\}}\mathsf{Pr}(a_x\oplus b_y=x\cdot y)\le2\sqrt{2}\,.
\end{equation*}
Where $x$ and $y$ denote the choice of Alice's and Bob's measurement, respectively, $a_x\in\{0,1\}$, $b_y\in\{0,1\}$ the respective binary outcomes, and $\oplus$ addition modulo 2.\\
However, if we only care about the causality constrain, Cirel'son's bound can be violated up to the maximal value of 4. Popescu and Rohrlich, who first noticed \cite{PRorigial}, raised the question of why nature is not more non-local and why does quantum mechanics not allow a stronger violation of the CHSH inequality.\\
Following this lead, bit commitment has been studied in Popescu Rohrlich (PR) non-local boxes theory.\\

It has been claimed, as in Ref.~\cite{wolf}, that bit commitment was admissible in PR-box theory but a counter-proof reached from Short, Gisin, and Popescu \cite{short-gisin-popescu}. It was formulated on the behalf that, since non-locality is the main reason to prevent bit commitment in quantum theory, it would not be possible that in a theory more non-local than the quantum one BC would work. However, thereafter a new protocol was proposed by Buhrman \textit{et al.} \cite{Buhrman_2006} and the argument of the counter-proof of Short, Gisin and Popescu is not able to deny it.\\

The first aim of our work is to bring some clarity, pointing out the limitations of the framework of validity about the results that have been claimed until now. In our path new aspects of PR-box theory emerged and have been studied. Even if the theory is still far to be considered complete, important characterizations have been made.\\
We will show that all the BC protocols proposed so far respect the limitations of PR-box theory under which we are able to prove a no-bit commitment theorem. Furthermore we are able to create, with very similar arguments, a counter-proof of the protocol proposed in Ref.~\cite{Buhrman_2006}.\\

Finally, a surprising result is observed. Relaxing the constraint of PR-box theory and including some recent developments, a scheme of bit commitment that is perfectly secure seems possible. Even if there are actually no "operational" evidences within the theory to deny it, by the same fundamental reason expressed by Short, Gisin, and Popescu, we think that future advancements of the theory would reconsider the protocol as cheatable. However, as a matter of fact, now the answer to the existence of a theory with entanglement that admits bit commitment, seems to be ``\textit{Yes}!''.\\

\noindent A synopsis of the thesis is the following:\\

\noindent In the first Chapter the framework of operational probabilistic theories (OPTs) is presented by first introducing the operational language that expresses the possible connections between events, and then by dressing the elements of the language with a probabilistic structure. After that, the principles for the OPT of quantum mechanics are stated (given the framework, the rule of connectivity among events are given).\\

\noindent In the second Chapter, we will discuss the bit commitment protocol. We will start with an historical perspective and then we will rigorously define the protocol in the language of OPT. We will mainly deal only with perfectly secure bit commitment since our analysis is carried in the OP framework and there are not yet the technical tools to analyse unconditionally secure BC in OPTs. In the third section of the Chapter, we will study the proof of the impossibility of perfectly secure bit commitment in quantum theory done in Ref.~\cite{chiribella}. This is a very elegant and solid demonstration and our purpose is to adapt the proof in order to comprehend other theories than the quantum one. As we will see, non-locality and entanglement are the key reasons that plays in favor of the impossibility of BC and so it seems reasonable to try to extend the impossibility proof also to other non-local theories. \\

\noindent In the third Chapter, we will analyse the probabilistic theory corresponding to the popular PR-box model. A comprehensive study has never been made and an organic theory is not available. We propose to bring clarity to the actual model considering only bipartite correlated boxes, highlighting its limitations, and some progress by analysing new aspects such as perfect discriminability and states purification. We will also add some considerations and prevision on $N$-partite correlated boxes.\\

\noindent Finally, in the fourth Chapter we propose a proof of impossibility of perfectly secure bit commitment in PR-boxes. Even if the PR-box model does not constitute a proper theory it has been often used in numerous application in literature, such as in protocol of perfectly or unconditionally secure BC. However this led to neglect important elements of the theory and in return numerous results published can be proved false when contextualized in the PR-box OPT (also if it is still incomplete). \\
As the last remark, we conjecture that including tripartite correlated boxes in the theory, secure bit commitment would be possible. The reason is the following. We will see that what prevents BC in quantum theory is that every state can be purified and its purification is unique up to local reversible transformations. Exactly these local operations allow the cheating in the protocol. When we deal with PR-box theory limited to bipartite correlated boxes, only one internal state can be purified, but its purification is unique and this is enough to allow cheating. On the contrary, when we admit tripartite correlated boxes, the uniqueness of purification is lost and exactly this uncommon feature could open the door to the possibility of secure bit commitment.

\thispagestyle{plain}

\chapterstyle{thesis}

\chapter{Operational Probabilistic Theory}
\label{chap:opt}
The purpose of this Chapter is to introduce the framework of operational probabilistic theories (OPTs) and to express quantum theory as an OPT. In this Chapter we will follow Ref.~\cite{d'ariano-libro}.\\

The framework of operational probabilistic theories consists of two distinct conceptual ingredients: an operational structure, describing circuits that produce outcomes, and a probabilistic structure, which assigns probabilities to the outcomes in a consistent way. The operational structure summarizes all the possible circuits that can be constructed in a given physical theory, in this setting a rigorous formulation of the elements of the circuits: systems, transformations, and their composition is given, which constitutes the grammar for the probabilistic description of an experiment. However, it is only the probabilistic structure that promotes the operational language from a merely descriptive tool to a framework for predictions, the predictive power being the crucial requirement for any scientific theory and for its testability - the essence of science itself. Different OPTs will have different rules for assigning the joint probabilities of events. Working in this framework allows us to deal with a wide range of probabilistic theories, including not only quantum and classical theories, but also the theory of Popescu-Rohrlich (PR), or non-local, boxes.\\

In the first part of the Chapter the framework is provided by first introducing the operational language that expresses the possible connections between events, and then by dressing the elements of the language with a probabilistic structure.\\
Then, in the second part of the Chapter we formulate the principles for the OPT of quantum theory. In fact, once the framework is defined, the rule of connectivity among events are given.\\

\section{The Framework}

A theory for making predictions about joint events depending on their reciprocal connections is what we call an operational probabilistic theory. We see that OPT is a non-trivial extension of probability theory.\\
To the joint events we associate not only their joint probability, but also a circuit that connects them. When the events in the circuit have a well-defined order, the circuit is mathematically described by a \textit{directed acyclic graph} (a graph with directed edges and without loops).\\
The basic element of an OPT - the notion of \textit{event} - gets dressed with \textit{wires} that allow us to connect it with other events. Such wires are the \textit{systems} of the theory. In agreement with the directed nature of the graph, there are input and output systems. The events are the \textit{transformations}, whereas the transformations with no input system are the states (corresponding to preparations of systems), and those with no output system are the effects (corresponding to observations of systems). Since the purpose of a single event is to describe a process connecting an input with an output, the full circuit associated to a probability is a closed one, namely a circuit with no input and no output.\\
The circuit framework is mathematically formalized in the language of \textit{category theory}. In this language, an OPT is a category, whose systems and events are objects and arrows, respectively. Every arrow has an input and an output object, and arrows can be sequentially composed. The associativity, existence of a trivial system, and commutativity of the parallel composition of systems of quantum theory technically correspond to having a \textit{strict symmetric monoidal category}.

\subsection{Primitive notions and notation}

The primitive notions of any operational theory are those of \textit{test}, \textit{event}, and \textit{system}. A test $\{\tA_i\}_{i\in \rI}$ is the collection of events $\tA_i$, where $i$ labels the element of the outcome space I. In addition to comprising a collection of events, the notion of test carries also the event connectivity of the theory that is achieved by the systems. These can represent the input and the output of the test. The resulting representation of a test is the following diagram:
\begin{equation*}
	\begin{aligned}
		\Qcircuit @C=1em @R=.7em {
			&
			\poloFantasmaCn{\rA}\qw&
			\gate{\{\tE_x\}_{x\in \rX}}&
			\poloFantasmaCn{\rB}\qw&
			\qw
		}\;.
	\end{aligned}
\end{equation*}
The wire on the left labeled as A represents the input system, whereas the wire on the right labeled as B represents the
output system. The same diagrammatic representation is also used for any of the events, namely for $x\in \rX$
\begin{equation*}
	\begin{aligned}
		\Qcircuit @C=1em @R=.7em {
			&
			\poloFantasmaCn{\rA}\qw&
			\gate{\tE_x}&
			\poloFantasmaCn{\rB}\qw&
			\qw
		}\;.
	\end{aligned}
\end{equation*}
In the following, the systems will be denoted by capital Roman letters A,B,...,Z, whereas the events by capital calligraphic letters $\mathcal{A,B,}...,\mathcal{Z}$.\\
Different tests can be combined in a \textit{circuit}, which is a directed acyclic graph where the links are the systems (oriented from left to right, namely from input to output) and the nodes are the boxes of the tests. The same graph can be built up for a single test instance, namely with the network nodes being events instead of tests, corresponding to a joint outcome for all tests.\\
The circuit graph is obtained precisely by using the following rules.	
\begin{description}
	\item[Sequential Composition of Test] When the output system of test $\{\mathcal{C}_x\}_{x\in \rX}$ and the input system of test $\{\tD_y\}_{y\in \rY}$ coincide, the two tests can be composed in sequence as follows:
	\begin{equation*}
		\begin{aligned}
			\Qcircuit @C=1em @R=.7em {
				&
				\poloFantasmaCn{\rA}\qw&
				\gate{\{\tC_x\}_{x\in \rX}}&
				\poloFantasmaCn{\rB}\qw&
				\gate{\{\tD_y\}_{y\in \rY}}&
				\poloFantasmaCn{\rC}\qw&
				\qw
			}
			\, \eqqcolon \,
			\Qcircuit @C=1em @R=.7em {
				&
				\poloFantasmaCn{\rA}\qw&
				\gate{\{\tE_{(x,y)}\}_{(x,y)\in \rX\times \rY}}&
				\poloFantasmaCn{\rC}\qw&
				\qw
			}\;,
		\end{aligned}
	\end{equation*}
	resulting in the test $\{\tE_{(x,y)}\}_{(x,y)\in \rX\times \rY}$ called \textit{sequential composition} of $\{\tC_x\}_{x\in \rX}$ and $\{\tD_y\}_{y\in \rY}$.
	In formulas we will also write $\tE_{(x,y)}:=\tD_y\tC_x$.
	
	\item[Identity Test] For every system A, one can perform the \textit{identity test} (shortly identity) that "leaves the system alone". Formally, this is the deterministic test $\mathcal{I}_\textnormal{A}$ with the property
	\begin{equation*}
		\begin{aligned}
			\Qcircuit @C=1em @R=.7em @! R {
				&
				\poloFantasmaCn{\rA}\qw&
				\gate{\mathcal{I}_\rA}&
				\poloFantasmaCn{\rA}\qw&
				\gate{\mathcal{C}}&
				\poloFantasmaCn{\rB}\qw&
				\qw
			}
		\end{aligned}
		\, =\,
		\begin{aligned}
			\Qcircuit @C=1em @R=.7em @! R {
				&
				\poloFantasmaCn{\rA}\qw&
				\gate{\mathcal{C}}&
				\poloFantasmaCn{\rB}\qw&
				\qw
			}
		\end{aligned}\,,
	\end{equation*}
	\begin{equation*}
		\begin{aligned}
			\Qcircuit @C=1em @R=.7em @! R {
				&
				\poloFantasmaCn{\rB}\qw&
				\gate{\mathcal{D}}&
				\poloFantasmaCn{\rA}\qw&
				\gate{\mathcal{I}_\rA}&
				\poloFantasmaCn{\rA}\qw&
				\qw
			}
		\end{aligned}
		\, =\, 
		\begin{aligned}
			\Qcircuit @C=1em @R=.7em @! R {
				&
				\poloFantasmaCn{\rB}\qw&
				\gate{\mathcal{D}}&
				\poloFantasmaCn{\rA}\qw&
				\qw
			}
		\end{aligned}\,,
	\end{equation*}
	where the above identities must hold for any event $ \Qcircuit @C=.5em @R=.5em { & \poloFantasmaCn{\rA} \qw & \gate{\mathcal{C}} & \poloFantasmaCn{\rB} \qw &\qw} $
	and $ \Qcircuit @C=.5em @R=.5em { & \poloFantasmaCn{\rB} \qw & \gate{\mathcal{D}} & \poloFantasmaCn{\rA} \qw &\qw} $,respectively. The sub-index A will be dropped from $\mathcal{I}_\text{A}$ where there is no ambiguity.
	
	\item[Operationally Equivalent Systems] We say that two systems A and $\text{A}^\prime$ are \textit{operationally equivalent} - denoted as $\text{A}^\prime\simeq \text{A}$ or just $\text{A}^\prime=\text{A}$ - if there exist two deterministic events $ \Qcircuit @C=.5em @R=.5em { & \poloFantasmaCn{\rA} \qw & \gate{\mathcal{U}} & \poloFantasmaCn{\rA^\prime} \qw &\qw} $ and	$ \Qcircuit @C=.5em @R=.5em { & \poloFantasmaCn{\rA^\prime} \qw & \gate{\mathcal{V}} & \poloFantasmaCn{\rA} \qw &\qw} $ such that
	\begin{equation*}
		\begin{aligned}
			\Qcircuit @C=1em @R=.7em @! R {& \poloFantasmaCn{\rA} \qw & \gate{\mathcal{U}} & 	\poloFantasmaCn{\rA^\prime} \qw & \gate{\mathcal{V}} & \poloFantasmaCn{\rA} \qw & \qw}
		\end{aligned}
		\ =\ 
		\begin{aligned}
			\Qcircuit @C=1em @R=.7em @! R {& \poloFantasmaCn{\rA} \qw & \gate{\mathcal{I}} & \poloFantasmaCn{\rA} \qw & \qw}
		\end{aligned}\,,
	\end{equation*}
	\begin{equation*}
		\begin{aligned}
			\Qcircuit @C=1em @R=.7em @! R {& \poloFantasmaCn{\rA^\prime} \qw & \gate{\mathcal{V}} & \poloFantasmaCn{\rA} \qw & \gate{\mathcal{U}} & \poloFantasmaCn{\rA^\prime} \qw & \qw}
		\end{aligned}
		\ =\ 
		\begin{aligned}
			\Qcircuit @C=1em @R=.7em @! R {& \poloFantasmaCn{\rA^\prime} \qw & \gate{\mathcal{I}} & \poloFantasmaCn{\rA^\prime} \qw & \qw}
		\end{aligned}\,.
	\end{equation*}
	Accordingly, if $\{\tC_x\}_{x\in \rX}$ is any test for system A, performing an equivalent test on system $\text{A}^\prime$ means performing the test $\{\tC^\prime_x\}_{x\in \rX}$ defined as
	
	\begin{equation*}
		\begin{aligned}
			\Qcircuit @C=1em @R=.7em @! R {& \poloFantasmaCn{\rA^\prime} \qw & \gate{\mathcal{C}^\prime_x} & \poloFantasmaCn{\rA^\prime} \qw & \qw}
		\end{aligned}
		\ =\ 
		\begin{aligned}
			\Qcircuit @C=1em @R=.7em @! R {& \poloFantasmaCn{\rA^\prime} \qw & \gate{\mathcal{V}} & \poloFantasmaCn{\rA} \qw & \gate{\mathcal{C}_x} & \poloFantasmaCn{\rA} \qw & \gate{\mathcal{U}} & \poloFantasmaCn{\rA^\prime} \qw & \qw}
		\end{aligned}\,.
	\end{equation*}
	
	\item[Composite System] Given two systems A and B, one can join them into the single composite system AB. As a rule, the system AB is operationally equivalent to the system BA, and we will identify them in the following. This means that the system composition is commutative,
	\begin{equation*}
		\text{AB}=\text{BA}.
	\end{equation*}
	We will call a system \textit{trivial system}, reserving for it the letter I, if it corresponds to the
	identity in the system composition, namely
	\begin{equation*}
		\text{AI}=\text{IA}=\text{A}.
	\end{equation*}
	The trivial system corresponds to having no system, namely I carries no information.\\
	Finally we require the composition of systems to be associative, namely
	\begin{equation*}
		\text{A(BC)}=\text{(AB)C}.
	\end{equation*}
	In other words, if we iterate composition on many systems we always end up with	a composite system that only depends on the components, and not on the particular composition sequence according to which they have been composed. Systems then make an Abelian monoid. A test with input system AB and output system CD represents an interaction process (see the parallel composition of tests in the following).
	
	\item[Parallel Composition of Tests] Any two tests $ \Qcircuit @C=.5em @R=.5em { & \poloFantasmaCn{\rA} \qw & \gate{\{\mathcal{C}_x\}_{x\in \rX}} & \poloFantasmaCn{\rB} \qw &\qw} $ $ \Qcircuit @C=.5em @R=.5em { & \poloFantasmaCn{\rC} \qw & \gate{\{\mathcal{D}_y\}_{y\in \rY}} & \poloFantasmaCn{\rD} \qw &\qw} $ can be composed in parallel as follows:
	\begin{equation*}
		\begin{aligned}
			\Qcircuit @C=1em @R=.7em @! R {& \poloFantasmaCn{\rA} \qw & \gate{\{\mathcal{C}_x\}_{x\in \rX}} & \poloFantasmaCn{\rB} \qw &\qw\\ & \poloFantasmaCn{\rC} \qw & \gate{\{\mathcal{D}_y\}_{y\in \rY}} & \poloFantasmaCn{\rD} \qw &\qw}
		\end{aligned}
		\ =:\ 
		\begin{aligned}
			\Qcircuit @C=1em @R=.7em @! R {
				& \poloFantasmaCn{\rA\rC} \qw & \gate{\{\tF_{(x,y)}\}_{(x,y)\in \rX\times \rY}} & \poloFantasmaCn{\rB\rD} \qw &\qw }
		\end{aligned}\,.
	\end{equation*}
	The test $ \Qcircuit @C=.8em @R=.5em { & \poloFantasmaCn{\rA\rC} \qw & \gate{\{\tF_{(x,y)}\}_{(x,y)\in \rX\times \rY}} & \poloFantasmaCn{\rB\rD} \qw &\qw } $ is the \textit{parallel composition} of tests $ \Qcircuit @C=.5em @R=.5em { & \poloFantasmaCn{\rA} \qw & \gate{\{\mathcal{C}_x\}_{x\in \rX}} & \poloFantasmaCn{\rB} \qw &\qw} $ and $ \Qcircuit @C=.5em @R=.5em { & \poloFantasmaCn{\rC} \qw & \gate{\{\mathcal{D}_y\}_{y\in \rY}} & \poloFantasmaCn{\rD} \qw &\qw} $, where $\{\tF_{(x,y)}\}_{(x,y)\in \rX\times \rY}\equiv\{\tC_x\otimes\tD_y\}_{(x,y)\in \rX\times \rY}$. Parallel and sequential composition of tests commute, namely one has	
	\begin{equation}
		\label{c:parallel composition}
		\begin{aligned}
			\Qcircuit @C=1em @R=.7em @! R {
				& \poloFantasmaCn{\rA} \qw & \gate{\tC_z} & \poloFantasmaCn{\rB} \qw & \gate{\tA_x} & \poloFantasmaCn{\rC} \qw & \qw \\
				& \poloFantasmaCn{\rD} \qw & \gate{\tB_y} & \poloFantasmaCn{\rE} \qw & \gate{\tD_w} & \poloFantasmaCn{\rF} \qw & \qw \gategroup{1}{3}{2}{3}{.7em}{--} \gategroup{1}{5}{2}{5}{.7em}{--}}
		\end{aligned}
		\ =\ 
		\begin{aligned}
			\Qcircuit @C=1em @R=1.2em @! R {
				& \poloFantasmaCn{\rA} \qw & \gate{\tC_z} & \poloFantasmaCn{\rB} \qw & \gate{\tA_x} & \poloFantasmaCn{\rC} \qw & \qw \\
				& \poloFantasmaCn{\rD} \qw & \gate{\tB_y} & \poloFantasmaCn{\rE} \qw & \gate{\tD_w} & \poloFantasmaCn{\rF} \qw & \qw \gategroup{1}{3}{1}{5}{.7em}{--} \gategroup{2}{3}{2}{5}{.7em}{--}}
		\end{aligned}\,.
	\end{equation}
	
	When one of the two operation is the identity, we wull omit the identity box and drawn only a straight line:
	\begin{equation*}
		\begin{aligned}
			\Qcircuit @C=1em @R=.7em @! R {
				& \poloFantasmaCn{\rA} \qw & \gate{\tC_x} & \poloFantasmaCn{\rB} \qw &  \qw \\
				& \qw & \poloFantasmaCn{\rC} \qw & \qw & \qw
			}\,.
		\end{aligned}
	\end{equation*}
	
	Therefore, as a consequence of commutation between sequetial and parallel composition, we have the following identity:
	\begin{equation*}
		\begin{aligned}
			\Qcircuit @C=1em @R=.7em @! R {
				& \poloFantasmaCn{\rA} \qw & \gate{\tC_x} & \qw & \poloFantasmaCn{\rB} \qw & \qw \\
				& \poloFantasmaCn{\rC} \qw & \qw & \gate{\tD_y} & \poloFantasmaCn{\rD} \qw & \qw
			}
		\end{aligned}
		\ =\ 
		\begin{aligned}
			\Qcircuit @C=1em @R=1.2em @! R {
				& \poloFantasmaCn{\rA} \qw & \qw & \gate{\tC_x} & \poloFantasmaCn{\rB} \qw & \qw \\
				& \poloFantasmaCn{\rC} \qw & \gate{\tD_y} & \qw & \poloFantasmaCn{\rD} \qw & \qw
			}
		\end{aligned}\,.
	\end{equation*}
	
	\item[Preparation Tests and Observation Tests] Tests with a trivial input system are called \textit{preparation tests}, and tests with a trivial output system are called \textit{observation tests}. They will be represented as follows:
	\begin{equation*}
		\begin{aligned}
			\Qcircuit @C=1em @R=.7em @! R {& \prepareC
				{\{\rho_x\}_{x\in \rX}} & \poloFantasmaCn{\rB} \qw & \qw
			}
		\end{aligned}
		\ :=\ 
		\begin{aligned}
			\Qcircuit @C=1em @R=.7em @! R {    
				& \poloFantasmaCn{\rI} \qw & \gate{\{\rho_x\}_{x\in \rX}} & \poloFantasmaCn{\rB} \qw & \qw
			}
		\end{aligned}\,,
	\end{equation*}
	\begin{equation*}
		\begin{aligned}
			\Qcircuit @C=1em @R=.7em @! R { & & \poloFantasmaCn{\rA} \qw & \measureD{\{a_y\}_{y\in \rY}}
			}
		\end{aligned}
		\ :=\ 
		\begin{aligned}
			\Qcircuit @C=1em @R=.7em @! R {    
				& \poloFantasmaCn{\rA} \qw & \gate{\{a_y\}_{y\in \rY}} & \poloFantasmaCn{\rI} \qw & \qw
			}
		\end{aligned}\,.
	\end{equation*}
	The corresponding events will be called \textit{preparation events} and \textit{observation events}. In formulas we will also write $|\rho_i)_\text{A}$ to denote a preparation event and $(a_j|_\text{A}$ to denote an observation event.
	
	\item[Closed Circuits] Using the above rules we can build up \textit{closed circuits}, i.e. circuits with no input and no output system. An example is given by the following circuit:
	\begin{equation}
		\label{c:closed1}
		\begin{aligned}
			\Qcircuit @C=1em @R=.7em @! R {
				\multiprepareC{3}{\{\Psi_{i}\}}&
				\qw\poloFantasmaCn{\rA}&
				\multigate{1}{\{\tA_{j}\}}&
				\qw\poloFantasmaCn{\rB}&
				\gate{\{\tC_{l}\}}&
				\qw\poloFantasmaCn{\rC}&
				\multigate{1}{\{\tE_{n}\}}&
				\qw\poloFantasmaCn{\rD}&
				\multimeasureD{2}{\{\tG_{q}\}}
				\\
				\pureghost{\{\Psi_{i}\}}&
				\qw\poloFantasmaCn{\rE}&
				\ghost{\{\tA_{j}\}}&\qw\poloFantasmaCn{\rF}&
				\multigate{1}{\{\tD_{m}\}}&
				\qw\poloFantasmaCn{\rG}&
				\ghost{\{\tE_{n}\}}
				\\
				\pureghost{\{\Psi_{i}\}}&
				\qw\poloFantasmaCn{\rH}&
				\multigate{1}{\{\tB_{k}\}}&
				\qw\poloFantasmaCn{\rL}&
				\ghost{\{\tD_{m}\}}&\qw\poloFantasmaCn{\rM}&
				\multigate{1}{\{\tF_{p}\}}&
				\qw\poloFantasmaCn{\rN}&\pureghost{\{\tG_{q}\}}\qw
				\\
				\pureghost{\{\Psi_{i}\}}&
				\qw\poloFantasmaCn{\rO}&
				\ghost{\{\tB_{k}\}}&				
				\qw&
				\qw\poloFantasmaCn{\rP}&
				\qw&
				\ghost{\{\tF_{p}\}}
				\\
			}
		\end{aligned}
	\end{equation}
	where we omitted the probability spaces of each test.
	
	\item[Independent Systems] For any (generally open) circuit constructed according to the above rules we call a set of systems \textit{independent} if for each couple of systems in the set the	two are not connected by a unidirected path (i.e. following the arrow from the input to the output). For example, in Eq.~\eqref{c:closed1} the sets \{A,E\}, \{H,O\}, \{A,E,H,O\}, \{A,L\}, \{A,E,L,P\} are independent, whereas e.g. the sets \{A,M\}, \{A,B\}, \{A,E,N\} are not. A maximal set of independent systems is called a \textit{slice}.
\end{description}

We are now in position to move towards the general purpose of an operational probabilistic theory: predicting and accounting for the joint probability of events corresponding to a particular circuit of connections.\\
Given a closed circuit, as in Eq. \eqref{c:closed1}, we are left with just a joint probability distribution. Therefore, to a closed circuit of event as the following: 
\begin{equation}
	\label{c:closed2}
	\begin{aligned}
		\Qcircuit @C=1em @R=.7em @! R {
			\multiprepareC{3}{\Psi_{i}}&
			\qw\poloFantasmaCn{\rA}&
			\multigate{1}{\tA_{j}}&
			\qw\poloFantasmaCn{\rB}&
			\gate{\tC_{l}}&
			\qw\poloFantasmaCn{\rC}&
			\multigate{1}{\tE_{n}}&
			\qw\poloFantasmaCn{\rD}&
			\multimeasureD{2}{\tG_{q}}
			\\
			\pureghost{\Psi_{i}}&
			\qw\poloFantasmaCn{\rE}&
			\ghost{\tA_{j}}&\qw\poloFantasmaCn{\rF}&
			\multigate{1}{\tD_{m}}&
			\qw\poloFantasmaCn{\rG}&
			\ghost{\tE_{n}}
			\\
			\pureghost{\Psi_{i}}&
			\qw\poloFantasmaCn{\rH}&
			\multigate{1}{\tB_{k}}&
			\qw\poloFantasmaCn{\rL}&
			\ghost{\tD_{m}}&\qw\poloFantasmaCn{\rM}&
			\multigate{1}{\tF_{p}}&
			\qw\poloFantasmaCn{\rN}&\pureghost{\tG_{q}}\qw
			\\
			\pureghost{\Psi_{i}}&
			\qw\poloFantasmaCn{\rO}&
			\ghost{\tB_{k}}&				
			\qw&
			\qw\poloFantasmaCn{\rP}&
			\qw&
			\ghost{\tF_{p}}
			\\
		}
	\end{aligned}
\end{equation}
we will associate a joint probability $p(i,j,k,l,m,n,p,q)$ which we will consider as \textit{parametrically dependent} on the circuit, namely, for a different choice of events and/or different connections we will have a different joint probability.\\
Since we are interested only in the joint probabilities and their corresponding circuits, we will build up probabilistic equivalence classes, and define:

\begin{center}
	\textit{Two events from system A to system B are equivalent if they occur with the same joint probability with the other events within any circuit.}
\end{center}

We will call \textit{transformation from} A \textit{to} B - denoted as $\tA\in\mathsf{Transf}(\rA\rightarrow\rB)$ - the equivalence class of events from A to B that are equivalent in the above sense. Likewise we will call \textit{instrument} an equivalence class of tests, \textit{state} an equivalence class of preparation events, and \textit{effect} an equivalence class of observation events. We will denote the set of states of system A as $\mathsf{St}(\rA)$, and the set of its effects as $\mathsf{Eff}(\rA)$. Clearly, the input systems belonging to two different elements of an equivalence class will be operationally equivalent, and likewise for output systems.\\
We now can define an operational probabilistic theory as follows:

\begin{center}
	\textit{An operational probabilistic theory (OPT) is a collection of systems and transformations, along with rules for composition of systems and parallel and sequential composition of transformations. The OPT assigns a joint probability to each closed circuit.}
\end{center}

Therefore, in an OPT every test from the trivial system I to itself is a probability distribution $\{p_i\}_{i\in \rX}$ for the set of joint outcomes X, with $p(i):=p_i\in\left[0,1\right]$ and
$\sum_{i\in\rX}p(i)=1$. Compound events from the trivial system to itself are \textit{independent}, namely their joint probability is given by the product of the respective probabilities for both the parallel and the sequential composition, namely
\begin{equation*}
	\begin{aligned}
		\Qcircuit @C=1em @R=.7em @! R {& \prepareC{\rho_{i_1}} & \poloFantasmaCn{\rA} \qw & \measureD{a_{i_2}}\\
			& \prepareC{\sigma_{j_1}} & \poloFantasmaCn{\rB} \qw & \measureD{b_{j_2}}
		}
	\end{aligned}
	\ =\ 
	\begin{aligned}
		\Qcircuit @C=1em @R=.7em @! R {    
			\prepareC{\rho_{i_1}} & \poloFantasmaCn{\rA} \qw & \measureD{a_{i_2}} & \prepareC{\sigma_{j_1}} & \poloFantasmaCn{\rB} \qw & \measureD{b_{j_2}} 
		}
	\end{aligned}
	=p(i_1,i_2)q(j_1,j_2)
	\,.
\end{equation*}
A special case of OPT is the \textit{deterministic} OPT, where all probabilities are 0 or 1.

\subsection{States and effects}

Using the parallel and sequential composition of transformation it follows that any closed circuit can be regarded as the composition of a preparation event and an observation event, for example the circuit in Eq.~\eqref{c:closed2} can be cut along a slice as follows:

\begin{equation*}
	\begin{aligned}
		\Qcircuit @C=1em @R=.7em @! R {
			\multiprepareC{3}{\Psi_{i}}&
			\qw\poloFantasmaCn{\rA}&
			\multigate{1}{\tA_{j}}&
			\qw\poloFantasmaCn{\rB}&
			\qw
			\\
			\pureghost{\Psi_{i}}&
			\qw\poloFantasmaCn{\rE}&
			\ghost{\tA_{j}}&\qw\poloFantasmaCn{\rF}&
			\multigate{1}{\tD_{m}}&
			\qw\poloFantasmaCn{\rG}&
			\\
			\pureghost{\Psi_{i}}&
			\qw\poloFantasmaCn{\rH}&
			\multigate{1}{\tB_{k}}&
			\qw\poloFantasmaCn{\rL}&
			\ghost{\tD_{m}}&\qw\poloFantasmaCn{\rM}&
			\multigate{1}{\tF_{p}}&
			\qw
			\\
			\pureghost{\Psi_{i}}&
			\qw\poloFantasmaCn{\rO}&
			\ghost{\tB_{k}}&				
			\qw&
			\qw\poloFantasmaCn{\rP}&
			\qw&
			\ghost{\tF_{p}}
			\\
		}
	\end{aligned}
	\ + \
	\begin{aligned}
		\Qcircuit @C=1em @R=.7em @! R {
			& \qw\poloFantasmaCn{\rB}&
			\gate{\tC_{l}}&
			\qw\poloFantasmaCn{\rC}&
			\multigate{1}{\tE_{n}}&
			\qw\poloFantasmaCn{\rD}&
			\multimeasureD{2}{\tG_{q}}
			\\
			& & & \qw\poloFantasmaCn{\rG}&
			\ghost{\tE_{n}}
			\\
			& & & & & \qw\poloFantasmaCn{\rN}&
			\pureghost{\tG_{q}}\qw
			\\
			&&&&&&
			\\
		}
	\end{aligned}
\end{equation*}
and thus is equivalent to the following state-effect circuit:
\begin{equation*}
	\begin{aligned}
		\Qcircuit @C=1em @R=.7em @! R {& \prepareC{(\Psi_i,\tA_j,\tB_k,\tD_m,\tF_p)} & \poloFantasmaCn{\rB\rG\rN} \qw & \measureD{(\tC_l,\tE_n,\tG_q)}
		}
	\end{aligned}
\end{equation*}
Therefore, a state $\rho\in\st{A}$ is a functional over effects $\eff{A}$, the functional being denoted with the pairing $(a|\rho)$ with $a\in\eff{A}$ and analogously an effect $a\in\eff{A}$ is a functional over states $\st{A}$.\\
By taking linear combinations of functionals we see that $\st[R]{A}:=\mathsf{Span}_\mathbb{R}\left[\st{A}\right]$ and $\eff[R]{A}:=\mathsf{Span}_\mathbb{R}\left[\eff{A}\right]$ are dual spaces, and states are positive linear functionals over effects, and effects are positive linear functional over states ($\st[R]{A}$ and $\eff[R]{A}$ are assume finite dimensional and we denote as $D_\rA:=\text{dim}\st[R]{A}\equiv\text{dim}\eff[R]{A}$ also called size of system A). In the following we also denote by $\mathsf{St}_1(\rA)$ and $\mathsf{Eff}_1(\rA)$ the sets of deterministic states and effects, respectively.\\
According to the above definition, two states are different if and only if there exists an effect which occurrs on them with different joint probabilities. We also have that two effects are different if and only if there exists a state on which they have different probabilities.\\
In particular, given two states $\rho_0\ne\rho_1\in\st{A}$ we will say that an effect $a\in\eff{A}$ \textit{separates} the state $\rho_0$ and $\rho_1$ when $(\rho_1|a)\ne(\rho_0|a)$, namely when the effect occurs with different joint probabilities over the two states (the analogous relation holds for separable states respect to effects). Therefore we conclude that:
\begin{center}
	\textit{States are separating for effects and effects are separating for states.}
\end{center}
It is possible to demonstrate that in any convex OPT if two states (effects) $\rho_0$,$\rho_1\in\st{A}$ ($a_0,a_1\in\eff{A}$) are distinct, then one can discriminate them with error probability strictly smaller than $\frac{1}{2}$.

\subsection{Transformations}

From what we said before, the following circuit is a state of system BFHO:
\begin{equation*}
	\begin{aligned}
		\Qcircuit @C=1em @R=.7em @! R {
			\multiprepareC{3}{\Psi}&
			\qw\poloFantasmaCn{\rA}&
			\multigate{1}{\tA}&
			\qw\poloFantasmaCn{\rB}&
			\qw
			\\
			\pureghost{\Psi}&
			\qw\poloFantasmaCn{\rE}&
			\ghost{\tA}&\qw\poloFantasmaCn{\rF}&
			\qw
			\\
			\pureghost{\Psi}&
			\qw\poloFantasmaCn{\rH}&
			\qw
			\\
			\pureghost{\Psi}&
			\qw\poloFantasmaCn{\rO}&				
			\qw&	
			\\
		}
	\end{aligned}
\end{equation*}
This means that any transformation connected to some output systems of a state maps the state into another state of generally different systems. Thus, while states and effects are linear functionals over each other, we can always regard a transformation as a map between states. In particular, a transformation $\tT\in\transf{A}{B}$ is always associated to a map $\hat{\tT}$ from $\st{A}$ to $\st{B}$, uniquely defined as
\begin{equation*}
	\hat{\tT}:|\rho)\in\st{A}\mapsto\hat{\tT}|\rho)=|\tT\rho)\in\st{B}.
\end{equation*}
Similarly the transformation can be associated to a map from $\eff{A}$ to $\eff{B}$. The map $\hat{\tT}$ can be linearly extended to a map from $\st[R]{A}$ to $\st[R]{B}$. Notice that the linear
extension of $\tT$ (which we will denote by the same symbol) is well defined. In fact, a linear combination of states of A is null - in formula $\sum_{i}c_i|rho_i)=0$ - if and only if $\sum_{i}c_i(a|rho_i)=0$ for every $a\in\eff{A}$, and since for every $b\in\eff{B}$ we have $(b|\tT\in\eff{A}$, then $(b|\tT(\sum_{i}c_i|\rho_i))=\sum_{i}c_i(b|\tT|\rho_i)=(b|\sum_{i}c_i\tT|\rho_i)=0$, and finally
$\sum_{i}c_i\tT|\rho_i)=0$.\\
We want to stress that if two transformations $\tT,\tT^\prime\in\transf{A}{B}$ correspond to the same map $\hat{\tT}$ from $\st{A}$ to $\st{B}$, this does not mean that the two transformations are the same, since as an equivalence class, they must occur with the same joint probability in all possible circuits. In terms of state mappings, the same definition of the transformation as equivalence class corresponds to say that $\tT,\tT^\prime\in\transf{A}{B}$ as maps from states of AR to states of BR are the same for all possible systems R of the theory, namely $\tT=\tT^\prime\in\transf{A}{B}$ if and only if 
\begin{equation}
	\label{eq:equal maps}
	\forall\rR,\;\forall\Psi\in\st{AR}\quad
	\begin{aligned}
		\Qcircuit @C=1em @R=.7em @! R {
			\multiprepareC{1}{\Psi}&
			\qw\poloFantasmaCn{\rA}&
			\gate{\tT}&
			\qw\poloFantasmaCn{\rB}&
			\qw
			\\
			\pureghost{\Psi}&
			\qw\poloFantasmaCn{\rR}&
			\qw&\qw&\qw
		}
	\end{aligned}
	\, = \,
	\begin{aligned}
		\Qcircuit @C=1em @R=.7em @! R {
			\multiprepareC{1}{\Psi}&
			\qw\poloFantasmaCn{\rA}&
			\gate{\tT^\prime}&
			\qw\poloFantasmaCn{\rB}&
			\qw
			\\
			\pureghost{\Psi}&
			\qw\poloFantasmaCn{\rR}&
			\qw&\qw&\qw
		}
	\end{aligned}\,.
\end{equation}
Indeed, there exist cases of OPT where there are transformations $\tT=\tT^\prime\in\transf{A}{B}$ corresponding to the same map when applied to $\st{A}$ and not when applied to $\st{AR}$ for some system R, a relevant example is fermionic theory \cite{fermionic1,fermionic2}.\\
Since we can take linear combinations of linear transformations, $\transf{A}{B}$ can be embedded in the vector space $\transf{A}{B}$. The deterministic transformations, whose set will be denoted as $\mathsf{Transf}_1(\rA\rightarrow\rB)$, will be also called channels.\\
Finally, a transformation $\tU\in\transf{A}{B}$ is \textit{reversible} if there exists another transformation $\tU^{-1}\in\transf{B}{A}$ such that $\tU^{-1}\tU = \tI_\rA$ and $\tU\tU^{-1}=\tI_\rB$. The set of reversible transformations from A to B will be denoted by $\textsf{RevTransf}(\rA\rightarrow\rB)$.

\subsection{Coarse-graining and refinement}

When dealing with probabilistic events, a natural notion is that of \textit{coarse-graining}, corresponding to merging events into a single event. According to probability theory, the probability of a coarse-grained event $\rS\subseteq\rX$ subset of the outcome space X is the sum of probabilities of the elements of S, namely $p(\rS)=\sum_{i\in\rS}p(i)$. We then correspondingly have that the coarse-grained event $\tT_\rS$ of a test $\{\tT_i\}_{i\in\rX}$ will be given by 
\begin{equation}
	\label{eq:coarse-grained event}
	\tT_\rS=\sum_{i\in\rS}\tT_i.
\end{equation}
We stress that the equal sign in Eq.~\eqref{eq:coarse-grained event} is to be meant in the sense of Eq.~\eqref{eq:equal maps}. In addition to the notion of coarse-grained event we have also that of coarse-grained test, corresponding to the collection of a coarse-grained events $\{\tT_{\rX_l}\}_{l\in\rZ}$ from a partition $\rX=\cup_{l\in\rZ}\rX_l$ of the outcome space $\rX$, with $\rX_i\cap\rX_j=\emptyset$ for $i\neq j$.\\
The converse procedure of coarse-graining is what we call \textit{refinement}. If $\tT_\rS$ the coarse-graining in Eq.~\eqref{eq:coarse-grained event}, we call any sum $\sum_{i\in\rS^\prime}\tT_i$ with $\rS^\prime\subseteq\rS$ a refinement of $\tT_\rS$. The same notion can be analogously considered for a test. Intuitively, a test that refines another is a test that extracts more detailed information, namely it is a test with better "resolving power".\\
The notion of refinement is translated to transformations (hence also to states, and effects), as equivalence classes of events. Refinement and coarse-graining define a partial ordering in the set of transformations $\transf{A}{B}$, writing $\tD\prec\tC$ if $\tD$ is a refinement of $\tC$. A transformation $\tC$ is atomic if it has only trivial refinement, namely $\tC_i$ refines $\tC$ implies that $\tC_i=p\tC$ for some probability $p\ge0$. A test that consists of atomic transformations is a test whose "resolving power" cannot be further improved.\\
It is often useful to refer to the set of all possible refinements of a given event $\tC$. This set is called refinement set of the event $\tC\in\transf{A}{B}$, and is denoted by $\textsf{RefSet}(\tC)$.
In formula, $\textsf{RefSet}(\tC):=\{\tD\in\transf{A}{B}|\tD\prec\tC\}.$\\
In the special case of states, we will use the word \textit{pure} as a synonym of atomic. A pure state describes an event providing maximal knowledge about the system's preparation, namely a knowledge that cannot be further refined (we will denote with $\mathsf{PurSt}(\rA)$ the set of pure states of system $\rA$).\\
As usual, a state that is not pure will be called \textit{mixed}. An important notion is that of \textit{internal state}. A state is called internal when any other state can refine it: precisely, $\omega\in\st{A}$ is internal if for every $\rho\in\st{A}$ there is a non-zero probability $p>0$ such that $p\rho$ is a refinement of $\omega$, i.e. $p\rho\in\mathsf{RefSet}(\omega)$. The adjective "internal" has a precise geometric connotation, since the state cannot belong to the border of $\st{A}$. An internal state describes a situation in which there is no definite knowledge about the system preparation, namely a priori we cannot in principle exclude any possible preparation.

\section{Quantum Theory as an OPT}
\label{sec:axioms}
In this section we provide an overview of the six principles used for constructing quantum theory as an OPT. All features of quantum theory - ranging from the superposition principle, entanglement, no cloning, teleportation, Bell's inequalities violation, quantum cryptography - can be understood and
proved using only the principles, without using Hilbert spaces. However, our aim is only to introduce the principle and analyse them from an operative point of view.\\
All the six principles are operational, in that they stipulate whether or not certain tasks can be accomplished: they set the rules of the game for all the experiments and all the protocols that can be carried out in the theory. They also provide a great insight into the worldview at which quantum theory hints.\\
We review the list of the principles:
\begin{enumerate}
	\item Atomicity of composition
	\item Perfect discriminability
	\item Ideal compression
	\item Causality
	\item Local discriminability
	\item Purification
\end{enumerate}
All six principles, with the exception of purification, express standard features that are shared by both classical and quantum theory. The principle of purification picks up uniquely quantum theory among the theories allowed by the first five, partly explaining the magic of quantum information.

\subsection{Atomicity of composition}

In the general framework we encountered the notions of coarse-grained and atomic operation. A coarse-grained operation is obtained by joining together outcomes of a test, corresponding to neglect some information. The inverse process of coarse-graining is that of refining. An atomic operation is one where no information has been neglected, namely an operation that cannot be refined. When the operation is atomic, the experimenter has maximal knowledge of what’s happening in the lab. A test consisting of atomic operations represents the highest level of control achievable according to our theory.\\
The principle of atomicity of composition states that it possible to maintain such a level of control throughout a sequence of experiments, stating precisely what follows:
\begin{axiom}[Atomicity of composition]
	\label{axiom:atomicity}
	The sequence of two atomic operations is an atomic operation.
\end{axiom}
One of the immediate consequences granted by atomicity of composition is the following:
\begin{corollary}[Parallel composition of pure states]
	\label{cor:parallelcompositionofpurestates}
	Given two pure states $\alpha\in\st{A}$ and $\beta\in\st{B}$, the parallel composition of $\alpha$ and $\beta$ is a pure state of $\st{AB}$.
\end{corollary}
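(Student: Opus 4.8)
The plan is to show that the parallel composition $\alpha\otimes\beta$ has only trivial refinements, using Axiom~\ref{axiom:atomicity} together with the fact that preparation events are a special case of transformations (those with trivial input system~$\rI$). First I would recall that a pure state is, by definition, an atomic transformation, and that a preparation event $|\alpha)_\rA$ is precisely a transformation in $\transf{I}{A}$. Since $\alpha$ is pure it is atomic as an element of $\transf{I}{A}$, and likewise $\beta$ is atomic in $\transf{I}{B}$. The parallel composition $\alpha\otimes\beta$ is then a transformation in $\transf{I I}{AB}=\transf{I}{AB}$, i.e.\ a preparation event for the composite system $\rA\rB$.

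Next I would write $\alpha\otimes\beta$ as the sequential composition of two atomic operations. Concretely, using the rule that when one leg of a parallel composition is the identity we may draw a bare wire, one has
\begin{equation*}
	\begin{aligned}
		\Qcircuit @C=1em @R=.7em @! R {
			& \prepareC{\alpha} & \poloFantasmaCn{\rA} \qw & \qw \\
			& \prepareC{\beta} & \poloFantasmaCn{\rB} \qw & \qw
		}
	\end{aligned}
	\ = \
	\begin{aligned}
		\Qcircuit @C=1em @R=.7em @! R {
			& \prepareC{\alpha} & \poloFantasmaCn{\rA} \qw & \gate{\tI_\rA} & \poloFantasmaCn{\rA} \qw & \qw \\
			& & & \prepareC{\beta} & \poloFantasmaCn{\rB} \qw & \qw
		}
	\end{aligned}\,.
\end{equation*}
The right-hand side exhibits $\alpha\otimes\beta$ as the sequential composition of the operation $\alpha\otimes\tI_\rB$ (input $\rB$, output $\rA\rB$) followed by — or rather preceded in circuit order by — an operation that is itself a parallel composition of an atomic preparation with an identity. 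The cleaner route is: by Corollary-style reasoning one first checks that if $\alpha$ is atomic then $\alpha\otimes\tI_\rB$ is atomic (the identity is atomic, being deterministic and irreducible, and here I would invoke that a parallel composite of atomic operations is atomic — but that is essentially what we are proving, so instead I stay one-legged). So the honest plan: write $\alpha\otimes\beta = (\tI_\rA\otimes\beta)\circ(\alpha\otimes\tI_{\rI})=(\tI_\rA\otimes\beta)\circ\alpha$, a sequential composition where the first factor $\alpha\colon\rI\to\rA$ is atomic by hypothesis, and the second factor $\tI_\rA\otimes\beta\colon\rA\to\rA\rB$ must be shown atomic.

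To see $\tI_\rA\otimes\beta$ is atomic: suppose $\tD\prec \tI_\rA\otimes\beta$. Applying the deterministic effect on system $\rA$ (which exists by causality, but to stay within the stated principles I would instead argue directly) one projects the refinement down to a refinement of $\beta$ on system $\rB$; since $\beta$ is atomic this forces the $\rB$-part to be $p\beta$, and a complementary argument on the $\rA$-wire — where the identity admits only trivial refinements — forces $\tD=p\,(\tI_\rA\otimes\beta)$. Then, by Axiom~\ref{axiom:atomicity}, the sequential composition $(\tI_\rA\otimes\beta)\circ\alpha = \alpha\otimes\beta$ of the two atomic operations is atomic, i.e.\ $\alpha\otimes\beta$ is a pure state of $\st{AB}$, which is the claim.

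The main obstacle I anticipate is the lemma that $\tI_\rA\otimes\beta$ (the parallel composite of an identity with an atomic preparation) is atomic: this is where one genuinely needs either the structure of the theory or a short self-contained argument, since naively it looks like the very statement to be proven. The right way around it is to note that $\tI_\rA\otimes\beta$ acts on an arbitrary state $\psi$ of $\rA\rR$ by tensoring in $\beta$ on a fresh factor, so any refinement of it, evaluated against all effects on $\rA\rB\rR$, must factor as $(\text{refinement of }\tI_\rA)\otimes(\text{refinement of }\beta)$ by the separating property of effects; atomicity of $\tI_\rA$ (which is immediate, as the identity is deterministic and any refinement $\tD\prec\tI$ satisfies $\hat{\tD}=p\,\hat{\tI}$) and of $\beta$ then close the argument. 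If the paper prefers, this can be folded into a single invocation of Axiom~\ref{axiom:atomicity} by observing that $\tI_\rA$ and $\beta$ compose \emph{sequentially} after suitable identifications, but the one-legged bookkeeping above is the least error-prone presentation.
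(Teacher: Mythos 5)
The thesis states this corollary without proof (it is offered as an ``immediate consequence'' of Axiom~\ref{axiom:atomicity}), so your proposal can only be measured against the argument it implicitly relies on. Your route has a genuine gap at exactly the point you flag yourself: the lemma that $\tI_\rA\otimes\beta$ is atomic. Neither justification you offer for it survives in the generality of the framework. First, the identity is \emph{not} atomic in a general OPT: in classical theory the identity channel on a bit is the coarse-graining of the measure-and-reprepare events $\{|i)(i|\}_{i}$, which act as the identity on every (necessarily separable) joint state, so $\tI_\rA$ admits nontrivial refinements; consequently $\tI_\rA\otimes\beta=\sum_i |i)(i|\otimes\beta$ is not atomic either, even though $\alpha\otimes\beta$ \emph{is} pure there. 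Since classical theory satisfies Axiom~\ref{axiom:atomicity} (and every axiom except purification), no proof that passes through the atomicity of $\tI_\rA\otimes\beta$ and then invokes atomicity of sequential composition can be correct. Second, the claim that a refinement of $\tI_\rA\otimes\beta$ ``must factor as (refinement of $\tI_\rA$)$\,\otimes\,$(refinement of $\beta$) by the separating property of effects'' does not follow: separation tells you when two transformations are \emph{equal}, not that the summands of a test coarse-graining to a product transformation are themselves products. That factorization is essentially the statement to be proven.

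The ingredient that actually closes the argument is local discriminability, as the thesis itself concedes in Chapter~\ref{chap:no-bc} when it attributes parallel atomicity to local discriminability rather than to Axiom~\ref{axiom:atomicity}. Concretely: let $\{\sigma_i\}_{i\in\rX}$ be a preparation test with $\sum_i\sigma_i=\alpha\otimes\beta$. For every $b\in\eff{B}$ the conditioned events $(b|_\rB|\sigma_i)$ form part of a preparation test of $\rA$ summing to $(b|\beta)\,|\alpha)$; each is therefore a refinement of a multiple of the atomic state $\alpha$, hence of the form $c_i(b)\,|\alpha)$ with $c_i$ linear in $b$, i.e.\ $c_i(b)=(b|\gamma_i)$ for some $\gamma_i\in\st[R]{B}$. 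Since product effects separate $\st{AB}$, this gives $\sigma_i=\alpha\otimes\gamma_i$; evaluating against a separating set of effects on $\rA$ yields $\sum_i\gamma_i=\beta$ with each $\gamma_i\prec\beta$, so $\gamma_i=p_i\beta$ by atomicity of $\beta$, and finally $\sigma_i=p_i\,(\alpha\otimes\beta)$. If you insist on your two-step decomposition instead, the lemma you need is ``$\tC$ atomic implies $\tC\otimes\tI$ atomic'', and that is precisely the direction that fails classically; only the converse implication is immediate.
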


\subsection{Perfect discriminability}

Two deterministic states $\rho_0$ and $\rho_1$ are perfectly discriminable if there exists a measurement $\{m_y\}_{y\in\{0,1\}}$ such that $$(m_y|\rho_x)=\delta_{xy}\quad\forall x,y\in\{0,1\}.$$
The existence of perfectly discriminable states is important, because these states can be used to communicate classical information without errors. In a communication protocol, the sender can encode the value of a bit $x$ into the state $\rho_x$ and then transmit the system to the receiver, who can decode the value of the bit using the measurement $\{m_y\}_{y\in\{0,1\}}$.\\
The perfect discriminability axiom ensures that our ability to discriminate states is as sharp as it could possibly be: except for trivial cases, every state can be perfectly discriminated from some other state. The "trivial cases" are those states that cannot be discriminated from anything else because they contain every other state in their convex decomposition. We can call them internal, or completely mixed.
\begin{axiom}[Perfect discriminability]
	Every deterministic state that is not completely mixed is perfectly discriminable from some other state.
\end{axiom}
As anticipated, the perfect discriminability axiom guarantees that every non-trivial
system has at least two perfectly discriminable states:
\begin{prop}
	In a theory satisfying perfect discriminability, every physical system has
	at least two perfectly discriminable states, unless the system is trivial (i.e. it has only one
	deterministic state).
\end{prop}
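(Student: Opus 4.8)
The plan is to establish the contrapositive: if the system $\rA$ is non-trivial, meaning $\mathsf{St}_1(\rA)$ contains two distinct deterministic states, then $\rA$ carries a deterministic state that is \emph{not} completely mixed, whereupon the Perfect Discriminability axiom directly produces a second state perfectly discriminable from it. The converse (the ``unless'' clause) is immediate: a trivial system has a unique deterministic state, so it cannot even exhibit two distinct deterministic states, let alone a perfectly discriminable pair.

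The core of the argument is to locate a deterministic pure state. Since $\mathsf{St}_1(\rA)$ is a compact convex subset of the finite-dimensional real space $\mathsf{St}_{\mathbb{R}}(\rA)$, it has extreme points; and every extreme point $\omega$ of $\mathsf{St}_1(\rA)$ is atomic, hence pure: if $\tau\in\textsf{RefSet}(\omega)$, write $\omega=\tau+\tau'$, normalise the two summands to deterministic states $\rho,\sigma$ with weights $p$ and $1-p$, and invoke extremality to get $\rho=\sigma=\omega$, so $\tau\propto\omega$. Fix such a deterministic pure state $\omega$. I claim it is not completely mixed. Indeed, if $\omega$ were internal, then for every deterministic state $\rho$ there would be $p>0$ with $p\rho\in\textsf{RefSet}(\omega)$; atomicity of $\omega$ forces $p\rho=p'\omega$ for some $p'\ge 0$, and since $\rho$ and $\omega$ are both deterministic (in particular normalised and nonzero) this yields $\rho=\omega$. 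As $\rho$ was arbitrary, $\mathsf{St}_1(\rA)=\{\omega\}$, contradicting non-triviality.

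Having secured a deterministic state $\omega$ that is not completely mixed, the Perfect Discriminability axiom provides another state $\omega'$ and a binary measurement $\{m_0,m_1\}$ with $(m_y|\omega_x)=\delta_{xy}$, where $\omega_0:=\omega$ and $\omega_1:=\omega'$; note $\omega'\neq\omega$ since $(m_0|\omega)=1\neq 0=(m_0|\omega')$. Thus $\{\omega,\omega'\}$ are two perfectly discriminable states of $\rA$, which is exactly the claim.

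The delicate step is the convex-geometric input of the middle paragraph: one needs that $\mathsf{St}_1(\rA)$ is convex and compact (closed and bounded in finite dimensions) so that extreme points exist, and one needs the normalisation structure — equivalently, that $\mathsf{St}_1(\rA)$ is a base of the cone of sub-normalised states — both to split a refinement of $\omega$ into weighted deterministic states and to pass from $p\rho=p'\omega$ to $\rho=\omega$. These are standard features of the OPT framework (the latter being automatic once causality is assumed, via the unique deterministic effect), but they deserve an explicit word. Everything downstream — extracting the measurement and checking $\omega'\neq\omega$ — is routine.
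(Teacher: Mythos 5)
Your proof is correct and follows essentially the same route as the paper's: both arguments pick a deterministic pure state, observe that if it is internal then atomicity forces every deterministic state to coincide with it (so the system is trivial), and otherwise invoke the perfect discriminability axiom directly. The only difference is presentational — you argue by contrapositive and spell out the existence of an extreme point and the normalisation step, which the paper leaves implicit.
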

\begin{proof}
	Pick a pure state $\alpha\in\st{A}$. If $\alpha$ is not internal, then perfect discriminability guarantees that $\alpha$ is perfectly discriminable from some other state $\alpha^\prime$, hence A has two perfectly discriminable states. If $\alpha$ is internal every pure state belongs to its refinement set.	Moreover, since it is also pure, i.e. extremal, one has that every other deterministic state	$\rho_1\in\mathsf{St}_1(\rA)$ must be equal to $\alpha$, i.e. A has only one deterministic state.
\end{proof}
An easy consequence of this result is that the theory can describe noiseless classical communication.

\subsection{Ideal compression}

Ideal compression garantes that information can be transferred faithfully from one system to another. Namely, suppose that Alice has a preparation device, which prepares system A in some state $\alpha$. Alice does not know the state $\alpha$, but she knows that on average the device prepares the deterministic state $\rho\in\mathsf{St}_1(\rA)$. Now, suppose Alice wants to transfer the state of her system to Bob's laboratory, but unfortunately she cannot send system A directly. Instead, she has to encode the state $\alpha$ into the state of another system B, by applying a suitable deterministic operation $\tE$ (the \textit{encoding}), which transforms the state $\alpha$ into the state $$\beta:=\tE\alpha.$$
We say that the encoding is \textit{lossless} for the state $\rho$ iff there exists another deterministic operation $\tD$ (the \textit{decoding}) such that $$\tD\tE\alpha=\alpha\quad\forall\alpha\in F_\rho$$ where $F_\rho$ is the refinement set of $\rho$, which is made of the set of all states $\alpha$ that are compatible with $\rho$ (on the convex set of states this would be the face to which $\rho$ belongs).\\
This third axiom establishes the possibility of a particular type of lossless encoding, called \textit{ideal compression}. The ultimate limit to the lossless compression of a given state $\rho$ is reached when every state of the encoding system B is a codeword for some state in $F_\rho$, namely
when every state $\beta\in\st{B}$ is of the form $\tE\alpha$ for some $\alpha\in F_\rho$. When this is the case, we say that the compression is \textit{efficient}, and we call the triple $(\rB,\tE,\tD)$ an \textit{ideal compression protocol}.

\begin{axiom}[Ideal Compression]
	Every state can be compressed in a lossless and efficient way.
\end{axiom}

\subsection{Causality}

The causality axiom identifies the input–output ordering of a circuit with the direction along which information flows, identifying such ordering with a proper-time arrow, corresponding to the request that future choices cannot influence the present.
\begin{axiom}[Causality]
	\label{axiom:causality}
	The probability of the outcome of a preparation test is independent of the choice of observation tests connected at its output.
\end{axiom}
To better understand the statement it is useful to consider the joint test consisting of a preparation test $\tX=\{\rho_i\}_{i\in\rX}\subset\st{A}$ followed by the observation test $\tY=\{a_j\}_{j\in\rY}\subset\eff{A}$ performed on system A:
\begin{equation*}
	\begin{aligned}
		\Qcircuit @C=1em @R=.7em @! R { 
			\prepareC{\tX}&\qw&
			\qw\poloFantasmaCn{\rA}&\qw&
			\measureD{\tY}
		}\,.
	\end{aligned}
\end{equation*}
The joint probability of preparation $\rho_i$ and observation $a_j$ is given by
\begin{equation*}
	\begin{aligned}
		p(i,j|\tX,\tY)\coloneqq(a_j|\rho_i)\equiv
		\Qcircuit @C=1em @R=.7em @! R { 
			\prepareC{\rho_i}&\qw&
			\qw\poloFantasmaCn{\rA}&\qw&
			\measureD{a_j}
		}\,.
	\end{aligned}
\end{equation*}
The marginal probability of the preparation alone does not depend on the outcome $j$. Yet, it generally depends on which observation test $\tY$ is performed, namely $$\sum_{a_j\in\rY}(a_j|\rho_i)\eqqcolon p(i|\tX,\tY)\,.$$
The marginal probability of preparation $\rho_i$ is then generally conditioned on the choice of the observation test $\tY$. What the causality axiom states is that $p(i|\tX,\tY)$ is actually independent of $\tY$, namely for any two different observation tests $\tY=\{a_j\}_{j\in\rY}$ and $\tZ=\{b_k\}_{k\in\rZ}$ one has $$p(i|\tX,\tY)=p(i|\tX,\tZ)=p(i|\tX)\,.$$

In a causal OPT the choice of a test on a system can be conditioned on the outcomes of a preceding test, since causality guarantees that the probability distribution of the preceding test is independent of the choice of the following test. This leads us to introduce the notion of \textit{conditioned test}.
\begin{defn}[Conditioned test]
	If $\{\tA_i\}_{i\in\rX}$ is a test from $\rA$ to $\rB$, and $\{\tB_j^{(i)}\}_{j\in\rY_i}$ is a test from $\rB$ to $\rC$ for every $i\in\rX$, then the conditioned test is a test from $\rA$ to $\rC$, with outcomes $(i,j)\in\rZ:=\cup_i\{\{i\}\times\rY_i\}$, and events $\{\tB_j^{(i)}\circ\tA_i\}_{(i,j)\in\rZ}$. Diagrammatically, the events $\tB_j^{(i)}\circ\tA_i$ are represented as follows:
	\begin{equation*}
		\begin{aligned}
			\Qcircuit @C=1em @R=.7em @! R { &
				\qw\poloFantasmaCn{\rA}&
				\gate{\tA_i}&
				\qw\poloFantasmaCn{\rB}&
				\gate{\tB_j^{(i)}}&
				\poloFantasmaCn{\rC}\qw&
				\qw
			}\,.
		\end{aligned}
	\end{equation*}
\end{defn}
Among conditioned test, a special role is played by the \textit{observe-and-prepare test},where the "connecting" system is the null system I. They are thus made of a preparation test conditioned by an observation test, as follows:
\begin{equation*}
	\begin{aligned}
		\Qcircuit @C=.5em @R=.7em @! R { &
			\qw\poloFantasmaCn{\rA}&
			\measureD{l_i}&
			&\prepareC{\omega^{(i)}}&
			\poloFantasmaCn{\rC}\qw&
			\qw
		}\,,
	\end{aligned}
\end{equation*}
which can be also represented as $\{|\omega^{(i)})(l_i|\}_{i\in\rX}$.\\

Another remarkable way to characterize causal theories is to require the unicity of the deterministic effect, the equivalence of this formulation with Axiom \ref{axiom:causality} is given by the following lemma.
\begin{lemma}
	\label{lemma:unique deterministic effect}
	An OPT is causal if and only if for every system $\rA$ there is a unique deterministic effect.
\end{lemma}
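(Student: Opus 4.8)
The plan is to prove the two implications separately. In both directions I will use the fact, recalled above, that an effect is a functional on the states of its system and that states are separating for effects, so that two effects $e,e'\in\eff{A}$ coincide if and only if $(e|\rho)=(e'|\rho)$ for every $\rho\in\st{A}$. I will also use two elementary observations: every state of $\rA$ occurs as an outcome of some preparation test (by the very definition of state as an equivalence class of preparation events), and the coarse-graining $\sum_{j}a_j$ of all the outcomes of an observation test $\{a_j\}_j$ is a \emph{deterministic} effect (coarse-graining turns the observation test into a one-outcome observation test, and composing it with any deterministic preparation gives a test from $\rI$ to $\rI$, whose unique probability must be $1$).

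For the implication ``causal $\Rightarrow$ unique deterministic effect'', I would fix an arbitrary system $\rA$ and two deterministic effects $e,e'\in\mathsf{Eff}_1(\rA)$, and observe that $\{e\}$ and $\{e'\}$ are one-outcome observation tests on $\rA$. Given any preparation test $\tX=\{\rho_i\}_{i\in\rX}$, the causality axiom applied with these two observation tests yields $p(i|\tX,\{e\})=p(i|\tX)=p(i|\tX,\{e'\})$; since $p(i|\tX,\{e\})=(e|\rho_i)$ and $p(i|\tX,\{e'\})=(e'|\rho_i)$, this gives $(e|\rho_i)=(e'|\rho_i)$ for every $i$. Letting $\tX$ range over all preparation tests covers all states of $\rA$, hence $e=e'$, and $\rA$ was arbitrary.

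For the converse, I would assume that each system $\rA$ carries a unique deterministic effect, call it $e_\rA$, and show directly that the probability of a preparation outcome is insensitive to the observation test. Given a preparation test $\tX=\{\rho_i\}_{i\in\rX}$ and an observation test $\tY=\{a_j\}_{j\in\rY}$ on $\rA$, the coarse-graining $\sum_{j\in\rY}a_j$ is a deterministic effect, hence equal to $e_\rA$ by uniqueness, independently of $\tY$; therefore
\begin{equation*}
	p(i|\tX,\tY)=\sum_{j\in\rY}(a_j|\rho_i)=(e_\rA|\rho_i),
\end{equation*}
which does not depend on $\tY$. Since this argument is insensitive to whether $\rA$ is elementary or composite, the statement of Axiom~\ref{axiom:causality} follows.

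I expect the only delicate points to be bookkeeping rather than substance: verifying that $\sum_j a_j$ is genuinely \emph{deterministic} (this is where the normalization built into the notion of ``test'' is really used) and, in the first direction, being careful that equality of effects may be decided on states of the single system $\rA$ — which is legitimate for effects even in theories without local discriminability, although, as stressed in the text, it would not be legitimate for general transformations. Everything else is a direct unpacking of the definition of $p(i|\tX,\tY)$ and of the causality axiom.
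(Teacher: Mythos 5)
Your proof is correct and takes essentially the same route as the paper's: both directions rest on the facts that states are separating for effects and that the coarse-graining of all outcomes of an observation test is a deterministic effect. The only cosmetic difference is that in the forward direction you compare two arbitrary deterministic effects directly as one-outcome observation tests, while the paper compares the coarse-grainings of two arbitrary observation tests; these are interchangeable since every deterministic effect is itself a one-outcome observation test.
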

\begin{proof}
	We will prove the two directions separately, namely: (1) if the probability of preparation of states is independent of the observation test, then the deterministic effect is unique; (2) vice versa. (1) The probability of the preparation $\rho$ is given by the marginal of the joint probability with the observation, namely $p(\rho)=\sum_{i\in\rX}(a_i|\rho)$. Upon denoting the deterministic effects of two different tests as $a=\sum_{i\in\rX}a_i$ and $b=\sum_{j\in\rY}b_j$, the statement that the preparation probability is independent of the observation tests translates to $(a|\rho) = (b|\rho)$ for every preparation $\rho\in\st{A}$, which implies that $a=b$, since the set of states is separating for events. (2) Uniqueness of the deterministic effect implies that the
	preparation probability of each state is independent of the test, since the effect $a=\sum_{i\in\rX}a_i$ for any test $\{a_i\}_{i\in\rX}$ is deterministic, and $(a|\rho)$ for any deterministic effect $a\in\eff{a}$ is the probability of preparation $\rho$.
\end{proof}

We will denote the unique deterministic effect for system A as $e_\rA$, and the subindex will be dropped when no confusion can arise.\\
In the following we will use the notation $\le$ to denote the partial ordering between effects, defined as follows: $$a,b\in\eff{A},\,a\le b\quad\Leftrightarrow\quad(a|\rho)\le(b|\rho),\,\forall\rho\in\st{A}\,.$$
It is immediate to show that the causality condition of Lemma \ref{lemma:unique deterministic effect} spawns the following lemmas.
\begin{lemma}
	Causality is equivalent to the following statements regarding tests:
	\begin{enumerate}
		\item \textbf{Completeness of observation tests:} For any system $\rA$ and for every observation test $\{a_i\}_{i\in\rX}$ one has $$\sum_{i\in\rX}a_i=e_\rA\,.$$
		\item \textbf{Completeness of tests:} For any systems $\rA$, $\rB$ and for every test $\{\tC_i\}_{i\in\rX}$ from $\rA$ to $\rB$ one has $$\sum_{i\in\rX}(e_\rB|\tC_i=(e_\rA\,.$$
		\item \textbf{Domination of transformations:} For any systems $\rA$, $\rB$ a transformation $\tC\in\transf{A}{B}$ satisfies the condition $$(e_\rB|\tC\le(e_\rA\,,$$ with the equality if and only if $\tC$ is a channel, i.e. a deterministic transformation corresponding to a single-outcome test.
		\item \textbf{Domination of effects:} For any system $\rA$ all effects are dominated by a unique effect $e_\rA$ which is deterministic$$\forall a\in\eff{A},\quad0\le a\le e_\rA\,.$$
	\end{enumerate}
\end{lemma}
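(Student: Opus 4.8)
The plan is to route everything through Lemma~\ref{lemma:unique deterministic effect}, which already identifies causality with the existence, for every system $\rA$, of a \emph{unique} deterministic effect $e_\rA$. So I assume causality, let $e_\rA$ denote this effect, and derive the four statements; for the converses I observe that each of (1)--(4) by itself forces the deterministic effect of every system to be unique, whence causality via Lemma~\ref{lemma:unique deterministic effect}. I would prove the forward directions in the order $(1)$, then $(2)$ from $(1)$, then $(4)$ from $(1)$, and finally $(3)$ using both $(2)$ and $(4)$. The clause ``with equality iff $\tC$ is a channel'' in $(3)$ is the only step that is not a one-liner, which is why I keep it for last.

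For $(1)$: coarse-graining an observation test $\{a_i\}_{i\in\rX}$ over its whole outcome space gives the single-outcome observation test $\{\sum_{i\in\rX}a_i\}$, so $\sum_i a_i$ is a deterministic effect and hence equals $e_\rA$ by uniqueness; conversely, reading the statement as ``every observation test sums to some fixed effect $e_\rA$'', any deterministic effect $e$ is the event of the single-outcome test $\{e\}$, so $e=e_\rA$, i.e.\ the deterministic effect is unique. For $(2)$: if $\{\tC_i\}_{i\in\rX}$ is a test from $\rA$ to $\rB$, sequentially composing it with the single-outcome observation test $\{e_\rB\}$ yields the observation test $\{(e_\rB|\tC_i\}_{i\in\rX}$ on $\rA$, and $(1)$ gives $\sum_i (e_\rB|\tC_i=(e_\rA|$; the special case $\rB=\rI$, where $(e_\rI|\tC_i$ is just $\tC_i$ regarded as an effect of $\rA$, recovers $(1)$, so $(1)\Leftrightarrow(2)$.

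For $(4)$: by the very definition of effect, an arbitrary $a\in\eff{A}$ is an event $a_{i_0}$ of some observation test $\{a_i\}_{i\in\rX}$, so by $(1)$ we have $e_\rA-a=\sum_{i\neq i_0}a_i$, a positive functional on $\st{A}$, giving $a\le e_\rA$; together with $0\le a$ (positivity of effects) this is the claim, and conversely the statement of $(4)$ literally asserts uniqueness of the dominating deterministic effect, hence causality. For $(3)$: every $\tC\in\transf{A}{B}$ occurs as an event $\tC_{i_0}$ in some test $\{\tC_i\}_{i\in\rX}$, and $(2)$ gives $(e_\rA|-(e_\rB|\tC=\sum_{i\neq i_0}(e_\rB|\tC_i$, again a positive functional on $\st{A}$, so $(e_\rB|\tC\le(e_\rA|$; if $\tC$ is a channel then $\{\tC\}$ is already a single-outcome test and $(2)$ yields equality.

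The delicate point is the remaining implication in $(3)$: $(e_\rB|\tC=(e_\rA|$ should force $\tC$ to be deterministic. Write $\bar\tC:=\sum_{i\neq i_0}\tC_i$ for the coarse-grained complement of $\tC$ in a test containing it; then $(2)$ and the hypothesis give $(e_\rB|\bar\tC=(e_\rA|-(e_\rB|\tC=(\,0\,|$. Now for any system $\rR$, any $\Psi\in\st{AR}$ and any effect $c\in\eff{BR}$, domination of effects on $\rB\rR$ (item $(4)$) together with $e_{\rB\rR}=e_\rB\otimes e_\rR$ (itself a consequence of uniqueness) yields $0\le(c|(\bar\tC\otimes\tI_\rR)\Psi)\le(e_{\rB\rR}|(\bar\tC\otimes\tI_\rR)\Psi)=\big((e_\rB|\bar\tC)\otimes(e_\rR|\big)\Psi=0$; since effects are separating for states, $(\bar\tC\otimes\tI_\rR)\Psi=0$ for all $\Psi$ and all $\rR$, i.e.\ $\bar\tC=0$ in the sense of Eq.~\eqref{eq:equal maps}. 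Hence the test containing $\tC$ collapses to the single-outcome test $\{\tC\}$, so $\tC$ is a channel. I expect this last manoeuvre --- upgrading ``the complementary event is the zero transformation'' to ``$\tC$ alone is a legitimate deterministic test'' --- to be the main obstacle, and it is exactly where item $(4)$ and the separating property of states are used.
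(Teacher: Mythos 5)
Your proof is correct. The paper itself gives no argument for this lemma --- it is dismissed with ``It is immediate to show that the causality condition of Lemma~\ref{lemma:unique deterministic effect} spawns the following lemmas'' --- so there is no route to compare against; your strategy of reducing everything to the uniqueness of the deterministic effect is exactly the intended one. You correctly identify the only genuinely non-immediate step, namely the ``equality iff channel'' clause of item (3), and your treatment of it is sound: from $(e_\rB|\bar\tC=0$ you squeeze $(c|(\bar\tC\otimes\tI_\rR)|\Psi)$ between $0$ and $(e_{\rB\rR}|(\bar\tC\otimes\tI_\rR)|\Psi)=0$ using item (4) on the composite system and $e_{\rB\rR}=e_\rB\otimes e_\rR$, conclude $\bar\tC=0$ in the sense of Eq.~\eqref{eq:equal maps} because effects separate states, and hence $\tC=\tC+\bar\tC=\sum_i\tC_i$ is the total coarse-graining of a test, i.e.\ a channel. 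The only cosmetic caveat is in the converse of item (4), where you read the uniqueness clause as asserting uniqueness of the deterministic effect tout court; that is the intended parsing of the (informally worded) statement, and under it the implication is indeed trivial.
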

An immediate consequence of uniqueness of the deterministic effect is the identification of all transformations of the form $$\forall\tB\in\transf{B}{C},\quad\sum_{i\in\rX}a_i\otimes\tB=e_\rA\otimes\tB,$$for any observation test $\{a_i\}_{i\in\rX}$ of system A. In particular, we have the factorization of the deterministic effect of composite systems $$e_{\rA\rB}=e_\rA\otimes e_\rB.$$

The uniqueness of the deterministic effect naturally leads to the relevant notion of \textit{marginal state} or also called \textit{local state}.
\begin{defn}[Marginal state]
	The marginal state of $|\sigma)_{\rA\rB}$ on system $\rA$ is the state $$|\rho)_\rA:=(e|_\rB|\sigma)_{\rA\rB}$$represented by the diagram
	\begin{equation*}
		\begin{aligned}
			\Qcircuit @C=1.2em @R=.7em @! R { 
				\multiprepareC{1}{\sigma_{\rA\rB}}&
				\poloFantasmaCn{\rA}\qw&
				\qw
				\\
				\pureghost{\sigma_{\rA\rB}}&
				\poloFantasmaCn{\rB}\qw&
				\measureD{e_\rB}
				\\
			}
		\end{aligned}\, \eqqcolon \,
		\begin{aligned}
			\Qcircuit @C=1.2em @R=.7em @! R { 
				\prepareC{\rho_{\rA}}&
				\poloFantasmaCn{\rA}\qw&
				\qw
			}\,.
		\end{aligned}
	\end{equation*}
\end{defn}

Finally, the last implication of causality we would outline is the impossibility of signaling without interaction, i.e. by just performing local tests.
\begin{theorem}[No signaling without interaction]
	In a causal OPT it is impossible to send signals by performing only local tests.
\end{theorem}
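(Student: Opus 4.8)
The plan is to formalize what ``sending signals by local tests'' means and show causality forbids it. Consider two separated parties, Alice holding system $\rA$ and Bob holding system $\rB$, sharing a bipartite state $|\sigma)_{\rA\rB}\in\st{AB}$. Alice performs a local instrument $\{\tA_i\}_{i\in\rX}$ on $\rA$ (output system $\rA'$, say) and Bob performs a local observation test $\{b_j\}_{j\in\rY}$ on $\rB$. Since there is no system connecting the two labs, the relevant circuit is
\begin{equation*}
	\begin{aligned}
		\Qcircuit @C=1.2em @R=.7em @! R {
			\multiprepareC{1}{\sigma_{\rA\rB}}&
			\poloFantasmaCn{\rA}\qw&
			\gate{\tA_i}&
			\poloFantasmaCn{\rA'}\qw&
			\measureD{c}
			\\
			\pureghost{\sigma_{\rA\rB}}&
			\poloFantasmaCn{\rB}\qw&
			\measureD{b_j}&&
			\\
		}
	\end{aligned}
\end{equation*}
and the joint probability is $p(i,j|\tA,\tB,c)=(c|_{\rA'}\otimes(b_j|_{\rB}\,(\tA_i\otimes\tI_\rB)|\sigma)_{\rA\rB}$, where I have closed Alice's wire with the deterministic effect $c=e_{\rA'}$ (she reports only the outcome $i$, not a further measurement). ``Bob cannot receive a signal'' should mean: the marginal distribution of Bob's outcome $j$ does not depend on which instrument $\{\tA_i\}$ Alice chose; ``Alice cannot receive a signal'' is the symmetric statement.

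First I would compute Bob's marginal by coarse-graining over Alice's outcome $i$. Summing over $i\in\rX$ and using the \textbf{Completeness of tests} item of the preceding lemma, $\sum_{i\in\rX}(e_{\rA'}|\tA_i=(e_\rA|$, so the Alice-side piece collapses to the deterministic effect $e_\rA$ regardless of the instrument. Concretely,
\begin{equation*}
	\sum_{i\in\rX}p(i,j|\tA,\tB,e_{\rA'})
	=(e_\rA|_{\rA}\otimes(b_j|_{\rB}\,|\sigma)_{\rA\rB}
	=(b_j|_{\rB}\,|\rho)_{\rB}\,,
\end{equation*}
where $|\rho)_\rB:=(e_\rA|_\rA|\sigma)_{\rA\rB}$ is Bob's marginal state (Definition of marginal state). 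The right-hand side is manifestly independent of Alice's choice of instrument $\{\tA_i\}_{i\in\rX}$ — it refers only to $\sigma$ and to Bob's test. Hence Bob's outcome statistics are the same no matter what Alice does; by symmetry (closing Bob's wire with $e_\rB$ and using completeness of tests on Bob's side) Alice's outcome statistics are independent of Bob's test. Therefore no party can decode any information about the other's choice of local test, which is precisely the impossibility of signaling without interaction.

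I would then remark on generality: the argument is unchanged if either local test is itself a conditioned/observe-and-prepare test or produces ancillary outputs, as long as Alice's and Bob's systems remain unconnected by a directed path — one simply closes all of one party's output wires with the (factorizing) deterministic effect and invokes completeness of tests, which holds for arbitrary tests. It also extends immediately from two parties to any number, by closing the wires of all parties except the one whose marginal is being examined. The main subtlety — the only place where more than bookkeeping is needed — is to argue that closing the reporting party's output with the deterministic effect is legitimate, i.e.\ that ``performing a local test and merely recording its classical outcome $i$'' is faithfully modelled by $(e_{\rA'}|\tA_i$; this is exactly the content of the \textbf{Completeness of tests} identity and of the factorization $e_{\rA\rB}=e_\rA\otimes e_\rB$, both consequences of causality established just above, so there is no real obstacle once the setup is pinned down correctly. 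The proof is thus essentially a direct application of causality in the form ``unique deterministic effect, factorizing over composites.''
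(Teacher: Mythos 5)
Your proposal is correct and follows essentially the same route as the paper: write the joint probability of the two local tests on the shared state, coarse-grain over one party's outcome, and use the causality-derived completeness identity $\sum_i(e|\tA_i=(e|$ to collapse that party's side to the deterministic effect, so the other party's marginal depends only on the reduced state. The only cosmetic difference is that you phrase one side as an observation test and close the other's output wire explicitly with $e_{\rA'}$, whereas the paper applies $(e|_{\rA\rB}$ to both outputs of general local tests; the substance is identical.
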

\begin{proof}
	Suppose the general situation in which two "distant" parties Alice and Bob share a bipartite state $|\Psi)_{\rA\rB}$ of systems A and B. Alice performs her local test $\{\tA_i\}_{i\in\rX}$ on system A and similarly Bob performs his local test $\{\tB_j\}_{j\in\rY}$ on system B. The joint probability of their outcomes is $$p_{ij}=(e|_{\rA\rB}(\tA_i\otimes\tB_j)|\Psi)_{\rA\rB}.$$The marginal probabilities $p_i^\rA$ at Alice and $p_j^\rB$ at Bob are given by $$p_i^\rA\coloneqq\sum_{j}p_{ij},\quad p_j^\rB\coloneqq\sum_{i}p_{ij}.$$Alice's marginal does not depend on the choice of test $\{\tB_j\}$ of Bob, since
	\begin{equation*}
		\begin{aligned}
			p_i^\rA=&\sum_{j}(e|_{\rA}(e|_{\rB}(\tA_i\otimes\tB_j)|\Psi)_{\rA\rB}=(e|_{\rA}\left(\tA_i\otimes\left[\sum_{j}(e|_{\rB}\tB_j\right]\right)|\Psi)_{\rA\rB}\\
			&(e|_{\rA}\tA_i|\rho)_{\rA},\quad\quad|\rho)_\rA\coloneqq(e|_\rB|\Psi)_{\rA\rB},
		\end{aligned}
	\end{equation*}
	where we used Eq.~\eqref{c:parallel composition} and the normalization condition $\sum_{j}(e|_{\rB}\tB_j=(e|_{\rB}$. The same argument holds for Bob’s marginal.
	
\end{proof}

\subsection{Local discriminability}

Now we introduce the principle of local discriminability, which stipulates the possibility of discriminating states of composite systems via local measurements on the component systems.
\begin{axiom}[Local discriminability]
	It is possible to discriminate any pair of states of composite systems using only local measurements.
\end{axiom}
Mathematically the axiom asserts that for every two joint states $\rho,\sigma\in\st{AB}$, with $\rho\ne\sigma$, there exist effects $a\in\eff{A}$ and $b\in\eff{B}$ such that the joint probabilities for the two states are different, namely, in circuits
\begin{equation}
	\label{eq:loc-discr}
	\begin{aligned}
		\Qcircuit @C=1.2em @R=.7em @! R { 
			\multiprepareC{1}{\rho}&
			\poloFantasmaCn{\rA}\qw&
			\qw
			\\
			\pureghost{\rho}&
			\poloFantasmaCn{\rB}\qw&
			\qw
			\\
		}
	\end{aligned}
	\,\ne\,
	\begin{aligned}
		\Qcircuit @C=1.2em @R=.7em @! R { 
			\multiprepareC{1}{\sigma}&
			\poloFantasmaCn{\rA}\qw&
			\qw
			\\
			\pureghost{\sigma}&
			\poloFantasmaCn{\rB}\qw&
			\qw
			\\
		}
	\end{aligned}
	\, \Longrightarrow \, 
	\begin{aligned}	
		\Qcircuit @C=1.2em @R=.7em @! R { 
			\multiprepareC{1}{\rho}&		
			\poloFantasmaCn{\rA}\qw&
			\measureD{a}
			\\
			\pureghost{\rho}&
			\poloFantasmaCn{\rB}\qw&
			\measureD{b}
			\\
		}
	\end{aligned}
	\,\ne\,
	\begin{aligned}	
		\Qcircuit @C=1.2em @R=.7em @! R { 
			\multiprepareC{1}{\sigma}&		
			\poloFantasmaCn{\rA}\qw&
			\measureD{a}
			\\
			\pureghost{\sigma}&
			\poloFantasmaCn{\rB}\qw&
			\measureD{b}
			\\
		}
	\end{aligned}\,.
\end{equation}
We can now prove one of the main theorem following from the principle of local discriminability.
\begin{theorem}[Product rule for composite system]
	\label{theorem:product-rule-for-composite-systems}
	A theory satisfies local discriminability if and only if, for every composite system $\rA\rB$, one has
	\begin{equation}
		\label{eq:product-rule-for-composite-systems}
		D_{\rA\rB}=D_\rA D_\rB\,.
	\end{equation}
\end{theorem}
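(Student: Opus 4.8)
The plan is to turn the statement into a dimension count by exhibiting an explicit family of product states in $\st[R]{AB}$ and reading off local discriminability from whether the dual family of \emph{product effects} is separating. First I would fix a basis $\{\alpha_i\}_{i=1}^{D_\rA}$ of $\st[R]{A}$ consisting of genuine states (possible since $\st[R]{A}=\mathsf{Span}_\mathbb{R}[\st{A}]$) and similarly a basis $\{\beta_j\}_{j=1}^{D_\rB}$ of $\st[R]{B}$ made of states. Because $\st[R]{A}$ and $\eff[R]{A}$ are dual, there are dual bases $\{a^i\}\subseteq\eff[R]{A}$, $\{b^j\}\subseteq\eff[R]{B}$ with $(a^i|\alpha_k)=\delta_{ik}$ and $(b^j|\beta_l)=\delta_{jl}$. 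Writing $a^i$ and $b^j$ as real combinations of honest effects shows that each $a^i\otimes b^j$ is a real combination of product effects $a\otimes b$ with $a\in\eff{A}$, $b\in\eff{B}$, hence lies in $\eff[R]{AB}$; and since compound events from the trivial system multiply their probabilities, $(a^i\otimes b^j\,|\,\alpha_k\otimes\beta_l)=(a^i|\alpha_k)(b^j|\beta_l)=\delta_{ik}\delta_{jl}$. This makes the $D_\rA D_\rB$ product states $\{\alpha_i\otimes\beta_j\}$ linearly independent in $\st[R]{AB}$, which already gives the inequality $D_{\rA\rB}\ge D_\rA D_\rB$ in any OPT.

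For the direction ``product rule $\Rightarrow$ local discriminability'': if $D_{\rA\rB}=D_\rA D_\rB$, the family $\{\alpha_i\otimes\beta_j\}$ is a basis of $\st[R]{AB}$, so two distinct states $\rho\ne\sigma\in\st{AB}$ have different coordinate vectors; choosing a coordinate $(k,l)$ where they differ gives $(a^k\otimes b^l|\rho)\ne(a^k\otimes b^l|\sigma)$. Since $a^k\otimes b^l$ is a linear combination of actual product effects, at least one such product effect already distinguishes $\rho$ from $\sigma$, which is exactly local discriminability. This direction is pure linear algebra once the product-state basis is available.

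For the converse I would argue by a dimension count plus a separation step. Suppose $D_{\rA\rB}>D_\rA D_\rB$ and consider the linear map $L\colon\st[R]{AB}\to\mathbb{R}^{D_\rA D_\rB}$, $L(\omega)_{kl}:=(a^k\otimes b^l|\omega)$; its rank is at most $D_\rA D_\rB<\dim\st[R]{AB}$, so it has a nonzero kernel vector $\omega_0$. Expanding every product effect in the basis $\{a^k\otimes b^l\}$ shows $(a\otimes b|\omega_0)=0$ for all $a\in\eff{A}$, $b\in\eff{B}$. Writing $\omega_0=\sum_m c_m\rho_m$ with $\rho_m\in\st{AB}$ and, after rescaling $\omega_0$ by a small positive constant, setting $\rho_+:=\sum_{m:\,c_m>0}c_m\rho_m$ and $\rho_-:=\sum_{m:\,c_m<0}(-c_m)\rho_m$, convexity of the state space (together with its closure under scaling by $[0,1]$) makes $\rho_+,\rho_-$ genuine states with $\rho_+\ne\rho_-$ yet $(a\otimes b|\rho_+)=(a\otimes b|\rho_-)$ for every product effect — contradicting local discriminability. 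Hence $D_{\rA\rB}=D_\rA D_\rB$.

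The step I expect to be the main obstacle is precisely this last one: upgrading ``no product effect separates the signed vector $\omega_0$ from $0$'' to ``no product effect separates two honest states $\rho_+\ne\rho_-$''. This is the only place the argument needs more than linear algebra and the dual-pairing structure, and it uses convexity of the state spaces in an essential way; I would therefore make sure the convexity hypothesis on the OPT is in force (or state it explicitly) before running the converse, since a naive ``separating subspace of the dual must be everything'' claim is false for arbitrary spanning sets.
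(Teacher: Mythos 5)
Your proof is correct and takes essentially the same route as the paper's: both reduce local discriminability to the statement that the product effects $a\otimes b$ are separating for $\st{AB}$, identify ``separating'' with ``spanning $\mathsf{Eff}_{\mathbb{R}}(\rA\rB)$'', and then compare dimensions. The only difference is that you supply explicit proofs of the two facts the paper asserts without argument — that $\dim\mathsf{Span}_{\mathbb{R}}\{a\otimes b\}=D_\rA D_\rB$ (via dual bases paired with product states) and that a non-spanning family of product effects yields two genuinely indistinguishable states (via the positive/negative decomposition of a kernel vector, which, as you rightly flag, is the one place convexity of the state space is needed; the degenerate case $\rho_-=0$ is handled by comparing $\rho_+$ with $\tfrac12\rho_+$).
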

\begin{proof}
	By Eq.~\eqref{eq:loc-discr}, a theory satisfies local discriminability if and only if local effects $a\otimes b\in\eff{AB}$, with $a\in\eff{A}$ and $b\in\eff{B}$, are separating for joint states $\st{AB}$. Equivalently, the set $T\coloneqq\{a\otimes b|a\in\eff{A},b\in\eff{B}\}$ is a spanning set for $\mathsf{Eff}_\mathbb{R}(\rA\rB)$. Since the dimension of $\mathsf{Span}_\mathbb{R}(T)$ is $D_\rA D_\rB$ and the spaces of states and effects have the same dimension, we have $D_{\rA\rB}=D_\rA D_\rB$. Conversely, if Eq.~\eqref{eq:product-rule-for-composite-systems} holds, then the product effects are a spanning set for the vector space $\mathsf{Eff}_\mathbb{R}(\rA\rB)$, hence they are separating, and local discriminability holds.
\end{proof}
Along with the axiom of local discriminability we introduce the notion of \textit{entangled} and \textit{separable} states, where entangled states are defined, by negation, as those states that are not separable.
\begin{defn}[Separable states]
	Given $n$ systems $\rA_1,\rA_2,\dots,\rA_n$, the separable states of the composite system $\rA_1\rA_2\dots\rA_n$ are those of the form
	\begin{equation*}
		\begin{aligned}	
			\Qcircuit @C=1.2em @R=.7em @! R { 
				\multiprepareC{3}{\Sigma}&		
				\poloFantasmaCn{\rA_1}\qw&
				\qw
				\\
				\pureghost{\Sigma}&
				\poloFantasmaCn{\rA_2}\qw&
				\qw
				\\
				\pureghost{\Sigma}&
				\vdots&&
				\\
				\pureghost{\Sigma}&
				\poloFantasmaCn{\rA_n}\qw&
				\qw
				\\
			}
		\end{aligned}
		\,=\, \sum_{i\in\rX}p_i
		\begin{aligned}	
			\Qcircuit @C=1.2em @R=.7em @! R { 
				\prepareC{\alpha_{1_i}}&		
				\poloFantasmaCn{\rA}\qw&
				\qw
				\\
				\prepareC{\alpha_{2_i}}&		
				\poloFantasmaCn{\rA}\qw&
				\qw
				\\
				\vdots
				\\
				\prepareC{\alpha_{n_i}}&		
				\poloFantasmaCn{\rA}\qw&
				\qw
			}
		\end{aligned}
	\end{equation*}
	where for $j=1,2,\dots,n$, $\alpha_{j_i}\in\st{A_j}$ $\forall i\in\rX$.
\end{defn}

\subsection{Purification}

Purification is the really distinctive and fundamental trait of quantum theory, in the sense that purification allows to distinguish it between all the other possible theories (all the ones we can think of).\\
The statement of the axiom is the following.
\begin{axiom}[Purification]
	\label{axiom:purification}
	For every system $\rA$ and for every state $\rho\in\st{A}$, there exists a system $\rB$ and a pure state $\Psi\in\mathsf{PurSt}(\rA\rB)$ such that
	\begin{equation*}
		\begin{aligned}	
			\Qcircuit @C=1.2em @R=.7em @! R { 
				\prepareC{\rho}&
				\poloFantasmaCn{\rA}\qw&
				\qw
			}
		\end{aligned}
		\,=\, 
		\begin{aligned}	
			\Qcircuit @C=1.2em @R=.7em @! R { 
				\multiprepareC{1}{\Psi}&		
				\poloFantasmaCn{\rA}\qw&
				\qw
				\\
				\pureghost{\Psi}&		
				\poloFantasmaCn{\rB}\qw&
				\measureD{e}
				\\
			}
		\end{aligned}\,.
	\end{equation*}
	If two pure states $\Psi$ and $\Psi^\prime$ satisfy
	\begin{equation*}
		\begin{aligned}	
			\Qcircuit @C=1.2em @R=.7em @! R { 
				\multiprepareC{1}{\Psi^\prime}&
				\poloFantasmaCn{\rA}\qw&
				\qw\\
				\pureghost{\Psi^\prime}&
				\poloFantasmaCn{\rB}\qw&
				\measureD{e}\\
			}
		\end{aligned}
		\,=\, 
		\begin{aligned}	
			\Qcircuit @C=1.2em @R=.7em @! R { 
				\multiprepareC{1}{\Psi}&		
				\poloFantasmaCn{\rA}\qw&
				\qw
				\\
				\pureghost{\Psi}&		
				\poloFantasmaCn{\rB}\qw&
				\measureD{e}
				\\
			}
		\end{aligned}\,,
	\end{equation*}
	then there exists a reversible transformation $\tU$, acting only on system $\rB$, such that
	\begin{equation}
		\label{eq:uniq purification}
		\begin{aligned}	
			\Qcircuit @C=1.2em @R=.7em @! R { 
				\multiprepareC{1}{\Psi^\prime}&
				\poloFantasmaCn{\rA}\qw&
				\qw\\
				\pureghost{\Psi^\prime}&
				\poloFantasmaCn{\rB}\qw&
				\qw\\
			}
		\end{aligned}
		\,=\, 
		\begin{aligned}	
			\Qcircuit @C=1.2em @R=.7em @! R { 
				\multiprepareC{1}{\Psi}&
				\qw&		
				\poloFantasmaCn{\rA}\qw&
				\qw&
				\qw
				\\
				\pureghost{\Psi}&		
				\poloFantasmaCn{\rB}\qw&
				\gate{\tU}&
				\poloFantasmaCn{\rB}\qw&
				\qw\\
			}
		\end{aligned}\,\,.
	\end{equation}
\end{axiom}
Here we say that $\Psi$ is a \textit{purification} of $\rho$ and that $\rB$ is the \textit{purifying system}.\\
The property stated in Eq.~\ref{eq:uniq purification} is called \textit{uniqueness of purification} and refers to the case in which the two purifications have the same purifying system. It can be easily generalized (the purifying systems are different):
\begin{prop}
	If two pure states $\Psi\in\mathsf{PurSt}(\rA\rB)$ and $\Psi^\prime\in\mathsf{PurSt}(\rA\rB^\prime)$ are purifications of the same mixed state, then
	\begin{equation*}
		\begin{aligned}	
			\Qcircuit @C=1.2em @R=.7em @! R { 
				\multiprepareC{1}{\Psi^\prime}&
				\poloFantasmaCn{\rA}\qw&
				\qw\\
				\pureghost{\Psi^\prime}&
				\poloFantasmaCn{\rB^\prime}\qw&
				\qw\\
			}
		\end{aligned}
		\,=\, 
		\begin{aligned}	
			\Qcircuit @C=1.2em @R=.7em @! R { 
				\multiprepareC{1}{\Psi}&
				\qw&		
				\poloFantasmaCn{\rA}\qw&
				\qw&
				\qw
				\\
				\pureghost{\Psi}&		
				\poloFantasmaCn{\rB}\qw&
				\gate{\tC}&
				\poloFantasmaCn{\rB^\prime}\qw&
				\qw\\
			}
		\end{aligned}
	\end{equation*}
	for some deterministic transformation $\tC$ transforming system $\rB$ into system $\rB^\prime$.
\end{prop}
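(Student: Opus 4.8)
The plan is to reduce the statement to the uniqueness clause of the purification axiom, Eq.~\eqref{eq:uniq purification}, which as stated covers only two purifications sharing the \emph{same} purifying system. The reduction is the standard ``padding'' argument: by tensoring each purification with an auxiliary deterministic pure state one turns both into purifications of $\rho$ living on the common purifying system $\rB\rB'$, after which the existing uniqueness clause applies verbatim.

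Concretely, I would first fix a deterministic pure state $\beta$ of $\rB$ and a deterministic pure state $\beta'$ of $\rB'$ (such states exist: normalise a pure state occurring in a convex decomposition of some deterministic state, noting via Domination of effects that $(e|\cdot)$ is strictly positive on nonzero states). By Corollary~\ref{cor:parallelcompositionofpurestates}, $\Psi\otimes\beta'$ is a pure state of $\rA\rB\rB'$, and discarding $\rB$ and $\rB'$ from it returns $\rho$ (since the marginal of $\Psi$ on $\rA$ is $\rho$ and $\beta'$ is normalised), so $\Psi\otimes\beta'$ is a purification of $\rho$ with purifying system $\rB\rB'$. Symmetrically $\Psi'\otimes\beta$ is a pure state of $\rA\rB'\rB=\rA\rB\rB'$ (using commutativity of system composition) that also purifies $\rho$ with the same purifying system. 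The uniqueness clause of Axiom~\ref{axiom:purification} then supplies a reversible $\tU$ acting only on $\rB\rB'$ with $(\tI_\rA\otimes\tU)(\Psi\otimes\beta')=\Psi'\otimes\beta$.

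Given $\tU$, I would take $\tC\in\transf{B}{B'}$ to be the transformation that prepares $\beta'$ on an auxiliary $\rB'$ wire, applies $\tU$ to the $\rB$ wire together with this $\rB'$ wire, and closes the output $\rB$ wire with the deterministic effect, i.e.\ $\tC=(e_\rB\otimes\tI_{\rB'})\circ\tU\circ(\tI_\rB\otimes\beta')$. This $\tC$ is deterministic, being a sequential composition of a deterministic preparation, a channel (reversible transformations are channels), and the deterministic effect. Then I would compute $\tI_\rA\otimes\tC$ applied to $\Psi$: substituting the definition of $\tC$, using commutation of parallel and sequential composition (Eq.~\eqref{c:parallel composition}) to pull the ancilla preparation outside, invoking the relation $(\tI_\rA\otimes\tU)(\Psi\otimes\beta')=\Psi'\otimes\beta$, and finally using $(e_\rB|\beta)=1$, the whole expression collapses to $\Psi'$, which is exactly the claimed identity.

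All the conceptual work sits in the padding step; what remains genuinely delicate is the diagrammatic bookkeeping — tracking which $\rB$ and which $\rB'$ wire each transformation acts on while one freely identifies $\rB\rB'$ with $\rB'\rB$ and reorders parallel and sequential compositions — together with the twin requirements that the ancillas be \emph{normalised} (so the marginals come out exactly $\rho$ rather than a sub-normalised multiple) and \emph{pure} (so the padded states are pure and hence eligible for the uniqueness clause). I expect the wire-bookkeeping to be the only real source of slips; once the padding idea is in place, no conceptual obstacle remains.
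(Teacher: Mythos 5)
Your proposal is correct and follows essentially the same route as the paper: pad $\Psi$ with a pure state $\beta'$ of $\rB'$ and $\Psi'$ with a pure state $\beta$ of $\rB$ so that both become pure (by parallel composition of pure states) purifications of $\rho$ on the common purifying system $\rB\rB'$, invoke the same-system uniqueness clause to obtain the reversible $\tU$, and define $\tC=(e_\rB\otimes\tI_{\rB'})\circ\tU\circ(\tI_\rB\otimes\beta')$ exactly as the paper does. Your extra care about the ancillas being normalised (so the marginals are exactly $\rho$) is a sound refinement of a point the paper passes over silently.
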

\begin{proof}
	Pick two pure states $\beta\in\mathsf{PurSt}(\rB)$ and $\beta^\prime\in\mathsf{PurSt}(\rB^\prime)$. Since $\Psi\otimes\beta^\prime$ and $\Psi^\prime\otimes\beta$	are purifications of the same state on $\rA$, the uniqueness of purification implies
	\begin{equation*}
		\begin{aligned}	
			\Qcircuit @C=1.2em @R=.7em @! R { 
				\multiprepareC{1}{\Psi^\prime}&
				\poloFantasmaCn{\rA}\qw&
				\qw\\
				\pureghost{\Psi^\prime}&
				\poloFantasmaCn{\rB^\prime}\qw&
				\qw\\
				\prepareC{\beta}&
				\poloFantasmaCn{\rB}\qw&
				\qw\\
			}
		\end{aligned}
		\,=\, 
		\begin{aligned}	
			\Qcircuit @C=1.2em @R=.7em @! R { 
				\multiprepareC{1}{\Psi}&
				\qw&		
				\poloFantasmaCn{\rA}\qw&
				\qw&
				\qw\\
				\pureghost{\Psi}&		
				\poloFantasmaCn{\rB}\qw&
				\multigate{1}{\tU}&
				\poloFantasmaCn{\rB^\prime}\qw&
				\qw\\
				\prepareC{\beta^\prime}&
				\poloFantasmaCn{\rB^\prime}\qw&
				\ghost{\tU}&
				\poloFantasmaCn{\rB}\qw&
				\qw\\
			}
		\end{aligned}
	\end{equation*}
	for some reversible transformation $\tU$ (we have also assumed local discriminability).\\
	Discarding system $\rB$ on both sides we then obtain $\Psi^\prime = (\tI_\rA\otimes\tC)\Psi$, where $\rC$ is the deterministic transformation defined by
	\begin{equation*}
		\begin{aligned}	
			\Qcircuit @C=1.2em @R=.7em @! R { & 
				\poloFantasmaCn{\rB}\qw&
				\gate{\tC}&
				\poloFantasmaCn{\rB^\prime}\qw&
				\qw
			}
		\end{aligned}
		\,\coloneqq \, 
		\begin{aligned}	
			\Qcircuit @C=1.2em @R=.7em @! R { &		
				\poloFantasmaCn{\rB}\qw&
				\multigate{1}{\tU}&
				\poloFantasmaCn{\rB^\prime}\qw&
				\qw\\
				\prepareC{\beta}&		
				\poloFantasmaCn{\rB^\prime}\qw&
				\ghost{\tU}&
				\poloFantasmaCn{\rB}\qw&
				\measureD{e}\\
			}
		\end{aligned}\,.
	\end{equation*}
\end{proof}


\chapter{Bit Commitment}
\label{chap:bit commitment}
Bit commitment is a cryptographic primitive involving two mistrustful parties, conventionally called Alice and Bob. Alice is supposed to submit an encoded bit of information to	Bob in such a way that Bob has (in principle) no chance to identify the bit before Alice later decodes it for him, whereas Alice has (in principle) no way of changing the value of the bit once she has submitted it. In other words, Bob is interested in \textit{binding} Alice to some commitment, whereas Alice would like to \textit{conceal} her commitment from Bob.\\

In the first two sections of this Chapter we will describe the protocol: we will start from an historical perspective, from the first article published by Blum in 1983 \cite{blum} to the most recent developments of the last decade that focus on the impossibility of bit commitment in quantum theory \cite{d'ariano2007,chiribella-d'ariano2009}. Then, we will rigorously define the protocol in the language of OPTs.\\

As every other cryptographic primitive, bit commitment does not need to be \textit{perfectly secure}, i.e. probability of cheating equals to 0 for Alice and 1/2 for Bob (who can always randomly guess), to be efficient. In fact, even if a greater probability of error is admitted, iterating the protocol the error probability can be generally asymptotically reduced. A protocol that admits this possibility is called \textit{unconditionally secure}, with a vast literature on the subject.\\
However in our thesis we will only deal with perfectly secure bit commitment. Since we are considering the protocol within the operational framework, there are not yet the technical tools to analyse unconditionally secure bit commitment in OPTs and, anyway, the perfectly secure protocol should be the starting point for a rigorous analysis. The exceptional thing is that in our analysis we will be anyway able to build a cheating scheme also for some unconditionally secure bit commitment protocol, as we will see in Section~\ref{sec:cheating-BC}.\\

Finally, in the third part of the Chapter, we will study the proof of the impossibility of perfectly secure bit commitment in  OP quantum theory done in Ref.~\cite{chiribella}. We will review if all of the six axioms of quantum theory introduced in Section~\ref{sec:axioms} are really necessary conditions. Some of them will be neglected and other will be replaced with weaker hypothesis. With the new sufficient conditions that we will find, the proof of impossibility of perfectly secure bit commitment can be extended to other theories than the quantum one. Some applications hypothesis are the fermionic theory and the real quantum theory and in particular the PR-box theory. To the latter will be focused the next two Chapter, in fact, in literature numerous bit commitment protocols (and more generally quantum key distributions protocols) have been studied in the context of non-local correlated box, i.e. PR-boxes. So our analysis will provide a solid (operational probabilistic) point of view from which study all these protocols that have been prosed in the past years.

\section{From Coin Tossing to no-go Theorem}
\label{sec:fromcointossingtonogotheorem}
The bit commitment protocol was conceived for the first time by Blum in 1983 as a building block for secure coin tossing. To cite the abstract of the original work \cite{blum}:
\begin{center}
	\textit{Alice and Bob want to flip a coin by telephone. (They have just divorced, live in different cities, want to decide who gets the car.) Bob would not like to tell Alice HEADS and hear Alice (at the other end	of the line) say ``Here goes... I'm flipping the coin .... You lost!''}.
\end{center}

A standard example to illustrate bit commitment is for Alice to write the bit down on a piece of paper, which is then locked in a safe and sent to Bob, whereas Alice keeps the key. At a later time, she will unveil it by handing over the key to Bob. However, Bob has a well-equipped toolbox at home and may have been able to open the safe in the meantime. So while this scheme may offer reasonably good practical security, it is in principle insecure. Yet all bit commitment schemes that have wide currency today rely on such technological constraints: not on strongboxes and keys, but	on unproven assumptions that certain computations are hard to perform.\\

The first example of quantum bit commitment was first proposed by Bennet and Brassard in their famous paper of 1984 \cite{bb84} as a primitive for implementing coin tossing.\\
In their scheme, Alice commits to a bit value by preparing a sequence of photons in either of two mutually unbiased bases, in a way that the resulting quantum states are indistinguishable to Bob. The authors show that their protocol is secure against so-called passive cheating, in which Alice initially commits to the bit value $k$ and then tries to unveil $1-k$ later. However, they also prove that Alice can cheat with a more sophisticated strategy, in which she initially prepares pairs of maximally entangled states instead, keeps one particle of each pair in her laboratory and sends the second particle to Bob. For the first time, entanglement is recognized as a crucial factor in preventing perfectly secure bit commitment.\\
Subsequent proposals for bit commitment schemes tried to evade this type of attack by forcing the players to carry out measurements and communicate classically as they go through the protocol. At a 1993 conference Brassard \textit{et al.} presented a bit commitment protocol \cite{brassard91} that was claimed and generally accepted to be unconditionally secure.\\

In 1996 Lo and Chau \cite{lo-chau}, and Mayers \cite{mayers} realized that all previously proposed bit commitment protocols were vulnerable to a generalized version of the (EPR) attack that renders the BB84 proposal insecure, a result that they slightly extended to cover quantum bit commitment protocols in general. In OP terms, the determinant factor in the impossibility of bit commitment shifts from entanglement to the purification principle, this will be fully proved in Ref~\cite{chiribella}.\\
Their basic argument is the following. At the end of the commitment
phase, Bob will hold one out of two quantum states $\psi_k$ as proof
of Alice's commitment to the bit value $k\in\{0, 1\}$. Alice holds its
purification $\Psi_k$, which she will later pass on to Bob to
unveil. For the protocol to be concealing, the two states $k$ should
be (almost) indistinguishable, $\psi_0\approx\psi_1$. But Uhlmann's
theorem then implies the existence of a unitary transformation $\tU$
that (nearly) rotates the purification of $\psi_0$ into the
purification of $\psi_1$. Since $\tU$ is localized on the purifying system only, which is entirely under Alice's control, Lo-Chau-Mayers argue that Alice can switch at will between the two states, and is not in any way bound to her commitment. As a consequence, any concealing bit commitment protocol is argued to be necessarily non-binding.\\

Starting from 2000 the Lo-Chau-Mayers no-go theorem has been continually challenged, arguing that the impossibility proof does not exhaust all conceivable quantum bit commitment protocols. Several protocols have been proposed and claimed to circumvent the no-go theorem. These protocols seek to strengthen Bob's position with the help of "secret parameters" or "anonymous states", so that Alice lacks some information needed to cheat successfully: while Uhlmann's theorem would still imply the existence of a unitary cheating transformation as described above, this transformation might be unknown to Alice.\\

However, the above attempts to build up a secure quantum bit commitment protocol have motivated the thorough analysis of Ref.~\cite{d'ariano2007}, which provided a strengthened and explicit impossibility proof exhausting all conceivable protocols in which classical and quantum information is exchanged between two parties, including the possibility of protocol aborts and resets. This proof encompasses protocols even with unbounded number of communication rounds (it is only required that the expected number of rounds is finite), and with quantum systems on infinite-dimensional Hilbert spaces. However, the considerable length of this proof made it hard to follow, lacking a synthetic intuition of the impossibility proof. Finally in 2009 Chiribella \textit{et al.} in Ref.~\cite{chiribella-d'ariano2009} provided a new short impossibility proof of quantum bit commitment. In Ref.~\cite{chiribella} a similar demonstrative structure is used to prove the impossibility of perfectly secure bit commitment in an operational framework, not only in quantum theory, but in a wide range of theories with purification. This proof will be the one that we will study in the third Section of this Chapter.		

\section{A Formal Definition}				
\label{sec:defn-bit-commitment}
A rigorous definition is in order for a twofold reason. If it is true that it is necessary to define a framework where to operate, it is at the same time important to clearly remark the definition of bit commitment to which the statement \textit{"bit commitment is impossible"} refers to.\\

Despite we already referred to the word \textit{protocol}, we did not linger to define what we mean with it, therefore we will start by the notion of protocol and we will state the bit commitment in the OPT language and its key properties only thereafter.\\

\subsection{The protocol}
A protocol regulates the exchange of messages between participants, defining what are the honest strategies that they can adopt, so that at every stage it is clear what type of message is expected from the participants, although, of course, their content is not fixed. The expected message types can be either classical or quantum or a combination thereof.\\

In any bit commitment protocol, we can distinguish two main phases: the first is the \textit{commitment phase}, in which Alice and Bob exchange classical and quantum messages in order to commit the bit. The second is the \textit{opening phase} where Alice will send to Bob some classical or quantum information in order to to reveal the bit value.
\begin{description}
	\item[Commitment phase] this phase can end either with a successful commitment, or with an abort, in which the two parties irrevocably give up the purpose of committing the bit (of course, in a well designed protocol, if both parties are honest the probability of abort should be vanishingly small). If no abort took place, the bit value is considered to be committed to Bob but, supposedly, concealed from him. Since bit commitment is a two-party protocol and trusted third parties are not allowed, the starting state necessarily has to be originated by one of the two parties. Moreover, since we can always include in the protocol null steps (in which no information, classical or quantum, is exchanged), without loss of generality, we can restrict	our attention to protocols that are started by Alice.
	\item[Opening phase] in the case of abort during the commitment, this is just a null step, whereas, in the case of successful commitment, at the opening Alice will send to Bob some classical or quantum information in order to to reveal the bit value. Taking both Alice's message and his own (classical and quantum) records, Bob will then perform a suitable verification measurement. His measurement will result in either a successful readout of the committed bit, or in a failure, e.g. due to the detection of an attempted cheat. Again, in a well-designed protocol the probability of failure should be vanishingly small.
\end{description}

\subsection{OP bit commitment}
\label{sec:op-bit-commitment}
Alice wants to commit a classical bit $b\in\{0,1\}$ to Bob.  
\begin{enumerate}
	\item as we have seen before, the first phase is the commitment phase, in which Alice and Bob can perform any sequence of operations. Depending on whether Alice intended to commit $b=0$ or $b=1$, the state $\Psi_0,\,\Psi_1\in\st{AB}$ is selected, respectively.	We will assume that Alice and Bob's systems at the end of the commitment phase are A and B respectively, so that the two possible pure states that can be transmitted to Bob are $(e|_\rA\Psi_0,\,(e|_\rA\Psi_1\in\st{B}$;
	\item between the commitment and the opening phase Alice and Bob can perform only local operations on their systems A and B, respectively (together with every other ancillary system they control);
	\item during the opening phase, Alice transmits her system A to Bob who can perform any measurement on the joint system AB to know the bit $b$ and to check if it is compatible with the commitment of Alice. In general, to do this Bob can perform a two outcome measurement (Positive Operator Valued Measure, POVM) $\{a_0,a_1\}$ on the joint system AB.
\end{enumerate}
From now on, when we will refer to bit commitment we mean a protocol that is included in the previous scheme.\\

Already at the beginning of this Chapter we intuitively mention the two way of cheating that can occur. Being a two party protocol, we can have Alice's cheating, i.e. she changes the bit after the commitment (the protocol is not binding), and Bob's cheating, i.e. Bob discovers the bit committed before the opening (the protocol is not concealing). In this way we have now identified two key properties of any bit commitment protocol. However there is a third key property that is often omitted in the literature: the \textit{correctness} of the protocol that guarantees the correct verification of the bit committed in the opening phase.\\
To recapitulate with a properer language, we will say that a bit commitment protocol is
\begin{itemize}
	\item\textbf{binding:} if, for honest Bob, Alice should not be able to change the bit she committed. More precisely, assume that a possibly dishonest Alice committed $b$ but she wants to reveal $b^\prime\ne b$, then it must be
	\begin{equation}
		\label{eq:binding}
		\textsf{Pr}\left[\text{ Bob accepts }|\text{ Alice reveals }b^\prime\,\right]<1\,;
	\end{equation}
	\item\textbf{concealing:} if, for honest Alice, Bob should not be able to know the bit that Alice committed until she reveals it;
	\item\textbf{correct:} for honest Alice and Bob, if Alice commits $b$ and later reveals $b$ to Bob, then Bob accepts with probability grater then $1/2$:
	\begin{equation}
		\label{eq:correctness}
		\textsf{Pr}\left[\text{ Bob accepts }|\text{ Alice reveals }b\,\right]>\frac{1}{2}\,.
	\end{equation}		
\end{itemize}
Furthermore, we will say that a bit commitment protocol is \textit{perfect} if
\begin{itemize}
	\item it is perfectly binding, namely for honest Bob, Alice cannot switch between $\Psi_0$ and $\Psi_1$ in such a way that Bob cannot detect the switch with certainty (i.e. Eq.~\eqref{eq:binding} becomes $\textsf{Pr}\left[\text{ Bob accepts }|\text{ Alice reveals }b^\prime\,\right]=0$). Namely it does not exists a reversible channel $\tU$ such that
	\begin{equation}
		\label{eq:perfectly-binding}
		(\tU\otimes\tI_\rB)|\Psi_0)_{\rA\rB}=|\Psi_1)_{\rA\rB}\,;
	\end{equation}
	\item it is perfectly concealing, namely for honest Alice, Bob would not be able to perform some measurement of his system B and gain at least partial information about which bit Alice committed
	\begin{equation}
		\label{eq:perfectly-concealing}
		(e|_\rA|\Psi_0)_{\rA\rB}=(e|_\rA|\Psi_1)_{\rA\rB}\,;
	\end{equation}
	\item it is correct with probability one, namely Bob accepts with probability one. So Eq.~\eqref{eq:correctness} becomes
	\begin{equation*}
		\textsf{Pr}\left[\text{ Bob accepts }|\text{ Alice reveals }b\,\right]=1\,.
	\end{equation*}
\end{itemize}
In the perfect implementation of the protocol the probability of cheating of Bob must be $1/2$, since in the worst case he can always make a random guess of the committed bit. The cheating probability of Alice is instead equal to zero.\\

As we have anticipated in Section \ref{sec:fromcointossingtonogotheorem}, perfectly secure bit commitment protocol is impossible in quantum and classic theory. However, if the perfect concealing and perfect biding conditions are relaxed as follows
\begin{equation*}
	\begin{aligned}
		(\tU\otimes\tI_\rB)|\Psi_0)_{\rA\rB}&\approx|\Psi_1)_{\rA\rB}\\
		(e|_\rA|\Psi_0)_{\rA\rB}\approx(&e|_\rA|\Psi_1)_{\rA\rB}
	\end{aligned}
\end{equation*}
interesting level of security relevant for concrete applications emerges. Within this scenario the main effort is in quantifying the cheating probabilities and their trade-off in operational terms. The goal is to achieve a protocol that is asymptotically binding and concealing (both the Alice and Bob cheating probability can be made arbitrarily small) against an adversary that has no restrictions on the computational resources. In this case one has unconditionally secure bit commitment.

\section{Lightening No Bit Commitment}
\label{sec:lightening-noBC}
In this Section we will analyse the necessary assumptions to prove the impossibility of perfectly secure bit commitment. In the literature, a no-go theorem in quantum theory has been proven in Ref.~\cite{d'ariano2007,chiribella-d'ariano2009} but we found that not all of the axioms of quantum theory are necessary. So, using weaker assumption, we will generalize the impossibility of bit commitment to a larger set of operational probabilistic theories.\\

Clearly the causality Axiom~\ref{axiom:causality} is essential, otherwise the very protocol could not be defined properly (there would not be a given order in the succeeding of the phases). In our analysis we also assume Axiom~\ref{axiom:atomicity}, atomicity of composition, that is a sufficient condition to grant Corollary~\ref{cor:parallelcompositionofpurestates}.\\
Finally, instead of the purification Axiom~\ref{axiom:purification}, we take the following weaker assumption. Before stating it, we introduce the notion of dynamically faithful state.
\begin{defn}[Dynamically faithful state]
	A state $\sigma\in\st{AC}$ is dynamically faithful for system $\rA$ if for any couple of transformations $\tA,\,\tA^\prime\in\transf{A}{B}$ one has
	\begin{equation*}
		\tA|\sigma)_{\rA\rC}=\tA^\prime|\sigma)_{\rA\rC}\Longrightarrow\tA=\tA^\prime\, .
	\end{equation*}
\end{defn}
We are now in position to state the "new" axiom.
\begin{axiom}
	\label{axiom:7}
	For every system $\rA$ there exist a system $\tilde{\rA}$ and a pure state $\Psi^{(\rA)}\in\mathsf{St}(\rA\tilde{\rA})$ that is dynamically faithful for system $\rA$. Furthermore, for every system $\rB$ and for every bipartite state $R\in\mathsf{St}(\rB\tilde{\rA})$ such that 
	\begin{equation*}
		\begin{aligned}
			\Qcircuit @C=1.2em @R=.7em @! R { 
				\multiprepareC{1}{R}&
				\poloFantasmaCn{\rB}\qw&
				\measureD{e}\\
				\pureghost{R}&
				\poloFantasmaCn{\tilde{\rA}}\qw&
				\qw\\
			}
		\end{aligned}
		\, = \,
		\begin{aligned}
			\Qcircuit @C=1.2em @R=.7em @! R { 
				\multiprepareC{1}{\Psi^{(\rA)}}&
				\poloFantasmaCn{\rA}\qw&
				\measureD{e}\\
				\pureghost{\Psi^{(\rA)}}&
				\poloFantasmaCn{\tilde{\rA}}\qw&
				\qw\\
			}
		\end{aligned}\,,
	\end{equation*}
	there exists a purification of $R$. If $\Phi$, $\Phi^\prime\in\mathsf{St}(\rC\rB\tilde{\rA})$ are two purifications of $R$ then they are connected by a reversible transformation $\tU\in\mathsf{Transf}(\rC)$.
\end{axiom}
In quantum theory, the existence of mixed faithful states is a direct consequence of local discriminability. In a theory with both local discriminability and purification there exist also dynamically faithful states that are pure. In Axiom \ref{axiom:7} we do not require local discriminability nor purification but the existence of dynamically faithful pure states. In addition we also require that a purification exists only for those states that have the same marginal of the dynamically faithful pure ones and that this purification is unique up to a reversible transformation on the purifying system.\\

An immediate consequence of Axioms \ref{axiom:atomicity} and \ref{axiom:7} is the following lemma on the uniqueness of purification.
\begin{lemma}[Uniqueness of the purification up to channels on the purifying systems]
	\label{lemma:uniqPuri}
	Let $\Psi\in\st{AB}$ and $\Psi^\prime\in\st{AC}$ be two purification of $\rho\in\st{A}$. Then there exists a channel $\tC\in\transf{B}{C}$ such that
	\begin{equation*}
		\begin{aligned}
			\Qcircuit @C=1.2em @R=.7em @! R { 
				\multiprepareC{1}{\Psi^\prime}&
				\poloFantasmaCn{\rA}\qw&
				\qw\\
				\pureghost{\Psi^\prime}&
				\poloFantasmaCn{\rC}\qw&
				\qw\\
			}
		\end{aligned}
		\, = \,
		\begin{aligned}
			\Qcircuit @C=1.2em @R=.7em @! R { 
				\multiprepareC{1}{\Psi}&
				\qw&
				\poloFantasmaCn{\rA}\qw&
				\qw&\qw\\
				\pureghost{\Psi}&
				\poloFantasmaCn{\rB}\qw&
				\gate{\tC}&
				\poloFantasmaCn{C}\qw&
				\qw\\
			}
		\end{aligned}\,.
	\end{equation*}
	Moreover, channel $\tC$ has the form
	\begin{equation*}
		\begin{aligned}
			\Qcircuit @C=1.2em @R=.7em @! R { 
				&
				\poloFantasmaCn{\rB}\qw&
				\gate{\tC}&
				\poloFantasmaCn{\rC}\qw&
				\qw
			}
		\end{aligned}
		\, = \,
		\begin{aligned}
			\Qcircuit @C=1.2em @R=.7em @! R { 
				\prepareC{\varphi_0}&
				\poloFantasmaCn{\rC}\qw&
				\multigate{1}{\tU}&
				\poloFantasmaCn{\rB}\qw&
				\measureD{e}\\
				&
				\poloFantasmaCn{\rB}\qw&
				\ghost{\tU}&
				\poloFantasmaCn{C}\qw&
				\qw\\
			}
		\end{aligned}\,.
	\end{equation*}
	for some pure state $\varphi_0\in\st{C}$ and some reversible channel $\tU\in\transf{B}{C}$.
\end{lemma}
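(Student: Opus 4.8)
The plan is to run the standard ``uniqueness of purification up to channels'' argument: pad each of the two purifications with a normalized pure ancilla so that both come to live over a common purifying system, apply uniqueness of purification to obtain a reversible channel intertwining the padded states, and finally read $\tC$ off from that reversible channel. Corollary~\ref{cor:parallelcompositionofpurestates} (hence Axiom~\ref{axiom:atomicity}) is what makes the padded states pure, and the uniqueness clause of Axiom~\ref{axiom:7} is what produces the reversible channel.

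In detail, I would first fix a normalized pure state $\varphi_0\in\mathsf{PurSt}(\rC)$ and a normalized pure state $\psi_0\in\mathsf{PurSt}(\rB)$ (the case $\rho=0$ being trivial). By Corollary~\ref{cor:parallelcompositionofpurestates} the parallel composites $\Psi\otimes\varphi_0$ and $\Psi'\otimes\psi_0$ are pure states, of $\rA\rB\rC$ and $\rA\rC\rB$ respectively; and since $\varphi_0$ and $\psi_0$ are normalized, the factorization $e_{\rB\rC}=e_\rB\otimes e_\rC$ of the deterministic effect (a consequence of causality, Axiom~\ref{axiom:causality}) shows that the marginal on $\rA$ of each composite equals $\rho$. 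Hence $\Psi\otimes\varphi_0$ and $\Psi'\otimes\psi_0$ are two purifications of $\rho$ \emph{with the same purifying system} $\rB\rC=\rC\rB$.

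I would then invoke uniqueness of purification for a common purifying system to get a reversible channel $\tU$ on $\rB\rC$ with $(\tI_\rA\otimes\tU)(\Psi\otimes\varphi_0)=\Psi'\otimes\psi_0$, and define
\begin{equation*}
	\begin{aligned}
		\Qcircuit @C=1.2em @R=.7em @! R {
			&
			\poloFantasmaCn{\rB}\qw&
			\gate{\tC}&
			\poloFantasmaCn{\rC}\qw&
			\qw
		}
	\end{aligned}
	\, \coloneqq \,
	\begin{aligned}
		\Qcircuit @C=1.2em @R=.7em @! R {
			\prepareC{\varphi_0}&
			\poloFantasmaCn{\rC}\qw&
			\multigate{1}{\tU}&
			\poloFantasmaCn{\rB}\qw&
			\measureD{e}\\
			&
			\poloFantasmaCn{\rB}\qw&
			\ghost{\tU}&
			\poloFantasmaCn{C}\qw&
			\qw\\
		}
	\end{aligned}\,.
\end{equation*}
Two routine checks then finish the argument. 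First, $(\tI_\rA\otimes\tC)|\Psi)_{\rA\rB}=|\Psi')_{\rA\rC}$: inserting the definition of $\tC$ and the intertwining relation turns the left-hand side into $(e_\rB|\,(\Psi'\otimes\psi_0)=|\Psi')_{\rA\rC}$, using $(e_\rB|\psi_0)=1$. Second, $\tC$ is a channel: $(e_\rC|\tC=(e_{\rB\rC}|\,\tU\,(\varphi_0\otimes\tI_\rB)=(e_{\rC\rB}|\,(\varphi_0\otimes\tI_\rB)=(e_\rC|\varphi_0)\,(e_\rB|=(e_\rB|$, using that $\tU$ is a reversible channel and $(e_\rC|\varphi_0)=1$. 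The displayed definition of $\tC$ is exactly the form asserted in the statement, with this $\varphi_0$ and this reversible $\tU$.

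The main obstacle is the very first use of uniqueness of purification for a common purifying system. Axiom~\ref{axiom:7} supplies this only for bipartite states $R$ on $\rB\tilde{\rA}$ whose marginal on $\tilde{\rA}$ coincides with that of the dynamically faithful state $\Psi^{(\rA)}$, not for a purification of an arbitrary state $\rho$ over an arbitrary purifying system. Bridging this gap is where the real work lies: one should first upgrade Axiom~\ref{axiom:7} to the statement that any two purifications of any state over a common purifying system are related by a reversible channel on that system, and the natural route is to use dynamical faithfulness of $\Psi^{(\rA)}$ to represent the bipartite states appearing in the argument in the form $(\tI\otimes\tT)\Psi^{(\rA)}$, thereby transporting the uniqueness clause from the reference state $\Psi^{(\rA)}$ to $\rho$; this step also quietly uses the existence of normalized pure ancilla states $\varphi_0,\psi_0$. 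Once that general uniqueness is available, the remainder is just the padding trick together with Corollary~\ref{cor:parallelcompositionofpurestates}, which is why the lemma is advertised as an immediate consequence of Axioms~\ref{axiom:atomicity} and~\ref{axiom:7}.
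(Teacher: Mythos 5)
Your argument is correct and is essentially the paper's own proof: pad the two purifications with deterministic pure ancillas (pure by Axiom~\ref{axiom:atomicity} via Corollary~\ref{cor:parallelcompositionofpurestates}), regard the results as two purifications of $\rho$ with the common purifying system $\rB\rC$, extract the reversible channel $\tU$ from the uniqueness clause, and set $\tC=(e|_\rB\,\tU\,|\varphi_0)_\rC$. The only substantive difference is that you explicitly flag that the uniqueness clause of Axiom~\ref{axiom:7} is literally stated only for purifications of states sharing the marginal of $\Psi^{(\rA)}$, whereas the paper's proof makes the same unrestricted appeal to that clause without comment — so your version is, if anything, the more careful one.
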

\begin{proof}
	Let $\eta$ and $\varphi_0$ be an arbitrary pure state of B and C, respectively. Then, due to Axiom \ref{axiom:atomicity}, $|\Psi^\prime)_{\rA\rC}|\eta)_{\rB}$ and $|\Psi)_{\rA\rB}|\varphi_0)_{\rC}$ are still two pure states and so they are both purifications of $\rho$ with the same purifying system $\rB\rC$. Due to Axiom \ref{axiom:7}, we have
	\begin{equation*}
		\begin{aligned}
			\Qcircuit @C=1.2em @R=.7em @! R { 
				\multiprepareC{1}{\Psi^\prime}&
				\poloFantasmaCn{\rA}\qw&
				\qw\\
				\pureghost{\Psi^\prime}&
				\poloFantasmaCn{\rC}\qw&
				\qw\\
				\prepareC{\eta}&
				\poloFantasmaCn{\rB}\qw&
				\qw
			}
		\end{aligned}
		\, = \,
		\begin{aligned}
			\Qcircuit @C=1.2em @R=.7em @! R { 
				\multiprepareC{1}{\Psi}&
				\qw&
				\poloFantasmaCn{\rA}\qw&
				\qw&\qw\\
				\pureghost{\Psi}&
				\poloFantasmaCn{\rB}\qw&
				\multigate{1}{\tU}&
				\poloFantasmaCn{\rC}\qw&
				\qw\\
				\prepareC{\varphi_0}&
				\poloFantasmaCn{\rC}\qw&
				\ghost{\tU}&
				\poloFantasmaCn{B}\qw&
				\qw\\
			}
		\end{aligned}\,.
	\end{equation*}	
	Applying the deterministic effect $e_\rB$ on system $\rB$ we obtain the thesis, with $\tC=(e|_\rB\;\tU|\varphi_0)$.
\end{proof}

\subsection{Reversible dilation of channels}
Before starting the proof of the impossibility of bit commitment it is in order to derive some useful results about reversible dilations of channels.
\begin{lemma}
	\label{lemma:lemma23}
	Let $R\in\mathsf{St}(\rB\tilde{\rA})$ be a state such that
	\begin{equation*}
		\begin{aligned}
			\Qcircuit @C=1.2em @R=.7em @! R { 
				\multiprepareC{1}{R}&
				\poloFantasmaCn{\rB}\qw&
				\measureD{e}\\
				\pureghost{R}&
				\poloFantasmaCn{\tilde{\rA}}\qw&
				\qw
			}
		\end{aligned}
		\, = \,
		\begin{aligned}
			\Qcircuit @C=1.2em @R=.7em @! R { 
				\multiprepareC{1}{\Psi^{(\rA)}}&
				\poloFantasmaCn{\rA}\qw&
				\measureD{e}\\
				\pureghost{\Psi^{(\rA)}}&
				\poloFantasmaCn{\tilde{\rA}}\qw&
				\qw\\
			}
		\end{aligned}\,.
	\end{equation*}	
	where $\Psi^{(\rA)}\in\mathsf{St}(\rA\tilde{\rA})$ is a pure dynamically faithful state for system $\rA$. Then there exist a system $\rC$, a pure state $\varphi_0\in\mathsf{St}(\rB\rC)$, and a reversible channel $\tU\in\mathsf{Transf}(\rA\rB\rC)$ such that 
	\begin{equation}
		\label{eq:revchannel}
		\begin{aligned}
			\Qcircuit @C=1.2em @R=.7em @! R { 
				\multiprepareC{1}{R}&
				\poloFantasmaCn{\rB}\qw&
				\qw\\
				\pureghost{R}&
				\poloFantasmaCn{\tilde{\rA}}\qw&
				\qw
			}
		\end{aligned}
		\, = \,
		\begin{aligned}
			\Qcircuit @C=1.2em @R=.7em @! R { 
				\prepareC{\varphi_0}&
				\poloFantasmaCn{\rB\rC}\qw&
				\multigate{1}{\tU}&
				\poloFantasmaCn{\rA\rC}\qw&
				\measureD{e}\\
				\multiprepareC{1}{\Psi^{(\rA)}}&
				\poloFantasmaCn{\rA}\qw&
				\ghost{\tU}&
				\poloFantasmaCn{B}\qw&
				\qw\\
				\pureghost{\Psi^{(\rA)}}&
				\qw&
				\poloFantasmaCn{\tilde{\rA}}\qw&
				\qw&\qw\\
			}
		\end{aligned}\,.
	\end{equation}	
	Moreover the channel $\tV\in\mathsf{Transf}(\rA\rightarrow\rA\rB\rC)$ defined by $\tV:=\tU|\varphi_0)_{\rB\rC}$ is unique up to reversible channels on $\rA\rC$.
\end{lemma}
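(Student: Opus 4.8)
The plan is to derive both assertions from the existence and uniqueness of purifications provided by Axiom~\ref{axiom:7}, together with Lemma~\ref{lemma:uniqPuri} and Corollary~\ref{cor:parallelcompositionofpurestates}. First I would observe that $R$ meets the hypothesis of Axiom~\ref{axiom:7}, since its marginal on $\tilde{\rA}$ coincides with that of the dynamically faithful pure state $\Psi^{(\rA)}$. Hence $R$ admits a purification $\Phi\in\mathsf{St}(\rB\rC\tilde{\rA})$ with some purifying system $\rC$; by definition $\Phi$ is pure and $(e|_\rC|\Phi)=|R)_{\rB\tilde{\rA}}$.

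The crucial remark is that $\Phi$ and $\Psi^{(\rA)}$ are \emph{both} purifications of one and the same state, namely the marginal $\mu:=(e|_\rB|R)_{\tilde{\rA}}=(e|_\rA|\Psi^{(\rA)})_{\tilde{\rA}}$ on $\tilde{\rA}$ (for $\Phi$ this is because $(e|_{\rB\rC}|\Phi)=(e|_\rB|R)=\mu$), with purifying systems $\rB\rC$ and $\rA$ respectively. Applying Lemma~\ref{lemma:uniqPuri} to these two purifications of $\mu$ yields a channel $\tC\in\mathsf{Transf}(\rA\to\rB\rC)$ with $|\Phi)=(\tC\otimes\tI_{\tilde{\rA}})|\Psi^{(\rA)})$, which moreover has the dilated form $\tC=(e|_\rA\,\tU\,|\varphi_0)_{\rB\rC}$ for some pure state $\varphi_0\in\mathsf{St}(\rB\rC)$ and some reversible channel $\tU\in\mathsf{Transf}(\rA\rB\rC)$. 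Substituting this expression for $\tC$ into the previous identity and then composing the output $\rC$ with the deterministic effect $e_\rC$ — which on the left side reproduces $R$, since $\Phi$ purifies $R$ — gives precisely Eq.~\eqref{eq:revchannel}, the two output wires of $\tU$ that are discarded being $\rA$ and $\rC$.

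For the uniqueness statement, set $\tV:=\tU|\varphi_0)_{\rB\rC}\in\mathsf{Transf}(\rA\to\rA\rB\rC)$ and note first that $\Phi=(\tV\otimes\tI_{\tilde{\rA}})|\Psi^{(\rA)})$ is a \emph{pure} purification of $R$ with purifying system $\rA\rC$: it reduces to $R$ on $\rB\tilde{\rA}$ by Eq.~\eqref{eq:revchannel}, and it is pure because $|\varphi_0)_{\rB\rC}\otimes|\Psi^{(\rA)})_{\rA\tilde{\rA}}$ is pure by Corollary~\ref{cor:parallelcompositionofpurestates} and reversible channels send pure states to pure states. If $\tV^\prime=\tU^\prime|\varphi_0^\prime)_{\rB\rC}$ is a second channel of this kind satisfying Eq.~\eqref{eq:revchannel}, the associated state $\Phi^\prime:=(\tV^\prime\otimes\tI_{\tilde{\rA}})|\Psi^{(\rA)})$ is likewise a pure purification of $R$ with the same purifying system $\rA\rC$, so the uniqueness clause of Axiom~\ref{axiom:7} provides a reversible channel $\tW$ on $\rA\rC$ with $|\Phi^\prime)=(\tW\otimes\tI_{\rB\tilde{\rA}})|\Phi)$. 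Feeding this back through $\Psi^{(\rA)}$ gives $(\tV^\prime\otimes\tI_{\tilde{\rA}})|\Psi^{(\rA)})=\bigl(((\tW\otimes\tI_\rB)\tV)\otimes\tI_{\tilde{\rA}}\bigr)|\Psi^{(\rA)})$, and since $\Psi^{(\rA)}$ is dynamically faithful for $\rA$ one may cancel it to obtain $\tV^\prime=(\tW\otimes\tI_\rB)\tV$, which is the claimed uniqueness up to reversible channels on $\rA\rC$.

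All the substantive ingredients — existence of a purification of $R$, uniqueness of purifications up to a reversible transformation on the purifying system, and the cancellation property of a dynamically faithful state — come directly from Axioms~\ref{axiom:atomicity} and~\ref{axiom:7} and Lemma~\ref{lemma:uniqPuri}, so no essentially new idea is needed. I expect the only delicate point to be the diagrammatic bookkeeping: correctly recognising the common base state of the two purifications in the first step as the marginal on $\tilde{\rA}$ (rather than on $\rB$ or on some composite), and keeping track of which of the output systems $\rA$, $\rB$, $\rC$ of $\tU$ get traced out at each stage.
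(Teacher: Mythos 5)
Your proposal is correct and follows essentially the same route as the paper's proof: take a purification $\Phi$ of $R$ via Axiom~\ref{axiom:7}, note that $\Phi$ and $\Psi^{(\rA)}$ purify the same marginal on $\tilde{\rA}$, apply Lemma~\ref{lemma:uniqPuri} to obtain the dilated channel $\tC=(e|_\rA\,\tU\,|\varphi_0)_{\rB\rC}$, discard $\rC$, and derive uniqueness from the uniqueness of purification combined with dynamical faithfulness of $\Psi^{(\rA)}$. The only difference is that you spell out explicitly why $\tV|\Psi^{(\rA)})$ is pure (Corollary~\ref{cor:parallelcompositionofpurestates} plus preservation of purity under reversible channels), a point the paper leaves implicit.
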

\begin{proof}
	Take a purification of $R$, say $\Psi_R\in\mathsf{St}(\rC\rB\tilde{\rA})$ for some purifying system $\rC$ (existence of such a purification is granted by Axiom \ref{axiom:7}). One has
	\begin{equation*}
		\begin{aligned}
			\Qcircuit @C=1.2em @R=.7em @! R { 
				\multiprepareC{2}{\Psi_R}&
				\poloFantasmaCn{\rC}\qw&
				\measureD{e}\\
				\pureghost{\Psi_R}&
				\poloFantasmaCn{\rB}\qw&
				\measureD{e}\\
				\pureghost{\Psi_R}&
				\poloFantasmaCn{\tilde{\rA}}\qw&
				\qw
			}
		\end{aligned}
		\, = \,
		\begin{aligned}
			\Qcircuit @C=1.2em @R=.7em @! R { 
				\multiprepareC{1}{R}&
				\poloFantasmaCn{\rB}\qw&
				\measureD{e}\\
				\pureghost{R}&
				\poloFantasmaCn{\tilde{\rA}}\qw&
				\qw\\
			}
		\end{aligned}
		\, = \,
		\begin{aligned}
			\Qcircuit @C=1.2em @R=.7em @! R { 
				\multiprepareC{1}{\Psi^{(\rA)}}&
				\poloFantasmaCn{\rA}\qw&
				\measureD{e}\\
				\pureghost{\Psi^{(\rA)}}&
				\poloFantasmaCn{\tilde{\rA}}\qw&
				\qw\\
			}
		\end{aligned}
	\end{equation*}	
	that is, the pure state $\Psi_R$ and $\Psi^{(\rA)}$ have the same marginal on system $\tilde{\rA}$. Applying the uniqueness of purification expressed by Lemma \ref{lemma:uniqPuri} one then obtains
	\begin{equation*}
		\begin{aligned}
			\Qcircuit @C=1.2em @R=.7em @! R { 
				\multiprepareC{2}{\Psi_R}&
				\poloFantasmaCn{\rC}\qw&
				\qw\\
				\pureghost{\Psi_R}&
				\poloFantasmaCn{\rB}\qw&
				\qw\\
				\pureghost{\Psi_R}&
				\poloFantasmaCn{\tilde{\rA}}\qw&
				\qw
			}
		\end{aligned}
		\, = \,
		\begin{aligned}
			\Qcircuit @C=1.2em @R=.7em @! R { 
				\prepareC{\varphi_0}&
				\poloFantasmaCn{\rB\rC}\qw&
				\multigate{1}{\tU}&
				\poloFantasmaCn{\rA}\qw&
				\measureD{e}\\
				\multiprepareC{1}{\Psi^{(\rA)}}&
				\poloFantasmaCn{\rA}\qw&
				\ghost{\tU}&
				\poloFantasmaCn{\rB\rC}\qw&
				\qw\\
				\pureghost{\Psi^{(\rA)}}&
				\qw&
				\poloFantasmaCn{\tilde{\rA}}\qw&
				\qw&\qw
			}
		\end{aligned}\, .
	\end{equation*}
	Applying the deterministic effect on system $\rC$ on both sides, one then proves Eq.~\eqref{eq:revchannel}. Moreover, if $\tV^\prime\coloneqq\tU^\prime|\varphi^\prime_0)_{\rB\rC}$ is channel such that Eq.~\eqref{eq:revchannel} holds, then the pure states $\tV|\Psi^{(\rA)})_{\rA\tilde{\rA}}$ and $\tV^\prime|\Psi^{(\rA)})_{\rA\tilde{\rA}}$ have the same marginal on system $\rB\tilde{\rA}$. Uniqueness of purification then implies
	\begin{equation*}
		\begin{aligned}
			\Qcircuit @C=1.2em @R=.7em @! R { 
				& &
				\amultigate{1}{\tV^\prime}&
				\poloFantasmaCn{\rA\rC}\qw&
				\qw\\
				\multiprepareC{1}{\Psi^{(\rA)}}&
				\poloFantasmaCn{\rA}\qw&
				\ghost{\tV^\prime}&
				\poloFantasmaCn{\rB}\qw&
				\qw\\
				\pureghost{\Psi^{(\rA)}}&
				\qw&
				\poloFantasmaCn{\tilde{\rA}}\qw&
				\qw&\qw
			}
		\end{aligned}
		\, = \,
		\begin{aligned}
			\Qcircuit @C=1.2em @R=.7em @! R { 
				& & 
				\amultigate{1}{\tV}&
				\poloFantasmaCn{\rA\rC}\qw&
				\gate{\tW}&
				\poloFantasmaCn{\rA\rC}\qw&
				\qw\\
				\multiprepareC{1}{\Psi^{(\rA)}}&
				\poloFantasmaCn{\rA}\qw&
				\ghost{\tV}&
				\poloFantasmaCn{\rB}\qw&
				\qw&\qw&\qw\\
				\pureghost{\Psi^{(\rA)}}&
				\qw&
				\poloFantasmaCn{\tilde{\rA}}\qw&
				\qw&\qw&\qw&\qw
			}
		\end{aligned}
	\end{equation*}
	for some reversible channel $\tW\in\mathsf{Transf}(\rA\rC)$. Since $\Psi^{(\rA)}$ is dynamically faithful for $\rA$, this implies $\tV^\prime=\tW\tV$.
\end{proof}
We now give the definition of dilatation and reversible dilatation.
\begin{defn}[Dilatation of a channel]
	A dilatation of channel $\tC\in\transf{A}{B}$ is a channel $\tV\in\transf{A}{BE}$ such that 
	\begin{equation*}
		\begin{aligned}
			\Qcircuit @C=1.2em @R=.7em @! R {
				&
				\poloFantasmaCn{\rA}\qw&
				\gate{\tC}&
				\poloFantasmaCn{\rB}\qw&
				\qw
			}
		\end{aligned}\, = \,
		\begin{aligned}
			\Qcircuit @C=1.2em @R=.7em @! R { 
				& &
				\amultigate{1}{\tV}&
				\poloFantasmaCn{\rE}\qw&
				\measureD{e}\\
				&
				\poloFantasmaCn{\rA}\qw&
				\ghost{\tV}&
				\poloFantasmaCn{\rB}\qw&
				\qw\\
			}
		\end{aligned}\, .
	\end{equation*}
	We refer to system $\rE$ as to the \textnormal{environment}.
\end{defn}
\begin{defn}[Reversible dilation]
	A dilatation $\tV\in\transf{A}{BE}$ is called reversible if there exists a system $\rE_0$ such that $\rA\rE_0\simeq\rB\rE$ and
	\begin{equation*}
		\begin{aligned}
			\Qcircuit @C=1.2em @R=.7em @! R { 
				& &
				\amultigate{1}{\tV}&
				\poloFantasmaCn{\rE}\qw&
				\qw\\
				&
				\poloFantasmaCn{\rA}\qw&
				\ghost{\tV}&
				\poloFantasmaCn{\rB}\qw&
				\qw\\
			}
		\end{aligned}
		\, = \,
		\begin{aligned}
			\Qcircuit @C=1.2em @R=.7em @! R { 
				\prepareC{\varphi_0}&
				\poloFantasmaCn{\rE_0}\qw&
				\multigate{1}{\tU}&
				\poloFantasmaCn{\rE}\qw&
				\qw\\
				&
				\poloFantasmaCn{\rA}\qw&
				\ghost{\tU}&
				\poloFantasmaCn{\rB}\qw&
				\qw\\
			}
		\end{aligned}
	\end{equation*}
	for some pure state $\varphi_0\in\mathsf{St}(\rE_0)$ and some reversible channel $\tU\in\mathsf{Transf}(\rA\rE_0\rightarrow\rB\rE)$.
\end{defn}
According to the above definitions, we have the following dilatation theorem:
\begin{theorem}[Reversible dilatation of channels]
	\label{thm:RevDila}
	Every channel $\tC\in\transf{A}{B}$ has a reversible dilatation $\tV\in\transf{A}{BE}$. If $\tV$, $\tV^\prime\in\transf{A}{BE}$ are two reversible dilatations of the same channel, then they are connected by a reversible transformation on the environment, namely
	\begin{equation*}
		\begin{aligned}
			\Qcircuit @C=1.2em @R=.7em @! R { 
				& &
				\amultigate{1}{\tV^\prime}&
				\poloFantasmaCn{\rE}\qw&
				\measureD{e}\\
				&
				\poloFantasmaCn{\rA}\qw&
				\ghost{\tV^\prime}&
				\poloFantasmaCn{\rB}\qw&
				\qw\\
			}
		\end{aligned}
		\, = \,
		\begin{aligned}
			\Qcircuit @C=1.2em @R=.7em @! R { 
				& &
				\amultigate{1}{\tV}&
				\poloFantasmaCn{\rE}\qw&
				\measureD{e}\\
				&
				\poloFantasmaCn{\rA}\qw&
				\ghost{\tV}&
				\poloFantasmaCn{\rB}\qw&
				\qw\\
			}
		\end{aligned}
	\end{equation*}
	\begin{equation*}
		\Longrightarrow
		\begin{aligned}
			\Qcircuit @C=1.2em @R=.7em @! R { 
				& &
				\amultigate{1}{\tV^\prime}&
				\poloFantasmaCn{\rE}\qw&
				\qw\\
				&
				\poloFantasmaCn{\rA}\qw&
				\ghost{\tV^\prime}&
				\poloFantasmaCn{\rB}\qw&
				\qw\\
			}
		\end{aligned}
		\, = \,
		\begin{aligned}
			\Qcircuit @C=1.2em @R=.7em @! R { 
				& &
				\amultigate{1}{\tV}&
				\poloFantasmaCn{\rE}\qw&
				\gate{\tW}&
				\poloFantasmaCn{\rE}\qw&
				\qw\\
				&
				\poloFantasmaCn{\rA}\qw&
				\ghost{\tV}&
				\qw&
				\poloFantasmaCn{\rB}\qw&
				\qw&\qw\\
			}
		\end{aligned}
	\end{equation*}
	for some reversible channel $\tW\in\mathsf{Transf}(\rE)$.
\end{theorem}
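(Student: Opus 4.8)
The plan is to reduce the channel statement to the state-level reversible dilatation already available from Lemma~\ref{lemma:lemma23}, using the dynamically faithful pure state $\Psi^{(\rA)}\in\mathsf{St}(\rA\tilde\rA)$ of Axiom~\ref{axiom:7} as a bridge between channels $\rA\to\rB$ and states of $\rB\tilde\rA$, i.e.\ via a channel--state duality.

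\emph{Existence.} Given $\tC\in\transf{A}{B}$, I would set $|R)_{\rB\tilde\rA}:=(\tC\otimes\tI_{\tilde\rA})|\Psi^{(\rA)})_{\rA\tilde\rA}$. Since $\tC$ is a channel one has $(e|_\rB\tC=(e|_\rA$, hence $(e|_\rB|R)=(e|_\rA|\Psi^{(\rA)})$, which is exactly the hypothesis required to invoke Lemma~\ref{lemma:lemma23}. That lemma produces a system $\rC$, a pure state $\varphi_0\in\mathsf{St}(\rB\rC)$ and a reversible channel $\tU\in\mathsf{Transf}(\rA\rB\rC)$ with $R=(e|_{\rA\rC}\,\tU|\varphi_0)_{\rB\rC}|\Psi^{(\rA)})$. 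Setting $\tV:=\tU|\varphi_0)_{\rB\rC}$ and reading its output $\rA\rB\rC$ as the pair (system $\rB$, environment $\rE:=\rA\rC$), the displayed identity says that $\tC$ and $(e|_\rE\tV$ agree when applied to $\Psi^{(\rA)}$, so dynamical faithfulness of $\Psi^{(\rA)}$ gives $(e|_\rE\tV=\tC$; that is, $\tV\in\transf{A}{BE}$ is a dilatation of $\tC$. It is reversible essentially by construction: $\tV$ equals $\tU$ acting on $\rA\rE_0$ with $\rE_0:=\rB\rC$ fed by the pure state $\varphi_0$, and $\rA\rE_0=\rA\rB\rC\simeq\rB\rE$, which is precisely the definition of a reversible dilatation.

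\emph{Uniqueness.} Let $\tV,\tV'\in\transf{A}{BE}$ be two reversible dilatations of the same $\tC$. I would apply them to the faithful state, getting $|\Sigma):=(\tV\otimes\tI_{\tilde\rA})|\Psi^{(\rA)})$ and $|\Sigma'):=(\tV'\otimes\tI_{\tilde\rA})|\Psi^{(\rA)})$ in $\mathsf{St}(\rB\rE\tilde\rA)$. Both are pure: writing a reversible dilatation as a reversible channel applied to the parallel composition of $\Psi^{(\rA)}$ with a pure ancilla, purity follows from Corollary~\ref{cor:parallelcompositionofpurestates} together with the elementary fact that a reversible channel maps pure states to pure states (if $\tU$ is reversible and $\tU\eta$ had a non-trivial refinement, then $\tU^{-1}$ would produce one of $\eta$). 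Since both $\tV$ and $\tV'$ dilate $\tC$, discarding $\rE$ gives $(e|_\rE|\Sigma)=(\tC\otimes\tI_{\tilde\rA})|\Psi^{(\rA)})=(e|_\rE|\Sigma')$, so $\Sigma$ and $\Sigma'$ are purifications of the \emph{same} state of $\rB\tilde\rA$ with the \emph{same} purifying system $\rE$. By the uniqueness-of-purification clause of Axiom~\ref{axiom:7} there is a reversible channel $\tW\in\mathsf{Transf}(\rE)$ with $|\Sigma')=(\tW\otimes\tI_{\rB\tilde\rA})|\Sigma)$, i.e.\ $(\tV'\otimes\tI_{\tilde\rA})|\Psi^{(\rA)})=((\tW\tV)\otimes\tI_{\tilde\rA})|\Psi^{(\rA)})$; dynamical faithfulness of $\Psi^{(\rA)}$ then upgrades this state-level equality to $\tV'=\tW\tV$, the asserted relation.

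\emph{Main obstacle.} Most of the effort is diagrammatic bookkeeping: keeping track of which tensor factor plays the role of the output $\rB$ and which of the environment $\rE$, and using that $e$ factorizes over composite systems. The two genuinely delicate points are (i) verifying the marginal condition $(e|_\rB R=(e|_\rA\Psi^{(\rA)})$ so that Lemma~\ref{lemma:lemma23} applies -- this is precisely where it matters that $\tC$ is deterministic -- and (ii) the remark that a reversible dilatation, evaluated on $\Psi^{(\rA)}$, is a \emph{pure} state, which is what makes the uniqueness clause of Axiom~\ref{axiom:7} available; once these are in place, dynamical faithfulness converts the state-level identities into the channel-level ones.
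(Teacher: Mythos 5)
Your proposal is correct and follows essentially the same route as the paper: existence by storing $\tC$ in the faithful state $\Psi^{(\rA)}$, checking the marginal condition via $(e|_\rB\tC=(e|_\rA$, invoking Lemma~\ref{lemma:lemma23}, and converting back with dynamical faithfulness; uniqueness via the purity of $\tV\Psi^{(\rA)}$, the uniqueness-of-purification clause of Axiom~\ref{axiom:7}, and dynamical faithfulness. The only cosmetic difference is that you unfold the uniqueness argument inline (which the paper delegates to the uniqueness clause of Lemma~\ref{lemma:lemma23}), and you make explicit the purity argument that the paper leaves implicit.
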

\begin{proof}
	Let us store the channel $\tC$ in the faithful state $\Psi^{(\rA)}\in\mathsf{St}(\rA\tilde{\rA})$, thus getting the state $R_\tC$:
	\begin{equation*}
		\begin{aligned}
			\Qcircuit @C=1.2em @R=.7em @! R { 
				\multiprepareC{1}{R_\tC}&
				\poloFantasmaCn{\rB}\qw&
				\qw\\
				\pureghost{R_\tC}&
				\poloFantasmaCn{\tilde{\rA}}\qw&
				\qw\\
			}
		\end{aligned}
		\, \coloneqq \,
		\begin{aligned}
			\Qcircuit @C=1.2em @R=.7em @! R { 
				\multiprepareC{1}{\Psi}&
				\poloFantasmaCn{\rA}\qw&
				\gate{\tC}&
				\poloFantasmaCn{\rB}\qw&
				\qw\\
				\pureghost{\Psi}&
				\qw&
				\poloFantasmaCn{\tilde{\rA}}\qw&
				\qw&\qw\\
			}
		\end{aligned}\, .
	\end{equation*}
	Since $\tC$ is a channel, it satisfy the normalization condition
	\begin{equation*}
		\begin{aligned}
			\Qcircuit @C=1.2em @R=.7em @! R { 
				&
				\poloFantasmaCn{\rA}\qw&
				\gate{\tC}&
				\poloFantasmaCn{\rB}\qw&
				\measureD{e}
			}
		\end{aligned}
		\, = \,
		\begin{aligned}
			\Qcircuit @C=1.2em @R=.7em @! R { 
				&
				\poloFantasmaCn{\rA}\qw&
				\measureD{e}
			}
		\end{aligned}\, ,
	\end{equation*}
	which implies
	\begin{equation*}
		\begin{aligned}
			\Qcircuit @C=1.2em @R=.7em @! R {
				\multiprepareC{1}{R_\tC}&
				\poloFantasmaCn{\rB}\qw&
				\measureD{e}\\
				\pureghost{R_\tC}&
				\poloFantasmaCn{\tilde{\rA}}\qw&
				\qw\\					
			}
		\end{aligned}
		\, = \,
		\begin{aligned}
			\Qcircuit @C=1.2em @R=.7em @! R {
				\multiprepareC{1}{\Psi^{(\rA)}}&
				\poloFantasmaCn{\rA}\qw&
				\gate{\tC}&
				\poloFantasmaCn{\rB}\qw&
				\measureD{e}\\
				\pureghost{\Psi^{(\rA)}}&
				\qw&
				\poloFantasmaCn{\tilde{\rA}}\qw&
				\qw&\qw\\					
			}
		\end{aligned}
		\, = \,
		\begin{aligned}
			\Qcircuit @C=1.2em @R=.7em @! R {
				\multiprepareC{1}{\Psi^{(\rA)}}&
				\poloFantasmaCn{\rA}\qw&
				\measureD{e}\\
				\pureghost{\Psi^{(\rA)}}&
				\poloFantasmaCn{\tilde{\rA}}\qw&
				\qw\\					
			}
		\end{aligned}\, .
	\end{equation*}				
	Now, applying Lemma \ref{lemma:lemma23} we obtain 
	\begin{equation*}
		\begin{aligned}
			\Qcircuit @C=1.2em @R=.7em @! R {
				\multiprepareC{1}{R_\tC}&
				\poloFantasmaCn{\rB}\qw&
				\qw\\
				\pureghost{R_\tC}&
				\poloFantasmaCn{\tilde{\rA}}\qw&
				\qw\\					
			}
		\end{aligned}
		\, = \,
		\begin{aligned}
			\Qcircuit @C=1.2em @R=.7em @! R {
				\prepareC{\varphi_0}&
				\poloFantasmaCn{\rB\rC}\qw&
				\multigate{1}{\tU}&
				\poloFantasmaCn{\rA\rC}\qw&
				\measureD{e}\\
				\multiprepareC{1}{\Psi^{(\rA)}}&
				\poloFantasmaCn{\rA}\qw&
				\ghost{\tU}&
				\poloFantasmaCn{\rB}\qw&
				\qw\\
				\pureghost{\Psi^{(\rA)}}&
				\qw&
				\poloFantasmaCn{\tilde{\rA}}\qw&
				\qw&\qw\\					
			}
		\end{aligned}\, .
	\end{equation*}	
	Since $\Psi^{(\rA)}$ is dynamically faithful for system $\rA$, this implies 
	\begin{equation*}
		\begin{aligned}
			\Qcircuit @C=1.2em @R=.7em @! R {
				&
				\poloFantasmaCn{\rA}\qw&
				\gate{\tC}&
				\poloFantasmaCn{\rB}\qw&
				\qw\\					
			}
		\end{aligned}
		\, = \,
		\begin{aligned}
			\Qcircuit @C=1.2em @R=.7em @! R {
				\prepareC{\varphi_0}&
				\poloFantasmaCn{\rB\rC}\qw&
				\multigate{1}{\tU}&
				\poloFantasmaCn{\rA\rC}\qw&
				\measureD{e}\\
				&
				\poloFantasmaCn{\rA}\qw&
				\ghost{\tU}&
				\poloFantasmaCn{\rB}\qw&
				\qw\\				
			}
		\end{aligned}\, .
	\end{equation*}	
	Therefore, $\tV \coloneqq \tU|\varphi_0)_{\rB\rC}$ is a reversible dilatation of $\tC$, with $\rE_0\coloneqq\rB\rC$ and $\rE\coloneqq\rA\rC$. Finally, the uniqueness clause in Lemma \ref{lemma:lemma23} implies uniqueness of the dilatation.
\end{proof}
Moreover, two reversible dilatations of the same channel with different environments are related as follows.
\begin{lemma}
	\label{lemma:corollary}
	Let $\tV\in\transf{A}{BE}$ and $\tV^\prime\in\mathsf{Transf}(\rA\rightarrow\rB\rE^\prime)$ be two reversible dilatations of the same channel $\tC$, with generally different environment $\tE$ and $\rE^\prime$. Then there is a channel $\tL$ from $\rE$ to $\rE\rE^\prime$ such that
	\begin{equation*}
		\begin{aligned}
			\Qcircuit @C=1.2em @R=.7em @! R {
				& &
				\amultigate{1}{\tV^\prime}&
				\poloFantasmaCn{\rE^\prime}\qw&
				\qw\\
				&
				\poloFantasmaCn{\rA}\qw&
				\ghost{\tV^\prime}&
				\poloFantasmaCn{\rB}\qw&
				\qw\\								
			}
		\end{aligned}
		\, = \,
		\begin{aligned}
			\Qcircuit @C=1.2em @R=.7em @! R {
				& & & &
				\amultigate{1}{\tL}&
				\poloFantasmaCn{\rE}\qw&
				\measureD{e}\\
				& &
				\amultigate{1}{\tV}&
				\poloFantasmaCn{\rE}\qw&
				\ghost{\tL}&
				\poloFantasmaCn{\rE^\prime}\qw&
				\qw\\
				&
				\poloFantasmaCn{\rA}\qw&
				\ghost{\tV}&
				\qw&
				\poloFantasmaCn{\rB}\qw&
				\qw&\qw\\
			}
		\end{aligned}\, .
	\end{equation*}	
	The channel $\tL$ has the form
	\begin{equation*}
		\begin{aligned}
			\Qcircuit @C=1.2em @R=.7em @! R {
				& &
				\amultigate{1}{\tL}&						
				\poloFantasmaCn{\rE}\qw&
				\qw\\
				&
				\poloFantasmaCn{\rE}\qw&
				\ghost{\tL}&
				\poloFantasmaCn{\rE^\prime}\qw&
				\qw\\
			}
		\end{aligned}
		\, = \,
		\begin{aligned}
			\Qcircuit @C=1.2em @R=.7em @! R {
				\prepareC{\eta_0}&
				\poloFantasmaCn{\rE^\prime}\qw&
				\multigate{1}{\tU}&
				\poloFantasmaCn{\rE}\qw&
				\qw\\
				&
				\poloFantasmaCn{\rE}\qw&
				\ghost{\tU}&
				\poloFantasmaCn{\rE^\prime}\qw&
				\qw\\
			}
		\end{aligned}
	\end{equation*}	
	for some pure state $\eta_0$ and some reversible transformation $\tU\in\mathsf{Transf}(\rE\rE^\prime)$.
\end{lemma}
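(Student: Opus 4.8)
The plan is to deduce the statement from Theorem~\ref{thm:RevDila}, which already settles uniqueness of the reversible dilatation \emph{when the environment is fixed}. The only obstacle is that here $\tV$ and $\tV^\prime$ live over different environments $\rE$ and $\rE^\prime$; I would remove this obstacle by padding both of them into reversible dilatations of $\tC$ over the common environment $\rE\rE^\prime$, apply the fixed-environment theorem, and then project back down.

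Concretely, fix deterministic pure states $\eta_0\in\mathsf{St}(\rE^\prime)$ and $\zeta_0\in\mathsf{St}(\rE)$ and set $\widetilde{\tV}:=\tV\otimes|\eta_0)_{\rE^\prime}$ and $\widetilde{\tV}^\prime:=\tV^\prime\otimes|\zeta_0)_{\rE}$, both regarded as transformations from $\rA$ to $\rB\rE\rE^\prime$. Discarding the whole environment $\rE\rE^\prime$ returns $\tC$ in either case (using $(e|\eta_0)=(e|\zeta_0)=1$ together with the fact that $\tV,\tV^\prime$ dilate $\tC$), so both are dilatations of $\tC$. They are moreover \emph{reversible} dilatations: writing $\tV=\tU_V(\tI_\rA\otimes|\varphi_V)_{\rF})$ with $\rA\rF\simeq\rB\rE$, $\tU_V$ a reversible channel and $\varphi_V$ pure, one gets $\widetilde{\tV}=(\tU_V\otimes\tI_{\rE^\prime})(\tI_\rA\otimes|\varphi_V)_\rF\otimes|\eta_0)_{\rE^\prime})$, where $|\varphi_V)_\rF\otimes|\eta_0)_{\rE^\prime}$ is pure by Corollary~\ref{cor:parallelcompositionofpurestates} (hence by Axiom~\ref{axiom:atomicity}), $\tU_V\otimes\tI_{\rE^\prime}$ is a reversible channel, and $\rA\rF\rE^\prime\simeq\rB\rE\rE^\prime$ by tensoring the operational equivalence $\rA\rF\simeq\rB\rE$ with $\rE^\prime$. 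The identical computation applies to $\widetilde{\tV}^\prime$.

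Now $\widetilde{\tV}$ and $\widetilde{\tV}^\prime$ are two reversible dilatations of the single channel $\tC$ with the \emph{same} environment $\rE\rE^\prime$, so Theorem~\ref{thm:RevDila} gives a reversible channel $\tW\in\mathsf{Transf}(\rE\rE^\prime)$ with $\widetilde{\tV}^\prime=(\tI_\rB\otimes\tW)\widetilde{\tV}$, i.e. $\tV^\prime\otimes|\zeta_0)_{\rE}=(\tI_\rB\otimes\tW)(\tV\otimes|\eta_0)_{\rE^\prime})$. Setting $\tL:=\tW\circ(\tI_{\rE}\otimes|\eta_0)_{\rE^\prime}):\rE\to\rE\rE^\prime$, one has $(\tI_\rB\otimes\tL)\tV=\tV^\prime\otimes|\zeta_0)_{\rE}$, and discarding the $\rE$-component of the output (the copy of $\rE$ carried by $|\zeta_0)$ on the right-hand side, equivalently the first output wire of $\tL$ on the left) yields $(\tI_\rB\otimes[(e|_{\rE}\otimes\tI_{\rE^\prime})\tL])\tV=\tV^\prime$, which is the claimed identity. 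The stated form of $\tL$ is then immediate with $\tU:=\tW$ and the pure state $\eta_0$ chosen above.

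All the real content is in this padding trick, so I do not expect a genuine obstacle: the points needing care are purely bookkeeping, namely verifying $\rA\rF\rE^\prime\simeq\rB\rE\rE^\prime$, keeping track of which copy of $\rE$ is discarded at the last step, and making sure $\eta_0,\zeta_0$ may be taken deterministic so that $(e|\eta_0)=(e|\zeta_0)=1$ (as is done implicitly in the preceding lemmas). No ingredient beyond Axiom~\ref{axiom:atomicity} (purity of parallel composites of pure states) and Theorem~\ref{thm:RevDila} is needed; the slogan is ``pad both dilations up to the common environment $\rE\rE^\prime$, invoke the fixed-environment uniqueness theorem, then pad back down''.
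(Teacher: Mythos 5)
Your proof is correct, but it takes a genuinely different route from the paper's. The paper's own proof is a one-liner: apply $\tV$ and $\tV^\prime$ to the dynamically faithful state $\Psi^{(\rA)}$, observe that the two resulting pure states are purifications of the same state $R_\tC$ with purifying systems $\rE$ and $\rE^\prime$, invoke Lemma~\ref{lemma:uniqPuri} to get the channel $\tL$ connecting them, and then use dynamical faithfulness to lift the identity from states back to channels. You instead stay entirely at the level of channels: you pad $\tV$ and $\tV^\prime$ with deterministic pure ancilla preparations so that both become reversible dilatations of $\tC$ over the common environment $\rE\rE^\prime$ (checking, correctly, that the padded maps still satisfy the definition of reversible dilation via Corollary~\ref{cor:parallelcompositionofpurestates} and the operational equivalence $\rA\rF\rE^\prime\simeq\rB\rE\rE^\prime$), then apply the equal-environment uniqueness clause of Theorem~\ref{thm:RevDila} and discard the spurious $\rE$ wire. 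The two arguments are cousins --- the ``pad with a pure state to equalize the systems, then use same-system uniqueness'' trick is exactly how Lemma~\ref{lemma:uniqPuri} itself is proved from Axiom~\ref{axiom:7} --- but you deploy it one level up, which buys you a proof that never mentions the faithful state or dynamical faithfulness explicitly (they are already absorbed into Theorem~\ref{thm:RevDila}). Your bookkeeping of which copy of $\rE$ is discarded, and the resulting form $\tL=\tW(\tI_\rE\otimes|\eta_0)_{\rE^\prime})$, matches the statement. The only points worth flagging are the ones you already flagged yourself: the implicit normalization $(e|\eta_0)=(e|\zeta_0)=1$ of pure states (standard in this causal setting and used implicitly throughout the paper) and the silent use of commutativity of system composition to identify $\rE^\prime\rE$ with $\rE\rE^\prime$.
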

\begin{proof}
	Apply $\tV$ and $\tV^\prime$ to the faithful state $\Psi^{(\rA)}$ and then use the uniqueness of purification stated in Lemma \ref{lemma:uniqPuri}.
\end{proof}

\subsection{Casually ordered channels and channels with memory}
The last step before the proof of the theorem is the notion of casually ordered channels and channels with memory.
\begin{defn}[Casually ordered bipartite channel]
	A bipartite channel $\tC$ from $\rA_1\rA_2$ to $\rB_1\rB_2$ is casually ordered if there is a channel $\tD$ from $\rA_1$ to $\rB_1$ such that $(e|_{\rB_2}\tC=\tD\otimes(e|_{\rA_2}$. Diagrammatically,
	\begin{equation}
		\label{eq:casuallyorderedbipartitechannels}
		\begin{aligned}
			\Qcircuit @C=1.2em @R=.7em @! R {
				&
				\poloFantasmaCn{\rA_1}\qw&
				\multigate{1}{\tC}&
				\poloFantasmaCn{\rB_1}\qw&
				\qw\\
				&
				\poloFantasmaCn{\rA_2}\qw&
				\ghost{\tC}&
				\poloFantasmaCn{\rB_2}\qw&
				\measureD{e}
			}
		\end{aligned}
		\, = \,
		\begin{aligned}
			\Qcircuit @C=1.2em @R=.7em @! R {
				&
				\poloFantasmaCn{\rA_1}\qw&
				\gate{\tD}&
				\poloFantasmaCn{\rB_1}\qw&
				\qw\\
				&
				\qw&
				\poloFantasmaCn{\rA_2}\qw&
				\qw&
				\measureD{e}
			}	
		\end{aligned}\,.
	\end{equation}
\end{defn}
Eq.~\eqref{eq:casuallyorderedbipartitechannels} means that the channel $\tC$ does not allow for signaling from the input system $\rA_2$ to the output system $\rB_1$. In a relativistic context, this can be interpreted as $\rB_1$ being outside the casual future of $\rA_2$.
\begin{defn}[Sequence of two channel with memory]
	A bipartite channel $\tC$ from $\rA_1\rA_2$ to $\rB_1\rB_2$ can be realized as a \textnormal{sequence of two channels with memory} if there exist two system $\rE_1,\,\rE_2$, called \textnormal{memory systems}, and two channels $\tC_1\in\mathsf{Transf}(\rA_1\rightarrow\rB_1\rE_1)$ and $\tC_2\in\mathsf{Transf}(\rA_2\rE_1\rightarrow\rB_2\rE_2)$ such that $\tC=(e|_{\rE_2}(\tC_2\otimes\tI_{\rB_1})(\tI_{\rA_2}\otimes\tC_1)$. Diagrammatically,
	\begin{equation}
		\label{eq:channelwithmemory}
		\begin{aligned}
			\Qcircuit @C=1.2em @R=.7em @! R {
				&
				\poloFantasmaCn{\rA_1}\qw&
				\multigate{1}{\tC}&
				\poloFantasmaCn{\rB_1}\qw&
				\qw\\
				&
				\poloFantasmaCn{\rA_2}\qw&
				\ghost{\tC}&
				\poloFantasmaCn{\rB_2}\qw&
				\qw\\
			}
		\end{aligned}
		\, = \,
		\begin{aligned}
			\Qcircuit @C=1.2em @R=.7em @! R {
				&
				\poloFantasmaCn{\rA_1}\qw&
				\multigate{1}{\tC_1}&
				\poloFantasmaCn{\rB_1}\qw&
				\qw
				& &
				\poloFantasmaCn{\rA_2}\qw&
				\multigate{1}{\tC_2}&
				\poloFantasmaCn{\rB_2}\qw&
				\qw\\
				& & 
				\aghost{\tC_1}&
				\qw&
				\poloFantasmaCn{\rE_1}\qw&
				\qw&\qw&
				\ghost{\tC_2}&
				\poloFantasmaCn{\rE_2}\qw&
				\measureD{e}\\
			}
		\end{aligned}\,.
	\end{equation}
\end{defn}
For causally ordered bipartite channels the dilatation theorem implies the following result:
\begin{theorem}[Casual ordering is memory]
	A bipartite channel $\tC$ from $\rA_1\rA_2$ to $\rB_1\rB_2$ is casually ordered if and only if it can be realized as a sequence of two channels with memory. Moreover, the channels $\tC_1,\,\tC_2$ in Eq.~\eqref{eq:channelwithmemory} can be always chosen such that $\tC_2\tC_1$ is a reversible dilatation of $\tC$.
\end{theorem}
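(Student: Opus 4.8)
I would prove the two implications separately. The \emph{if} direction is immediate from causality: suppose $\tC=(e|_{\rE_2}(\tC_2\otimes\tI_{\rB_1})(\tI_{\rA_2}\otimes\tC_1)$ as in Eq.~\eqref{eq:channelwithmemory}, with channels $\tC_1\in\mathsf{Transf}(\rA_1\rightarrow\rB_1\rE_1)$ and $\tC_2\in\mathsf{Transf}(\rA_2\rE_1\rightarrow\rB_2\rE_2)$. Composing with $(e|_{\rB_2}$ and using that $\tC_2$ is a channel, so that $(e|_{\rB_2\rE_2}\tC_2=(e|_{\rA_2\rE_1}$ (completeness of tests, together with $e_{\rB_2\rE_2}=e_{\rB_2}\otimes e_{\rE_2}$), one gets $(e|_{\rB_2}\tC=(e|_{\rA_2}\otimes\bigl[(e|_{\rE_1}\tC_1\bigr]=\tD\otimes(e|_{\rA_2}$, where $\tD:=(e|_{\rE_1}\tC_1\in\mathsf{Transf}(\rA_1\rightarrow\rB_1)$ is again a channel. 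This is exactly Eq.~\eqref{eq:casuallyorderedbipartitechannels}. The \emph{only if} direction is the substantial one; it rests on the reversible dilatation theorem (Theorem~\ref{thm:RevDila}) and on Lemma~\ref{lemma:corollary}.

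So assume $\tC$ casually ordered, $(e|_{\rB_2}\tC=\tD\otimes(e|_{\rA_2}$ for a channel $\tD$ from $\rA_1$ to $\rB_1$. First I would fix a reversible dilatation $\tC_1:=\tV_\tD\in\mathsf{Transf}(\rA_1\rightarrow\rB_1\rE_1)$ of $\tD$ (Theorem~\ref{thm:RevDila}); discarding $\rE_1\rA_2$ shows at once that $\tV_\tD\otimes\tI_{\rA_2}$ is a reversible dilatation of the channel $\tD\otimes(e|_{\rA_2}$ from $\rA_1\rA_2$ to $\rB_1$, with environment $\rE_1\rA_2$. Next I would take a reversible dilatation $\tV\in\mathsf{Transf}(\rA_1\rA_2\rightarrow\rB_1\rB_2\rE_2)$ of the full channel $\tC$ (again Theorem~\ref{thm:RevDila}), so $(e|_{\rE_2}\tV=\tC$; discarding also $\rB_2$ and using casual ordering gives $(e|_{\rB_2\rE_2}\tV=\tD\otimes(e|_{\rA_2}$, hence $\tV$ is at the same time a reversible dilatation of $\tD\otimes(e|_{\rA_2}$, now with environment $\rB_2\rE_2$. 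Since $\tV$ and $\tV_\tD\otimes\tI_{\rA_2}$ are reversible dilatations of the \emph{same} channel but with different environments, Lemma~\ref{lemma:corollary} provides a channel $\tL$ from $\rE_1\rA_2$ to $(\rE_1\rA_2)(\rB_2\rE_2)$ such that $\tV=(\tI_{\rB_1}\otimes\tC_2)(\tV_\tD\otimes\tI_{\rA_2})$, where $\tC_2:=(e|_{\rE_1\rA_2}\tL\in\mathsf{Transf}(\rA_2\rE_1\rightarrow\rB_2\rE_2)$. But the right-hand side is precisely the composite $\tC_2\tC_1=(\tC_2\otimes\tI_{\rB_1})(\tI_{\rA_2}\otimes\tC_1)$ of Eq.~\eqref{eq:channelwithmemory}; hence $\tC_2\tC_1=\tV$ is a reversible dilatation of $\tC$ and $(e|_{\rE_2}(\tC_2\tC_1)=(e|_{\rE_2}\tV=\tC$, so $\tC$ is realized as a sequence of two channels with memory and the ``moreover'' clause holds automatically.

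The step I expect to be the main obstacle is the choice of memory system in the forward direction: one must resist letting $\tC_1$ be a \emph{minimal} dilatation of $\tD$ (which can retain no information at all about $\rA_1$, e.g.\ when $\tD=\tI$), and instead use a reversible one, so that $\rE_1$ keeps enough for $\tC_2$ to rebuild $\tC$. Making this work is exactly the bookkeeping above --- noticing that every reversible dilatation of $\tC$ is simultaneously a reversible dilatation of its $\rB_1$-marginal $\tD\otimes(e|_{\rA_2}$, invoking Lemma~\ref{lemma:corollary} to extract $\tC_2$, and keeping the environment identifications ($\rE_2$ coming from the dilatation of $\tC$, $\rE_1$ from that of $\tD$) consistent throughout.
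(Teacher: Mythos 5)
Your proof is correct and follows essentially the same route as the paper's: both directions hinge on observing that a reversible dilatation of $\tC$ and $\tV_1\otimes\tI_{\rA_2}$ (with $\tV_1$ a reversible dilatation of $\tD$) are reversible dilatations of the same marginal channel $\tD\otimes(e|_{\rA_2}$, and then invoking Lemma~\ref{lemma:corollary} to extract $\tC_2$. The only (harmless) difference is bookkeeping: the paper keeps $\tL$ itself as $\tC_2$ and enlarges the memory to $\rE_2:=\rE\rE_1\rA_2$, whereas you absorb the discarding of $\rE_1\rA_2$ into $\tC_2$ so that $\tC_2\tC_1$ equals the chosen reversible dilatation $\tV$ outright.
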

\begin{proof}
	If Eq.~\eqref{eq:channelwithmemory} holds, the channel $\tC$ is clearly casually ordered, with the channel $\tD$ given by $\tD\coloneqq(e|_{\rE_1}\tC_1$. Conversely, suppose that $\tC$ is casually ordered. Take a reversible dilatation of $\tC$, say $\tV\in\mathsf{Transf}(\rA_1\rA_2\rightarrow\rB_1\rB_2\rE)$, and a reversible dilatation of $\tD$, say $\tV_1\in\mathsf{Transf}(\rA_1\rightarrow\rB_1\rE_1)$. Now by the definition of casually ordered channels (Eq.~\eqref{eq:casuallyorderedbipartitechannels}) we have
	\begin{equation*}
		\begin{aligned}
			\Qcircuit @C=1.2em @R=.7em @! R {
				&
				\poloFantasmaCn{\rA_1}\qw&
				\multigate{2}{\tV}&
				\poloFantasmaCn{\rB_1}\qw&
				\qw\\
				&
				\poloFantasmaCn{\rA_2}\qw&
				\ghost{\tV}&
				\poloFantasmaCn{\rB_2}\qw&
				\measureD{e}\\
				& &
				\aghost{\tV}&
				\poloFantasmaCn{\rE}\qw&
				\measureD{e}
			}
		\end{aligned}
		\, = \,
		\begin{aligned}
			\Qcircuit @C=1.2em @R=.7em @! R {
				&
				\poloFantasmaCn{\rA_1}\qw&
				\multigate{1}{\tV_1}&
				\poloFantasmaCn{\rB_1}\qw&
				\qw\\
				& &
				\aghost{\tV_1}&
				\poloFantasmaCn{\rE_1}\qw&
				\measureD{e}\\
				&
				\qw&
				\poloFantasmaCn{\rA_2}\qw&
				\qw&
				\measureD{e}
			}
		\end{aligned}\, .
	\end{equation*}
	This means that $\tV$ and $\tV_1\otimes\tI_{\rA_2}$ are two reversible dilatations of the same channel. By the uniqueness of the reversible dilatation expressed by Lemma \ref{lemma:corollary} we then obtain
	\begin{equation*}
		\begin{aligned}
			\Qcircuit @C=1.2em @R=.7em @! R {
				&
				\poloFantasmaCn{\rA_1}\qw&
				\multigate{2}{\tV}&
				\poloFantasmaCn{\rB_1}\qw&
				\qw\\
				&
				\poloFantasmaCn{\rA_2}\qw&
				\ghost{\tV}&
				\poloFantasmaCn{\rB_2}\qw&
				\qw\\
				& &
				\aghost{\tV}&
				\poloFantasmaCn{\rE}\qw&
				\qw
			}
		\end{aligned}
		\, = \,
		\begin{aligned}
			\Qcircuit @C=1.2em @R=.7em @! R {
				&
				\poloFantasmaCn{\rA_1}\qw&
				\multigate{1}{\tV_1}&
				\qw&
				\poloFantasmaCn{\rB_1}\qw&
				\qw&
				\qw\\
				& &
				\aghost{\tV_1}&
				\poloFantasmaCn{\rE_1}\qw&
				\multigate{2}{\tL}&
				\poloFantasmaCn{\rB_2}\qw&
				\qw\\
				&
				\qw&
				\poloFantasmaCn{\rA_2}\qw&
				\qw&
				\ghost{\tL}&
				\poloFantasmaCn{\rE}\qw&
				\qw\\
				& & & &
				\aghost{\tL}&
				\poloFantasmaCn{\rE_1\rA_2}\qw&
				\measureD{e}
			}
		\end{aligned}\, .
	\end{equation*}
	Once we have defined $\rE_2\coloneqq\rE\rE_1\rA_2$ it only remains to observe that the above diagram is nothing but the thesis, with $\tC_1=\tV_1$ and $\tC_2=\tL$. By construction, $\tC_2\tC_1$ is a reversible dilatation of $\tC$.
\end{proof}
The definition of casually ordered bipartite channel is easily extended to the multipartite case. Here we will only report the definition and the two main theorems of the theory. For their demonstrations we remand to the original article \cite{chiribella} since they are still valid also from our three Axiom as starting point.
\begin{defn}[Causally ordered channel]
	An N-partite channel $\tC^{(N)}$ from $\rA_1\dots\rA_N$ to $\rB_1\dots\rB_N$ is causally ordered if for every $k\le N$ there is a channel $\tC^{(k)}$ from $\rA_1\dots\rA_k$ to $\rB_1\dots\rB_k$ such that
	\begin{equation*}
		\begin{aligned}
			\Qcircuit @C=1.2em @R=.7em @! R {
				&
				\poloFantasmaCn{\rA_1}\qw&
				\multigate{5}{\tC^{(N)}}&
				\poloFantasmaCn{\rB_1}\qw&
				\qw\\
				& \vdots &
				\aghost{\tC^{(N)}}&
				\vdots&
				\\
				&
				\poloFantasmaCn{\rA_k}\qw&
				\ghost{\tC^{(N)}}&
				\poloFantasmaCn{\rB_k}\qw&
				\qw\\
				&
				\poloFantasmaCn{\rA_{k+1}}\qw&
				\ghost{\tC^{(N)}}&
				\poloFantasmaCn{\rB_{k+1}}\qw&
				\measureD{e}\\
				& \vdots &
				\aghost{\tC^{(N)}}&
				\vdots&
				\\
				&
				\poloFantasmaCn{\rA_{N}}\qw&
				\ghost{\tC^{(N)}}&
				\poloFantasmaCn{\rB_{N}}\qw&
				\measureD{e}\\
			}
		\end{aligned}
		\, = \,
		\begin{aligned}
			\Qcircuit @C=1.2em @R=.7em @! R {
				&
				\poloFantasmaCn{\rA_1}\qw&
				\multigate{2}{\tC^{(k)}}&
				\poloFantasmaCn{\rB_1}\qw&
				\qw\\
				& \vdots &
				\aghost{\tC^{(k)}}&
				\vdots&
				\\
				&
				\poloFantasmaCn{\rA_1}\qw&
				\ghost{\tC^{(k)}}&
				\poloFantasmaCn{\rB_1}\qw&
				\qw\\
				&
				\qw&
				\poloFantasmaCn{\rA_{k+1}}\qw&
				\qw&
				\measureD{e}\\
				&\vdots& & \vdots&\\
				&
				\qw&
				\poloFantasmaCn{\rA_{N}}\qw&
				\qw&
				\measureD{e}\\
			}
		\end{aligned}\, .
	\end{equation*}
\end{defn}
The definition means that the output systems $\rB_1\dots\rB_k$ are outside the casual future of any input system $\rA_l$ with $l>k$. Causally ordered channels can be characterized as follows.
\begin{theorem}[Causal ordering is memory for general N]
	\label{theorem:causalorderingforgeneralN}
	An N-partite channels $\tC^{(N)}$ from $\rA_1\dots\rA_N$ to $\rB_1\dots\rB_N$ is causally ordered if and only if there exist a sequence of memory systems $\{\rE_k\}_{k=0}^N$ with $\rE_0=\rI$ and a sequence of channels $\{\tV_k\}_{k=1}^N$, with $\tV_k\in\mathsf{Transf}(\rA_k\rE_{k-1}\rightarrow\rB_k\rE_k)$ such that
	\begin{equation*}
		\begin{aligned}
			\Qcircuit @C=1.2em @R=.7em @! R {
				&
				\poloFantasmaCn{\rA_1}\qw&
				\multigate{2}{\tC^{(N)}}&
				\poloFantasmaCn{\rB_1}\qw&
				\qw\\
				& \vdots &
				\aghost{\tC^{(N)}}&
				\vdots&
				\\
				&
				\poloFantasmaCn{\rA_N}\qw&
				\ghost{\tC^{(N)}}&
				\poloFantasmaCn{\rB_N}\qw&
				\qw
			}
		\end{aligned}
		\, = \,
		\begin{aligned}
			\Qcircuit @C=1.1em @R=.7em @! R {
				&
				\poloFantasmaCn{\rA_1}\qw&
				\multigate{1}{\tV_1}&
				\poloFantasmaCn{\tB_1}\qw&
				\qw& & 
				\poloFantasmaCn{\rA_2}\qw&
				\multigate{1}{\tV_2}&
				\poloFantasmaCn{\rB_2}\qw&
				\qw&
				\cdots& &
				\poloFantasmaCn{\rA_N}\qw&
				\multigate{1}{\tV_N}&
				\poloFantasmaCn{\rB_N}\qw&
				\qw\\
				& &
				\aghost{\tV_1}&
				\qw&
				\poloFantasmaCn{\rE_1}\qw&
				\qw&
				\qw&
				\ghost{\tV_2}&
				\poloFantasmaCn{\rE_2}\qw&
				\qw&
				\cdots& &
				\poloFantasmaCn{\rE_{N-1}}\qw&
				\ghost{\tV_N}&
				\poloFantasmaCn{\rE_N}\qw&
				\measureD{e}\\
			}
		\end{aligned}\, .
	\end{equation*}
	Moreover, $\tV_N\dots\tV_1$ is a reversible dilation of $\tC^{(N)}$.
\end{theorem}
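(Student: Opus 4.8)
The plan is to prove the two implications separately. The direction ``memory decomposition $\Rightarrow$ causal ordering'' is a short computation: given $\tC^{(N)}=(e|_{\rE_N}\tV_N\cdots\tV_1$ as in the statement, fix $k\le N$, apply the deterministic effect to the output wires $\rB_{k+1},\dots,\rB_N$, and use the normalization $(e|_{\rB_j\rE_j}\tV_j=(e|_{\rA_j\rE_{j-1}}$ successively for $j=N,N-1,\dots,k+1$ to collapse the tail of the cascade; what remains is precisely the channel $\tC^{(k)}:=(e|_{\rE_k}\tV_k\cdots\tV_1$ from $\rA_1\dots\rA_k$ to $\rB_1\dots\rB_k$, which witnesses causal ordering. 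The substantive direction I would prove by induction on $N$, reducing each step to the bipartite case already established (``Casual ordering is memory'') together with the reversible dilatation theorem (Theorem~\ref{thm:RevDila}) and the comparison lemma for dilatations with different environments (Lemma~\ref{lemma:corollary}).

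For the induction the base case $N=1$ is immediate, since every channel has a reversible dilatation by Theorem~\ref{thm:RevDila} (equivalently one may take $N=2$, which is the bipartite statement proved above). Assuming the claim for $N-1$, let $\tC^{(N)}$ be causally ordered. Taking $k=N-1$ in the definition furnishes a channel $\tC^{(N-1)}$ from $\rA_1\dots\rA_{N-1}$ to $\rB_1\dots\rB_{N-1}$ with $(e|_{\rB_N}\tC^{(N)}=\tC^{(N-1)}\otimes(e|_{\rA_N}$; moreover the remaining witnesses $\tC^{(k)}$ for $k\le N-1$ show, after tensoring with $(e|_{\rA_N}$ and cancelling it, that $\tC^{(N-1)}$ is itself causally ordered. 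By the induction hypothesis there are memory systems $\rE_1,\dots,\rE_{N-1}$ and channels $\tV_k\in\mathsf{Transf}(\rA_k\rE_{k-1}\rightarrow\rB_k\rE_k)$ ($k=1,\dots,N-1$, with $\rE_0=\rI$) such that $\tC^{(N-1)}=(e|_{\rE_{N-1}}\tV_{N-1}\cdots\tV_1$ and, crucially, $\tW:=\tV_{N-1}\cdots\tV_1$ is a \emph{reversible} dilatation of $\tC^{(N-1)}$ with environment $\rE_{N-1}$.

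Next I would take any reversible dilatation $\tV\in\mathsf{Transf}(\rA_1\dots\rA_N\rightarrow\rB_1\dots\rB_N\rE)$ of $\tC^{(N)}$ (Theorem~\ref{thm:RevDila}). Discarding $\rB_N$ and $\rE$ from $\tV$ produces $\tC^{(N-1)}\otimes(e|_{\rA_N}$; on the other hand $\tW\otimes\tI_{\rA_N}$ also produces $\tC^{(N-1)}\otimes(e|_{\rA_N}$ once its environment $\rE_{N-1}\rA_N$ is discarded, and it is reversible because $\tW$ is. Hence $\tV$ and $\tW\otimes\tI_{\rA_N}$ are two reversible dilatations of the one channel $\tC^{(N-1)}\otimes(e|_{\rA_N}$, with environments $\rB_N\rE$ and $\rE_{N-1}\rA_N$ respectively, and Lemma~\ref{lemma:corollary} then supplies a channel $\tL$ from $\rE_{N-1}\rA_N$ to $\rE_{N-1}\rA_N\rB_N\rE$, of the explicit reversible form stated there, such that feeding the environment output of $\tW\otimes\tI_{\rA_N}$ into $\tL$ and discarding $\rE_{N-1}\rA_N$ reconstructs $\tV$. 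Setting $\tV_N:=\tL$ and $\rE_N:=\rE_{N-1}\rA_N\rE$ (so that discarding $\rE_N$ performs both the discard prescribed by Lemma~\ref{lemma:corollary} and the discard of the environment $\rE$ of $\tV$), one reads off the decomposition $\tC^{(N)}=(e|_{\rE_N}\tV_N\cdots\tV_1$ with the prescribed input and output systems; and $\tV_N\cdots\tV_1=\tL\,(\tW\otimes\tI_{\rA_N})$ is a reversible dilatation of $\tC^{(N)}$ because it differs from the reversible dilatation $\tV$ only by the reversible channel sitting inside $\tL$.

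The hard part will be the bookkeeping of the memory systems in the inductive step: one must be meticulous about exactly which wires are absorbed into $\rE_N$, about the orientation of the inputs and outputs of $\tL$, and about checking that the reconstructed circuit reproduces $\tC^{(N)}$ itself rather than merely a marginal of it --- which is precisely where the reversibility clause of Theorem~\ref{thm:RevDila} and the explicit reversible form of $\tL$ in Lemma~\ref{lemma:corollary} do the work, and where the diagrammatic notation makes the argument far more transparent than a purely symbolic one. Beyond the induction no genuinely new idea is required, which is why the reference to \cite{chiribella} suffices for the multipartite definition and the two theorems built on it.
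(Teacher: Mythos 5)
Your proof is correct, and there is nothing in the paper to compare it against: the thesis explicitly omits the proof of the $N$-partite statement and defers to Ref.~\cite{chiribella}, whose argument is exactly the induction you describe — reduce to the causally ordered marginal $\tC^{(N-1)}$, invoke the inductive hypothesis to get a reversible dilatation $\tW=\tV_{N-1}\cdots\tV_1$, and compare it (tensored with $\tI_{\rA_N}$) against a reversible dilatation of $\tC^{(N)}$ via Lemma~\ref{lemma:corollary} to extract $\tV_N$ and $\rE_N=\rE_{N-1}\rA_N\rE$. The only step worth writing out carefully is the cancellation of $(e|_{\rA_N}$ when showing $\tC^{(N-1)}$ is itself causally ordered (feed a deterministic state into $\rA_N$ and use that states are separating), but that is routine.
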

We have also a uniqueness result:
\begin{corollary}[Uniqueness of the reversible dilatation]
	Let $\{\tV_k\}_{k=1}^N$, $\tV_k\in\mathsf{Transf}(\rA_k\rE_{k-1}\rightarrow\rB_k\rE_k)$ be a reversible realization of the casually ordered channel $\tC^{(N)}$ as a sequence of channels with memory, as in Theorem \ref{theorem:causalorderingforgeneralN}. Suppose that $\{\tV_k^\prime\}_{k=1}^N$, $\tV_k^\prime\in\mathsf{Transf}(\rA_k\rE_{k-1}^\prime\rightarrow\rB_k\rE_kì^\prime)$ is another reversible realization of $\tC^{(N)}$ as a sequence of channels with memory. Then there exist a channel $\tR$ from $\rE_N$ to $\rE_N^\prime$ such that
	\begin{equation*}
		\begin{aligned}
			&\begin{aligned}
				\Qcircuit @C=1.2em @R=.7em @! R {
					&
					\poloFantasmaCn{\rA_1}\qw&
					\multigate{1}{\tV_1^\prime}&
					\poloFantasmaCn{\tB_1}\qw&
					\qw& & 
					\poloFantasmaCn{\rA_2}\qw&
					\multigate{1}{\tV_2^\prime}&
					\poloFantasmaCn{\rB_2}\qw&
					\qw&
					\cdots& &
					\poloFantasmaCn{\rA_N}\qw&
					\multigate{1}{\tV_N^\prime}&
					\poloFantasmaCn{\rB_N}\qw&
					\qw\\
					& &
					\aghost{\tV_1^\prime}&
					\qw&
					\poloFantasmaCn{\rE_1^\prime}\qw&
					\qw&
					\qw&
					\ghost{\tV_2^\prime}&
					\poloFantasmaCn{\rE_2^\prime}\qw&
					\qw&
					\cdots& &
					\poloFantasmaCn{\rE_{N-1}^\prime}\qw&
					\ghost{\tV_N}&
					\poloFantasmaCn{\rE_N^\prime}\qw&
					\qw\\
				}
			\end{aligned}\, =\\
			&=\,
			\begin{aligned}
				\Qcircuit @C=1.2em @R=.7em @! R {
					&
					\poloFantasmaCn{\rA_1}\qw&
					\multigate{1}{\tV_1}&
					\poloFantasmaCn{\tB_1}\qw&
					\qw& & 
					\poloFantasmaCn{\rA_2}\qw&
					\multigate{1}{\tV_2}&
					\poloFantasmaCn{\rB_2}\qw&
					\qw&
					\cdots& &
					\poloFantasmaCn{\rA_N}\qw&
					\multigate{1}{\tV_N}&
					\poloFantasmaCn{\rB_N}\qw&
					\qw&\qw&\qw\\
					& &
					\aghost{\tV_1}&
					\qw&
					\poloFantasmaCn{\rE_1}\qw&
					\qw&
					\qw&
					\ghost{\tV_2}&
					\poloFantasmaCn{\rE_2}\qw&
					\qw&
					\cdots& &
					\poloFantasmaCn{\rE_{N-1}}\qw&
					\ghost{\tV_N}&
					\poloFantasmaCn{\rE_N}\qw&
					\gate{\tR}&
					\poloFantasmaCn{\rE_N^\prime}\qw&
					\qw\\
				}
			\end{aligned}\, .
		\end{aligned}
	\end{equation*}
\end{corollary}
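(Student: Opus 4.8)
The plan is to obtain this $N$-partite uniqueness statement as an essentially immediate corollary of the single-channel result in Lemma~\ref{lemma:corollary}, in the same spirit in which the bipartite ``causal ordering is memory'' theorem was deduced from that lemma. The key input is the ``moreover'' clause of Theorem~\ref{theorem:causalorderingforgeneralN}: it tells us that, after contracting away the intermediate memory systems $\rE_1,\dots,\rE_{N-1}$, the composite $\tV \coloneqq \tV_N\cdots\tV_1$, a transformation from $\rA_1\cdots\rA_N$ to $\rB_1\cdots\rB_N\rE_N$, is a \emph{reversible dilatation} of $\tC^{(N)}$ with environment $\rE_N$, and that similarly $\tV' \coloneqq \tV_N'\cdots\tV_1'$, from $\rA_1\cdots\rA_N$ to $\rB_1\cdots\rB_N\rE_N'$, is a reversible dilatation of the \emph{same} channel $\tC^{(N)}$, now with environment $\rE_N'$. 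So the situation is exactly that of two reversible dilatations of one and the same channel, with generally different environments.

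With this in hand I would invoke Lemma~\ref{lemma:corollary}, identifying its input with $\rA_1\cdots\rA_N$, its ``main'' output system with $\rB_1\cdots\rB_N$, its environments with $\rE_N$ and $\rE_N'$, and its two reversible dilatations with $\tV$ and $\tV'$. The lemma then yields a channel $\tL$ from $\rE_N$ to $\rE_N\rE_N'$, of the explicit form $\tL = \tU\,(\tI_{\rE_N}\otimes|\eta_0)_{\rE_N'})$ for some pure state $\eta_0$ and some reversible $\tU\in\mathsf{Transf}(\rE_N\rE_N')$, such that $\tV'$ is recovered from $\tV$ by acting with $\tL$ on the environment wire $\rE_N$ and then discarding, with the deterministic effect $e$, the resulting $\rE_N$ output.

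Finally I would \emph{define} $\tR$ to be the channel from $\rE_N$ to $\rE_N'$ obtained by composing $\tL$ with the deterministic effect $e$ on its $\rE_N$ output wire; this is indeed a channel, being the composition of a channel with a deterministic effect. Substituting this definition into the identity provided by Lemma~\ref{lemma:corollary}, and recalling that $\tV = \tV_N\cdots\tV_1$ and $\tV' = \tV_N'\cdots\tV_1'$, reproduces, diagram by diagram, exactly the asserted equality, with $\tR$ acting on the final memory system $\rE_N$ to produce $\rE_N'$.

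The only delicate point — and the nearest thing to an obstacle — is bookkeeping: one must make sure that in forming the chain $\tV_N\cdots\tV_1$ all of $\rE_1,\dots,\rE_{N-1}$ are genuinely internal, so that the only surviving wire other than $\rB_1\cdots\rB_N$ is $\rE_N$ and the hypotheses of Lemma~\ref{lemma:corollary} apply verbatim. This, however, is precisely what Theorem~\ref{theorem:causalorderingforgeneralN} already guarantees, so no genuine new difficulty arises; and should one want in addition a canonical form for $\tR$, it can simply be read off from the displayed factorization of $\tL$ above, whereas the statement as given asks only for existence.
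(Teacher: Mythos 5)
Your argument is correct: reading the two chains as reversible dilatations of $\tC^{(N)}$ with environments $\rE_N$ and $\rE_N'$ (which is exactly what the ``reversible realization'' hypothesis together with the ``moreover'' clause of Theorem~\ref{theorem:causalorderingforgeneralN} supplies), applying Lemma~\ref{lemma:corollary}, and setting $\tR := (e|_{\rE_N}\tL$ is precisely the intended route. The thesis itself omits this proof and defers to Ref.~\cite{chiribella}, where the same reduction to uniqueness of the reversible dilatation is carried out, so your proposal matches the paper's approach.
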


\subsection{No bit commitment}

The bit commitment protocol defined in Section~\ref{sec:defn-bit-commitment} is generally implemented with sequences of channels with memory, that can be used to describe sequences of moves of Alice or Bob. In this scenario, the memory systems are the private systems available to a party, while the other input-output systems are the systems exchanged in the communication with the other party.\\
We recall that for every theory that satisfy the starting hypothesis, this proof has an absolutely general validity: for every kind of input states and every possible strategy adopted, including both atomic and non-atomic transformations.
\begin{theorem}[No perfectly secure bit commitment]
	If a N-round protocol is perfectly concealing, then there is a perfect cheating.
\end{theorem}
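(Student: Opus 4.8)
The plan is to carry out, in the operational language, the Lo--Chau--Mayers purification attack: first turn the whole commitment phase into a reversible process, then invoke uniqueness of purification to let a dishonest Alice rotate one commitment into the other by acting on her own system alone.

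First I would pass to a \emph{coherent} realization of the commitment phase. Since no trusted third party is allowed and, without loss of generality, Alice moves first, the commitment phase for bit $b$ is a process with trivial input whose output is $\Psi_b\in\st{AB}$, built as an $N$-round alternation of Alice's and Bob's moves, i.e.\ a causally ordered channel realized as a sequence of channels with memory. By Theorem~\ref{theorem:causalorderingforgeneralN} (together with the reversible dilation Theorem~\ref{thm:RevDila}), each party may replace this by a reversible realization: the moves become reversible channels acting on the communicated systems and a growing private memory, the original honest strategy being recovered by discarding that memory at the very end. Coherentizing both parties leaves no genuine discard anywhere in the commitment phase, so, since the input is trivial and may be taken pure, the joint output is a \emph{pure} state $\hat\Psi_b$ of the composite of Alice's total system $\tilde\rA$ and Bob's total system $\tilde\rB$, with $\Psi_b$ recovered from $\hat\Psi_b$ by discarding the private ancillas. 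Passing to this coherent protocol changes neither the honest statistics nor the set of strategies available to a cheater; and since a cheating Alice only ever acts on her own systems -- and the honest Bob is the coherent Bob followed by a final discard that commutes with all of Alice's opening moves -- it is enough to defeat the coherent Bob.

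Now read Eq.~\eqref{eq:perfectly-concealing} on this coherent protocol: perfect concealing says Bob's reduced state $\sigma:=(e|_{\tilde\rA}|\hat\Psi_b)$ does not depend on $b$. Hence $\hat\Psi_0$ and $\hat\Psi_1$ are two pure states of $\tilde\rA\tilde\rB$ with the same marginal on $\tilde\rB$, i.e.\ two purifications of $\sigma$ sharing the \emph{same} purifying system $\tilde\rA$, which is entirely Alice's. By Lemma~\ref{lemma:uniqPuri} (uniqueness of purification up to channels on the purifying system, a consequence of Axioms~\ref{axiom:atomicity} and~\ref{axiom:7}) there is a channel $\tC$ on $\tilde\rA$ -- of the ``append a private pure ancilla, apply a reversible channel, discard the ancilla'' form, hence executable by Alice alone -- with $(\tC\otimes\tI_{\tilde\rB})|\hat\Psi_0)=|\hat\Psi_1)$. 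The cheating strategy is then immediate: Alice commits as though to $b=0$, ending up jointly with Bob in $\hat\Psi_0$, which Bob cannot tell apart from an honest commitment of either value because his marginal is always $\sigma$. At the opening she picks the value $b'\in\{0,1\}$ she wishes to unveil and applies $\tC$ to $\tilde\rA$ exactly when $b'=1$; the joint state is now $\hat\Psi_{b'}$, precisely what an honest commitment of $b'$ would have produced, so she discards her private ancillas, runs the honest opening for $b'$, and Bob's verification accepts $b'$ with certainty by the (perfect) correctness of the protocol. Everything she did beyond the honest commitment of $0$ was local on $\tilde\rA$, so Bob cannot detect the switch: the protocol is not binding, and a perfect cheating exists.

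The main obstacle is the bookkeeping of the first step: making ``coherentization'' rigorous for a genuinely interactive, bidirectional, many-round protocol -- tracking which communicated and memory systems end up on which side, and checking that passing to the coherent protocol is free -- so that the reversible dilation machinery of the preceding sections applies verbatim. This step is also where the hypothesis is actually used, since perfect concealing must be read as ``Bob's \emph{full} reduced state is independent of $b$'' (equivalently, concealing against a coherent Bob), which is the operationally meaningful content of Eq.~\eqref{eq:perfectly-concealing} once one works with the coherent protocol; a literal reading constraining only the marginal of the honestly-processed state would be too weak. With the commitment phase in coherent form, the remaining steps are a direct transcription of the quantum argument of Ref.~\cite{chiribella}.
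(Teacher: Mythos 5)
Your proposal follows the same Lo--Chau--Mayers idea as the paper, but it takes a genuinely different technical route: you coherentize \emph{both} parties to obtain joint pure states $\hat\Psi_0,\hat\Psi_1$ and then apply the uniqueness of purification (Lemma~\ref{lemma:uniqPuri}) directly to them as two purifications of Bob's reduced state $\sigma$. The paper instead never forms a joint pure state: it works at the level of Alice's strategy-\emph{channels} $\tA_0,\tA_1$, reads perfect concealing as equality of the reduced channels $(e|_{\rF_N}\tA_0=(e|_{\rF_N}\tA_1=:\tC$, takes reversible dilations $\tV_0,\tV_1$ via Theorem~\ref{thm:RevDila}, and obtains $\tV_1=\tR\tV_0$ from the uniqueness of reversible dilations (Lemma~\ref{lemma:corollary}). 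This is not merely a stylistic difference, and it is where your argument has a gap in the generality at which the theorem is stated. Axiom~\ref{axiom:7} deliberately weakens purification: existence and uniqueness of purification are guaranteed \emph{only} for bipartite states $R\in\mathsf{St}(\rB\tilde\rA)$ whose marginal on $\tilde\rA$ coincides with that of the dynamically faithful state $\Psi^{(\rA)}$. The paper's channel-level detour exists precisely to stay inside this regime: the uniqueness of purification is only ever invoked for states of the form $R_\tC=(\tC\otimes\tI)\Psi^{(\rA)}$ with $\tC$ a channel, which automatically have the required marginal. Your $\sigma$ is an arbitrary reduced state of Bob's total system, with no reason to be of the covered form; under the weakened axiom (e.g.\ in PR-box theory, where most states admit no purification at all and uniqueness holds only in the restricted sense) the step ``two purifications of $\sigma$, hence connected by a channel on $\tilde\rA$'' is not licensed. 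Lemma~\ref{lemma:uniqPuri} is admittedly \emph{stated} without the marginal restriction, but its proof rests on Axiom~\ref{axiom:7} and is only valid where that axiom applies, so you cannot lean on the literal statement here. In full quantum theory your argument is the standard one and is fine; in the intended generality it needs to be rerouted through the faithful state.

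Two smaller points. First, your reading of ``perfectly concealing'' as equality of the coherent Bob's reduced state is weaker than what the paper actually uses, namely equality of the reduced \emph{channels} for all inputs; the channel-level condition is also what makes the dilation-uniqueness lemma applicable, so adopting it from the start removes the bookkeeping you flag as the main obstacle. Second, the paper's proof has a second half extending the argument to protocols that exchange classical information (modelled by measure-and-prepare channels and a ``communication interface'', with the dilations composed with classical channels so that $\tD_1=\tR\tD_0$ still holds); your proposal does not address this case at all, and the theorem as stated covers it.
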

\begin{proof}
	We first prove the impossibility for protocols that do not involve the exchange of classical information. Let $\tA_0,\,\tA_1\in\mathsf{Transf}(\rA_1...\rA_N\rightarrow\rB_1...\rB_N\rF_N)$ two causally ordered N-partite channels (here the last output system of the causally ordered channels is the bipartite system $\rB_N,\,F_N$), representing Alice's move to encode the bit value $b=0,\,1$, respectively. The system $\rF_N$ is the system sent from Alice to Bob at the opening phase in order to unveil the value of the bit. If the protocol is perfectly concealing, then the reduced channels before the opening phase must be indistinguishable, namely $(e|_{\rF_N}\tA_0=(e|_{\rF_N}\tA_1\coloneqq\tC$. Now, due to Theorem \ref{thm:RevDila}, there exist two reversible dilatations $\tV_0\in\mathsf{Transf}(\rA_1...\rA_N\rightarrow\rB_1...\rB_N\rF_N\rG_0)$ and $\tV_1\in\mathsf{Transf}(\rA_1...\rA_N\rightarrow\rB_1...\rB_N\rF_N\rG_1)$ for $\tA_0$ and $\tA_1$, respectively. Since $\tV_0$ and $\tV_1$ are also two  dilatations of the channel $\tC$, due to Lemma \ref{lemma:corollary} there is a channel $\tR$ from $\rF_N\rG_0$ to $\rF_N\rG_1$ such that $\tV_1=\tR\tV_0$. Applying this channel to her private systems, Alice can switch from $\tV_0$ to $\tV_1$ just before the opening. Discarding the auxiliary system $\rG_1$, this yields channel $\tA_1$.\\
	The cheating is perfect, since Alice can play the strategy $\tV_0$ until the end of the commitment and decide the bit value before the opening without being detected by Bob. The above reasoning can be extended to N-round protocols involving the exchange of classical information. Indeed classical messages can be modeled by measure-and-prepare channels where the observation states are perfectly distinguishable. The fact that some systems can only be prepared in perfectly distinguishable states will be referred as to "communication interface" of the protocol. In this case, to construct Alice's cheating strategy we can first take the reversible dilatations $\tV_0$, $\tV_1$ and the channel $\tR$ such that $\tV_1=\tR\tV_0$. In order to comply with the communication interface protocol, one can compose $\tV_0$ and $\tV_1$ with classical channels on all system that must be "classical" before the opening, thus obtaining two channels $\tD_0$ and $\tD_1$ that are no longer reversible but still satisfy $\tD_1=\tR\tD_0$. Discarding the auxiliary system $\rG_1$ and, of required by the communication interface, applying a classical channels on $\rF_N$, Alice then obtains channel $\tA_1$. Again, this strategy allows Alice to decide the value of the bit just before the opening without being detected.
\end{proof}

\chapter{PR-Boxes}
\label{chap:pr-box}
In this Chapter we will analyse the probabilistic theory \cite{barret2007,barret2005,d'ariano-tosini} corresponding to the popular PR-boxes model introduced in Ref.~\cite{PRorigial}.\\

In particular, in the first Section we will retrace the crucial steps and the underlying reasons that led to the development of the PR-box model.\\
Then, in the second Section, after the formalization of the PR-boxes model in the language of operational probabilistic theories, we will report our results: the general POVM that grants perfect discriminability between any two pure states in the bi-partite case, the existence and uniqueness of the purification exclusively for the maximally mixed state (again in the bi-partite scenario) and finally some consideration on the general case of N-partite boxes.

\section{Why PR-Boxes?}

One of the most striking feature of quantum theory is certentantly non-locality. In fact, since the very beginning of the theory the incompleteness of the Copenhagen interpretation of quantum mechanics in relation to the violation of local causality was one of the main discussed aspect. In 1935 Einstein Podolsky and Rosen published the famous article of the EPR paradox \cite{EPR}. The thought experiment generated a great deal of interest in the following years. Their notion of a "complete description" was later formalized by the suggestion of \textit{hidden variables} that determine the statistics of measurement results, but to which an observer does not have access.\\
In 1964 John Bell proved that some predictions of quantum mechanics cannot be reproduced by any theory of local physical variables \cite{bell}. Although Bell worked within non-relativistic quantum theory, the definition of local variable is relativistic: a local variable can be influenced only by events in its backward light cone, not by events outside, and can influence events in its forward light cone only. Quantum mechanics, which does not allow us to transmit signals faster than light (super-luminal signalling), preserves relativistic causality. But quantum mechanics does not always allow us to consider distant systems as separate, as Einstein assumed.\\
Now quantum non-locality has been experimentally verified under different physical assumptions. Any physical theory that aims at superseding or replacing quantum theory should account for such experiments and therefore must also be non-local in this sense; quantum non-locality is a property of the universe that is independent of our description of nature.\\

So quantum non-locality is an essential feature of quantum theory but it often appear in a negative light. In 1994 Popescu and Rohrlich published a work \cite{PRorigial} where they proposed to show quantum non-locality in a more positive light. They investigated the inversion of the logical approach to quantum mechanics, considering quantum non-locality as an axiom instead of as a theorem and wondering what non-locality together with relativistic causality would imply.\\
They found that quantum theory is only one of a class of non-local theories consistent with causality, and not even the most non-local. In fact, in a certain sense, non-locality can be quantified.\\

In 1969, John Clauser, Michael Horne, Abner Shimony, and Richard Holt reformulated Bell's inequality in a manner that best suites experimental testing, the homonymous CHSH inequality \cite{CHSH}. CHSH inequality, restricted to any classical theory, states that a particular algebraic combination of correlations lies between -2 and 2. This bound is obviously violated in quantum mechanics, where in fact the CHSH inequality allows a maximum value given by Cirel'son's theorem as $2\sqrt{2}$ \cite{cirelson}. However, Popescu and Rohrlich wrote down a set of correlations that return a value of 4 for the CHSH expression, the maximum value algebraically possible, and that yet are non(super-luminar)-signalling. A question that now rises spontaneously is why does quantum theory not allow these strongly non-local correlations.\\

In the hope of making further progress with this question these correlations have been investigated in the context of a theory with well-defined dynamics.\\
Abstractly this scenario may be described by introducing two observers that have access to a black box. Each observer selects an input from a range of possibilities and obtains an output. The box determines a joint probability for each output pair given each input pair. It is clear that a quantum state provides a particular example of such a box, with input corresponding to measurement choice and output to measurement outcome. More generally boxes can be divided into different types. Some will allow the observers to signal to one another via their choice of input, and correspond to two-way classical channels, as introduced by Shannon. Others will not allow signalling - it is well known, for example, that any box corresponding to an entangled quantum state will not. This is necessary for compatibility between quantum mechanics and special relativity. Among the non-signalling boxes, some will violate a Bell-type inequality, and we refer to any such a box as non-local. As we have described above, in terms of our boxes, there are some boxes that are non-signalling but are more non-local than any box allowed by quantum theory.

\section{PR-Boxes as OPT}
\label{sec:prbox-as-opt}
The model describes $N$ correlated boxes (in the original paper \cite{PRorigial} it was $N=2$) in a casual context. Each box is represented by the same elementary system $\rA$, with the $N$ correlated boxes represented by the composite system $\rA^{\otimes N}$. We will consider the simplest situation where each box has both input and output as binary variables. On each elementary system $\rA_i$, $i=1,\ldots,N$ only two atomic binary observation tests are allowed, say $B^{(0)}\coloneqq\{b^{(0)},e_\rA-b^{(0)}\}$ or $B^{(1)}\coloneqq\{b^{(1)},e_\rA-b^{(1)}\}$, with $b^{(0)},b^{(1)}\in\eff{A}$ atomic effects, and $e_\rA$ the deterministic effect of system $\rA$. Notice that the since the model is causal, the deterministic effect $e_\rA$ is unique and it is $e_{\rA^{\otimes N}}=\otimes^N e_\rA$.\\
The probabilistic model is typically presented in terms of the probability function $P:(b_1,b_2,\ldots, b_N|B_1,B_2,\ldots B_N)\mapsto [0,1]$, with $B_i\in\{B^{(0)},B^{(1)}\}$ and $b_i\in\{b^{(0)},b^{(1)}\}$, for every $i=1,\ldots,N$, which returns the probability of the outcomes $b_1,b_2,\ldots,b_N$ given the observation tests $B_1,B_2,\ldots B_N$. The constraint imposed on the function $P$ is \textit{no-signalling}, i.e.
\begin{equation}
	\label{eq:no-signalling}	
	\begin{aligned}
		\sum_{b_k=b^{(0)},e-b^{(0)}} &P(b_1,\ldots, b_N|B_1,\ldots B_N)=\\
		=&P(b_1,\ldots, b_{k-1},b_{k+1},\ldots,b_N|B_1,\ldots B_{k-1},B_{k+1},\ldots, B_N).
	\end{aligned}
\end{equation}

\subsection{States, effects and transformations}

We will start our analysis with the boxes of the elementary system $\rA$ that are necessary local. They are described in terms of the following probability distribution
\begin{equation}
	\label{eq:prbox1}
	p_{\alpha\beta}(a|x)=
	\begin{cases}
		1& a=\alpha x\oplus \beta\\
		0 & \text{otherwise}
	\end{cases},
\end{equation}
with $\alpha,\,\beta=0,1$. This four probability distributions correspond to the four pure states of system $\rA$.\\
In fact the elementary system $\rA$ has dimension $\dim(\rA)=3$, namely its states are described by vectors in $\mathbb{R}^3$ ($\mathsf{St}_\mathbb{R}(\rA)=\mathsf{Eff}_\mathbb{R}(\rA)=\mathbb{R}^3$). The four pure normalized states \eqref{eq:prbox1} can be represented by the following vectors
\begin{equation}
	\label{eq:local-states}
	\begin{aligned}
		\omega_0 =
		\begin{pmatrix}
			1\\0\\1
		\end{pmatrix},\:
		\omega_1 =
		\begin{pmatrix}
			0\\-1\\1
		\end{pmatrix},\:
		\omega_2 =
		\begin{pmatrix}
			-1\\0\\1
		\end{pmatrix},\:
		\omega_3 =
		\begin{pmatrix}
			0\\1\\1
		\end{pmatrix},
	\end{aligned}
\end{equation}
where the correspondence between $\omega_n$ and the probability rule $p_{\alpha\beta}$ is given by $\alpha\beta=$(binary form of $n$).\\
The convex set of states normalized is then represented by a square (see the square in the plane $z=1$ in Fig.~\ref{fig:elementary-system}).\\
\begin{SCfigure}
	\begin{overpic}[width=.5\columnwidth]{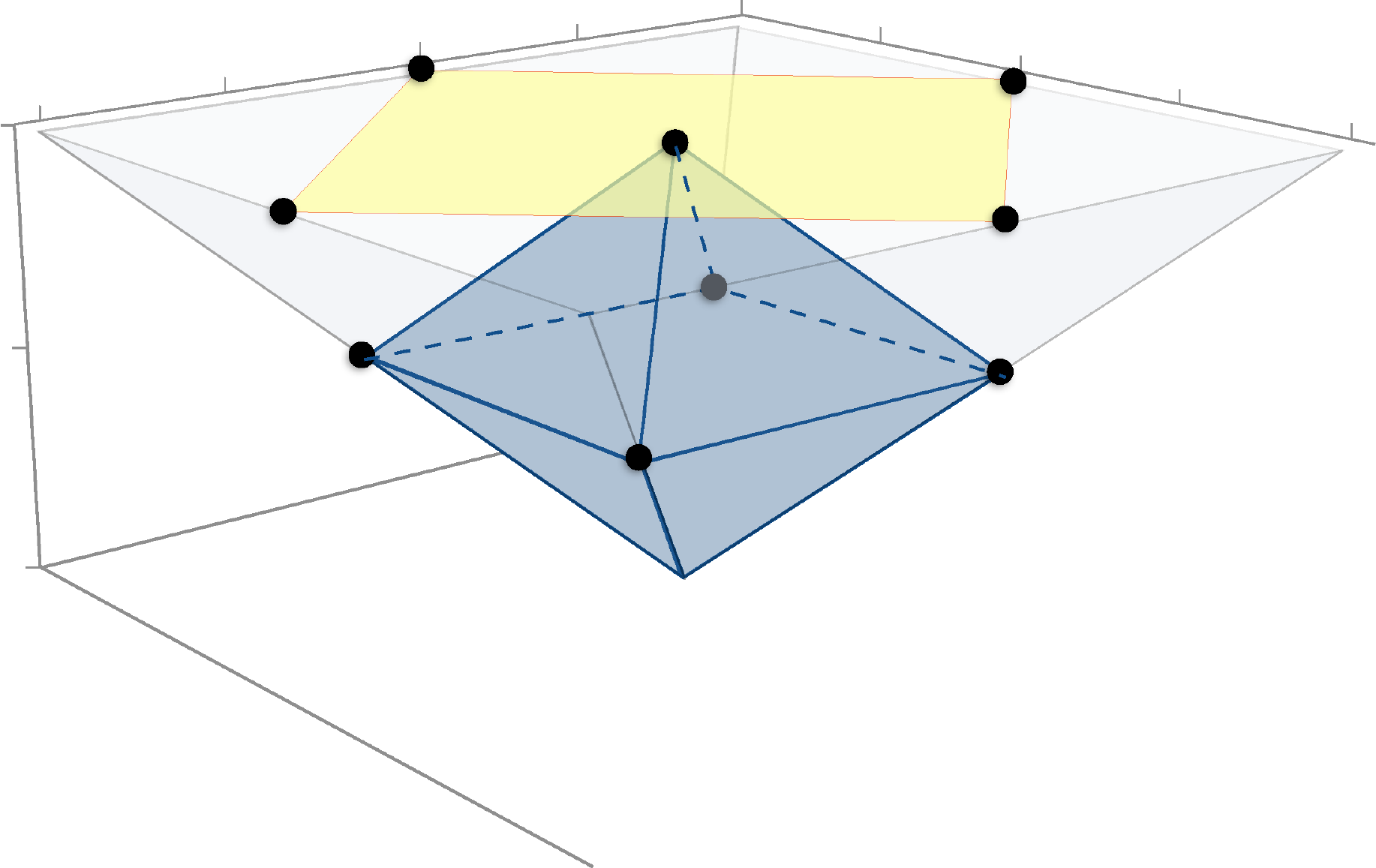}
		\put (15, 44) {\color{red}$\omega_1$}
		\put (67, 54) {\color{red}$\omega_3$}
		\put (29, 54) {\color{red}$\omega_0$}
		\put (74, 44) {\color{red}$\omega_2$}
		\put (48, 27) {\color{blue1}$b_1$}
		\put (73, 33) {\color{blue1}$b_2$}
		\put (22, 33) {\color{blue1}$b_0$}
		\put (50, 38) {\color{blue1}$b_3$}
		\put (45, 53) {\color{blue1}$\bar e$}
		\put (72.5, 60) {\color{black}$0$}
		\put (52.5, 64) {\color{black}$1$}
		\put (93, 56) {\color{black}$-1$}
		\put (-2.5, 52.5) {\color{black}$1$}
		\put (-2, 36.5) {\color{black}$0$}
		\put (-4.5, 21) {\color{black}$-1$}
		\put (0, 56) {\color{black}$-1$}
		\put (29, 61) {\color{black}$0$}
	\end{overpic}
	\caption{\textbf{ Elementary system of the PR-boxes theory.} This picture depicts the ``squit'' elementary system often considered in generalized probabilistic theories (in analogy to the ``bit'' and the ``qubit'' which are elementary systems of classical and quantum theory respectively).  The system is specified by its sets of states and effects here represented as vectors in $\mathbb{R}^3$.  The convex set of normalized states is represented by the square at the top, while the convex set of effects corresponds to the truncated cone.}
	\label{fig:elementary-system}
\end{SCfigure}

The set of effects of system $\rA$ is defined as the set of vectors $b$ such that $0 \le \Tr[b^T \omega] \le 1$ for every state $\omega\in\st{A}$, with $p_{y|x} = \Tr[{b^{(y)}}^T \omega_x]$ the rule providing the probability associated to an effect $b^{(y)}$ on a state $\omega_x$. This leads to the truncated cone of effects in Fig.~\ref{fig:elementary-system}, with extremal points given by
\begin{align*}
	b^{(0)} =
	\begin{pmatrix}
		\frac{1}{2}\\[2pt]\frac{1}{2}\\[2pt]\frac{1}{2}
	\end{pmatrix},\;
	b^{(1)} =
	\begin{pmatrix}
		-\tfrac{1}{2}\\[2pt]\tfrac{1}{2}\\[2pt]\tfrac{1}{2}
	\end{pmatrix},\;
	b^{(2)} =
	\begin{pmatrix}
		-\tfrac{1}{2}\\[2pt]-\tfrac{1}{2}\\[2pt]\tfrac{1}{2}
	\end{pmatrix},\;
	b^{(3)} =
	\begin{pmatrix}
		\tfrac{1}{2}\\[2pt]-\tfrac{1}{2}\\[2pt]\tfrac{1}{2}
	\end{pmatrix}.
\end{align*}
The deterministic effect (the effect $e_\rA$ such that $\Tr[e_\rA^T \omega]=1$ for every state $\omega\in\st{A}$) is the vector $e_\rA=(0,0,1)^T$ (when no ambiguity arises we will simply denote the deterministic effect with $e$).\\

Our analysis now proceeds towards the composite system $\rA\otimes \rA$. In this case the probability function $p(a,b|x,y)$ form a table with $2^4$ entries, although these are not all independent due to the constraints of Eq.~\eqref{eq:no-signalling}. The dimension of the set of boxes is found by subtracting the number of independent constraints from $2^4$, and turns out to be 8 (we notice that $8=\dim\mathsf{St}_1(\rA\otimes\rA)$, while $\dim\mathsf{St}_\mathbb{R}(\rA\otimes \rA)=\dim\mathsf{St}_\mathbb{R}(\rA)\dim\mathsf{St}_\mathbb{R}(\rA)=9$, so due to Theorem~\ref{theorem:product-rule-for-composite-systems} it satisfies local discriminability). In this case we will have no more a square like in Fig.~\ref{fig:elementary-system} but a polytope with 24 vertices. The vertices will be boxes that satisfy all of the constraints and
saturates a sufficient number of the positivity constraints to be uniquely determined. These 24 bilocal pure states may be divided into two classes.\\
The \textit{local boxes}, given by the following 16 probability distributions
\begin{equation}
	\label{eq:14} 
	p(a,b|x,y)=
	\begin{cases}
		1& a=\alpha x\oplus\beta\\
		1& b=\gamma y\oplus\delta\\
		0& \text{otherwise}
	\end{cases},
\end{equation}
with $\alpha,\beta,\gamma,\delta=0,1$, and the \textit{non-local boxes}, given by the 8 probability distributions
\begin{equation}
	\label{eq:15}
	p_{\alpha\beta\gamma}(a,b|x,y)=
	\begin{cases}
		1/2& a\oplus b=xy\oplus \alpha x\oplus \beta y\oplus\gamma\\
		0 & \text{otherwise}
	\end{cases}
\end{equation}
with $\alpha,\beta,\gamma=0,1$.\\
For convenience we can represent states and effects as $3\times 3$ real matrices rather than as vectors in $\mathbb{R}^9$.
The 16 local state of the bipartite system are nothing else that the factorized pure states
\begin{align}
	\label{eq:local-bipartite-states}
	\Omega_{4i+j} \coloneqq \omega_i \otimes \omega_j^T,
\end{align}
where $i, j \in \{ 0, 1, 2, 3 \}$, and the 8 non-local states (playing the role of entangled states) are represented by the following matrices
\begin{equation}
	\label{eq:non-local-bipartite-states}
	\begin{aligned}
		\Omega_{16} & := \frac12
		\begin{pmatrix}
			-1 & 1 & 0 \\
			1 & 1 & 0 \\
			0 & 0 & 2
		\end{pmatrix},
		& \Omega_{17} & := \frac12
		\begin{pmatrix}
			-1 & -1 & 0 \\
			-1 & 1 & 0 \\
			0 & 0 & 2
		\end{pmatrix}, 
		& \Omega_{18}  := \frac12
		\begin{pmatrix}
			1 & -1 & 0 \\
			-1 & -1 & 0 \\
			0 & 0 & 2
		\end{pmatrix},\\
		\Omega_{19} &:= \frac12
		\begin{pmatrix}
			-1 & 1 & 0 \\
			-1 & -1 & 0 \\
			0 & 0 & 2
		\end{pmatrix}, 
		& \Omega_{20} & := \frac12
		\begin{pmatrix}
			-1 & -1 & 0 \\
			1 & -1 & 0 \\
			0 & 0 & 2
		\end{pmatrix},
		& \Omega_{21}  := \frac12
		\begin{pmatrix}
			1 & -1 & 0 \\
			1 & 1 & 0 \\
			0 & 0 & 2
		\end{pmatrix}, \\
		\Omega_{22} & := \frac12
		\begin{pmatrix}
			1 & 1 & 0 \\
			-1 & 1 & 0 \\
			0 & 0 & 2
		\end{pmatrix},
		& \Omega_{23} & := \frac12
		\begin{pmatrix}
			1 & 1 & 0 \\
			1 & -1 & 0 \\
			0 & 0 & 2
		\end{pmatrix}.&
	\end{aligned}
\end{equation}
From a generic probability rule $p_{\alpha\beta\gamma}$ we can identify one of the above matrix $\Omega_n$ by the equation: $n=15+\left[(3+3\alpha+4\beta+6\gamma)\mod8\right]$.\\

Again, the probability associated to a bipartite effect $B_y$ on the state $\Omega_x$ is given by $p_{y|x} = \Tr[B_y^T \Omega_x]$. Accordingly, the set of bipartite effects is easily derived via the consistency condition $\Tr[B_j^T \Omega_i] \ge 0$ for every $j \in [0, 23]$. It follows that the only admissible extremal effects are the 16 factorized matrices
\begin{align}  
	B_{4i+j} := b^{(i)} \otimes {b^{(j)}}^T.
\end{align}
This is a relevant feature of PR-boxes model, whose strong correlation incapsulated in the eight non-local states $\Omega_x$, $x\in[16,23]$ are incompatible with any in principle admissible non-factorized measurements. This feature has been firstly noticed in Ref.~\cite{short2010} and later in Ref.~\cite{tosini2017} where all possible theories compatible with the squit local system have been classified (among these theories is the dual version of PR-boxes that only have factorized states but eight non-local effects). Finally, the deterministic effect for the bipartite system is $e\otimes e^T$.\\

Analogously one defines the convex set of states and the convex set of effects for the arbitrary $N$-partite system $\rA^{\otimes N}$. However, while for effect is nothing but a trivial generalization, for the states the discussion is not so straight and some interesting features arise. We will dedicate a following Section to discuss some of the most immediate aspect about the $N$-partite system, we will start our discussion from the tripartite boxes.\\

We can now turn our attention to the transformations of the theory. We focus here on the reversible transformations which are of interest for the present paper results. The set $\mathcal{U}(\rA)$ of reversible transformations of the system $\rA$ coincides with the finite group of symmetries of the square (the dihedral group of order eight $D_8$ containing four rotations and four reflections). In the chosen representation we have
\begin{equation}
	\label{eq:single-system-unitaries}
	\begin{aligned}
		\mathcal{U}(\rA)=\{U_k^s: k=0,\ldots,3, s=\pm\}\\
		U_k^s =
		\begin{pmatrix}
			\cos \frac{\pi k}2 & -s \sin \frac{\pi k}2 & 0 \\
			\sin \frac{\pi k}2 & s \cos \frac{\pi k}2 & 0 \\
			0 & 0 & 1
		\end{pmatrix}.
	\end{aligned}
\end{equation}
The matrices $U_k^{+}$ and $U_k^{-}$ representing the four rotations and the four reflections of the square respectively. When we apply them to the four pure states of Eq.~\eqref{eq:local-states} we have
\begin{align}
	\label{eq:reversible-mapping-on-local-states}
	\omega_{j+k} & = U_k^+\, \omega_{j}, & \omega_{k} & = U_k^-\, \omega_{j+k}\,,
\end{align}
for $j\in\left[0,3\right]$ and where the sum is$\mod 3$, i.e. if $j=2$ and $k=2$ then $j+k=0$.\\
We notice that these transformations are atomic. In Ref.~\cite{d'ariano-tosini} all atomic transformations of the squit system have been classified.\\
The set of reversible transformations $\mathcal{U}(A\otimes A)$ of the composite system $\rA\otimes\rA$ (which has been derived in Ref.~\cite{gross}) is  
\begin{equation}
	\label{eq:reversible}
	\mathcal{U}(\rA\otimes \rA)=\{ W^i(U_j^{s_1} \otimes U_k^{s_2})\}\quad i=0,1,\; 0\leq j,k\leq 3,\; s_1,s_2=\pm,
\end{equation}
with $W$ the \textit{swap} map, namely the map that exchanges the two subsystems. This means that any reversible map corresponds to the tensor product of single system reversible transformations with, possibly, the application of the swap. As noticed in Ref.~\cite{gross} reversible transformations cannot create entanglement. This result has been extended in Ref.~\cite{al-safi2014} to the composition of an arbitrary number $N$ of systems $\rA^{\otimes N}$, showing that also in that case the set of reversible multipartite transformations is generated by local reversible operations and permutations of systems.\\

We notice that in the PR-boxes theory any non-local bipartite pure state can be reversibly mapped to any other non-local bipartite pure state. Moreover, this mapping can be done via the local application of a single system reversible map. For example, starting from the state $\Omega_{16}$, one has
\begin{align}
	\label{eq:reversible-mapping}
	\Omega_{16+k} & = (U_k^+\otimes I)\Omega_{16}, & \Omega_{23-k} & = (U_k^-\otimes I) \Omega_{16}.
\end{align}
We finally remark that the set of reversible transformations of system $\rA$, $\tU(\rA)$, in terms of description by the probability distributions in Eqs.~\eqref{eq:prbox1},~\eqref{eq:14},~\eqref{eq:15} correspond to a local relabelling defined by the operations
\begin{equation}
	\label{eq:local-relabelling}
	\begin{aligned}
		x&\rightarrow x\oplus 1,\qquad a\rightarrow a\oplus\alpha x\oplus \gamma,\\
		y&\rightarrow y\oplus 1,\qquad b\rightarrow b\oplus\beta y\oplus \gamma.
	\end{aligned}
\end{equation}

\subsection{Discriminability between PR-boxes}
\label{sec:discriminability-PRboxes}
An important question to address when dealing with the PR-boxes theory is the following: given two deterministic state of the theory is it possible to discriminate between them? We will address to this question only for the extremal points of the polytope, i.e. the pure states, in order to analyse a way to perfect discriminate between them.\\

For local boxes it is easy to show that there exist POVMs that are able to perfect discriminate between every two of the four possible state in Eq.~\eqref{eq:local-states} \cite{d'ariano-tosini}.\\
For what concern bipartite boxes, while it is trivial to find perfectly discriminable POVMs for each pair of the 16 local boxes (since they are simply the tensor product of the 4 local boxes, so also the discriminable POVMs are just the tensor product of the perfectly discriminable POVMs of the local boxes) it is a bit more elaborate to explore the discriminability between non-local boxes.\\

To help in our analysis we introduce the following table regarding non-local boxes labelled by $(\alpha,\beta,\gamma)$ as described in Eq.~\eqref{eq:15}:

\begin{table}[h]
	\centering
	\begin{tabular}{c|c|c}
		$x$ & $y$ & $a\oplus b$ \\
		\hline
		0  &  0  & $\gamma$ \\
		0  &  1  & $\beta\oplus\gamma$ \\
		1  &  0  & $\alpha\oplus\gamma$ \\
		1  &  1  & $1\oplus\alpha\oplus\beta\oplus\gamma$ \\
	\end{tabular}
	\caption{Correlations for a generic $(\alpha,\beta,\gamma)$ non-local box for all the possible given input combinations of $x$ and $y$.}
\end{table}

It is now easy to verify that for every two different non-local boxes, i.e. for every choice of two different combinations of $(\alpha,\beta,\gamma)$: $c_1=(\alpha_1,\beta_1,\gamma_1)$ and $c_2=(\alpha_2,\beta_2,\gamma_2)$, there is always at least one input combination $(x,y)$, whose output relation is equal to 0 for $c_1$ and to 1 for $c_2$. If we denote with $e$ the deterministic effect of the bipartite system $\rA\otimes\rA$ and $a\coloneqq b^{(3(1-x))} \otimes b^{(3(1-y))}+b^{(1+x)} \otimes b^{(1+y)}$, the POVM $\{a,e-a\}$ is able to perfectly discriminate between the two chosen non-local boxes.\\

Finally, the last remark regards the discrimination between one local and one non-local bipartite box. We can help ourselves with the two TABLEs of Appendix \ref{app:tables} that follows the notation of Eqs.~\eqref{eq:14},\eqref{eq:15}. For every pair of bipartite boxes (one local and one  non-local) there is always a combination of $(x,y)$ such that in one case the outcome relation $a\oplus b$ is equal to 0 for one box and 1 for the other. So we can construct a perfectly discriminating POVM following the same strategy of above.

\subsection{Purification in the PR-box (bipartite restriction) model}
\label{sec:purification-of-maximally-mixed-states}
Until now nothing has been said about purification in the PR-box model.\\
The following discussion represent an important result of the theory, valid in the general context of $N$-partite boxes (with $N$ finite), but we will operate in the significant restriction of admitting no more than bipartite correlated boxes. For the rest of the subsection when we will refer to PR-box model/theory we will mean the model under this limitation.\\

Let begin with some general definitions and preliminary considerations.
\begin{defn}[Transitivity]
	The ability to transform any pure state into any other by means of reversible transformations will be called transitivity, meaning that the action of the set of reversible transformations is transitive on the set of pure states.
\end{defn}
\noindent Based on this definition, PR-box model clearly enjoys this property. In fact the transformations of Eq.~\eqref{eq:single-system-unitaries} is transitive on the set of pure states of Eq.~\eqref{eq:local-states}.\\
Among the numerous consequences that transitivity implies one will be of our interest, the uniqueness of the maximally mixed state. So it is in order to properly define what a maximally mixed state is.
\begin{defn}[Maximally mixed state]
	If a state is invariant under the action of every reversible transformation, then it is a maximally mixed state.
\end{defn}
\noindent We will not report here the demonstration of the uniqueness of the maximally mixed state from transitivity, for it we refer to Ref.~\cite{d'ariano-libro}.\\
From the transformations of Eq.~\eqref{eq:single-system-unitaries} and the vector representation of the pure state in Eq.~\eqref{eq:local-states} it is not difficult to write down the maximally mixed state:
\begin{equation}
	\label{eq:maximally-mixed-state}
	\mu =
	\begin{pmatrix}
		0\\0\\1
	\end{pmatrix}\,.
\end{equation}
We are now in position to state the main result of this Section.
\begin{theorem}
	Given a system $\otimes^N\rA$, the maximally mixed state $\mu^{\otimes N}\in\st{\otimes^NA}$ of Eq.~\eqref{eq:maximally-mixed-state} is the unique internal state that is purificable and its purification is unique up to a reversible transformation on the purifying systems.
\end{theorem}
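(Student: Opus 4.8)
The plan is to split the theorem into two parts: (i) that $\mu^{\otimes N}$ is purifiable with purification unique up to reversible transformations on the purifying system, and (ii) that it is the \emph{only} internal state with this property. For part (i), I would use the structural facts already available about the bipartite PR-box theory: the set of states is an $8$-dimensional polytope, the $8$ non-local boxes $\Omega_{16},\dots,\Omega_{23}$ of Eq.~\eqref{eq:non-local-bipartite-states} all have the \emph{same} marginal $e\otimes e^T$-reduction, which one checks is exactly $\mu\otimes\mu$ (each $\Omega_x$ has the form $\tfrac12\left(\begin{smallmatrix}*&*&0\\ *&*&0\\0&0&2\end{smallmatrix}\right)$, so tracing out either factor kills the $2\times2$ block and returns $\mu$). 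Hence any one of these pure non-local boxes, say $\Omega_{16}$, is a purification of $\mu=\mu^{\otimes 1}$; more generally $\Omega_{16}^{\otimes N/\ldots}$ suitably arranged purifies $\mu^{\otimes N}$ by composing copies and using Corollary~\ref{cor:parallelcompositionofpurestates} (parallel composition of pure states is pure). For uniqueness up to reversible transformations: if $\Psi$ and $\Psi'$ are two pure states with the same marginal $\mu^{\otimes N}$ on the "system" side, then they are two purifications of the same state, and I would invoke Eq.~\eqref{eq:reversible-mapping} (and its multipartite analogue from Ref.~\cite{al-safi2014}) which says that \emph{any} non-local bipartite pure state is carried to any other by a local reversible single-system map — so the two purifications differ exactly by a reversible transformation acting on the purifying system only.

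For part (ii), the uniqueness among internal states, the key is to combine the geometry of the state polytope with the marginal constraint. Suppose $\omega\in\st{\otimes^N\rA}$ is internal (lies in the interior of the polytope) and admits a purification $\Psi$. A pure state of a composite PR-box system is, up to permutation of boxes, a tensor product of pure local states and non-local bipartite boxes (this is the classification underlying Eqs.~\eqref{eq:local-bipartite-states}--\eqref{eq:non-local-bipartite-states} and the reversibility result of Ref.~\cite{al-safi2014}). Taking the marginal of such a $\Psi$ onto the first $N$ boxes: each local tensor factor contributes a pure local state (a vertex $\omega_i$, hence \emph{on the boundary}), while each non-local pair that is split between the "system" and "purifying" sides contributes a $\mu$ on the system side. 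For the resulting marginal $\omega$ to be internal, no tensor factor may be a boundary state — so \emph{every} box on the system side must be the $\mu$-component of a split non-local pair, forcing $\omega=\mu^{\otimes N}$. I would also separately argue that a genuinely mixed-but-not-internal state purifiable in this model can only be a face-restricted version of the above, but since the theorem only asserts uniqueness among \emph{internal} purifiable states, the boundary analysis can be kept brief.

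The main obstacle I anticipate is making the "a pure state of $\rA^{\otimes N}$ is, up to permutation, a tensor product of local pure states and non-local bipartite boxes" claim fully rigorous in the bipartite-restricted theory — the excerpt only gives this structure explicitly for $N=2$ and cites Refs.~\cite{gross,al-safi2014} for reversible transformations, not for a complete classification of \emph{all} pure states of $\rA^{\otimes N}$. I would handle this by restricting attention, as the theorem's hypotheses permit, to the model admitting no more than bipartite correlated boxes, so that the only "entanglement resource" is the non-local bipartite box; then a pure state of the composite is built from copies of $\omega_i$ and copies of $\Omega_x$ on disjoint pairs of boxes, and the marginal computation goes through cleanly. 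A secondary, more routine obstacle is checking that the reversible transformation connecting two purifications can always be taken to act on the purifying system alone — this follows from Eq.~\eqref{eq:reversible-mapping} together with the fact (Ref.~\cite{gross}) that reversible transformations of $\rA\otimes\rA$ are local transformations possibly composed with a swap, and the swap here can be absorbed into the purifying side.
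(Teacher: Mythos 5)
Your proposal is correct and follows essentially the same route as the paper: both rest on the observation that the 16 local bipartite vertices marginalize to pure (boundary) states while all 8 non-local boxes marginalize to $\mu$, on the reversible maps of Eq.~\eqref{eq:reversible-mapping} for uniqueness of the purification, and on the tensor-product structure forced by the bipartite-correlation restriction for the $N$-partite case. The only cosmetic difference is that you decompose the pure purification and read off its marginal, whereas the paper decomposes the candidate internal state itself into mono- and bipartite factors; you also correctly flag the same implicit gap (the classification of pure states of $\rA^{\otimes N}$) that the paper glosses over.
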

\begin{proof}
	The proof is divided in two part. In the first one we will prove the thesis for $N=1$, then in the second part it is extended to an arbitrary number of systems.\\
	Given the system $\rA_1$ we will consider a system $\rA_2$ as the purifying system. The pure states of $\rA_1\otimes\rA_2$ are the 24 pure states $\Omega_{i}$, for $i=0,\ldots,23$ of Eqs.~\eqref{eq:local-bipartite-states},~\eqref{eq:non-local-bipartite-states}. We know that 16 of them (namely the local ones, that are expressed in Eq.~\eqref{eq:local-bipartite-states}) are separable states. So if we consider the marginal state obtained by applying the deterministic effect $e_{\rA_2}$ on the purifying system $\rA_2$ on these separable states we get one of the local 4 state represented in Eq.~\eqref{eq:local-states}, that are pure. However, if we repeat the same procedure on the 8 non-local pure states (represented in Eq.~\eqref{eq:non-local-bipartite-states}) we get the same state for all of them: the maximally mixed state $\mu\in\mathsf{St}(\rA_1)$. So the unique internal state that can be purificated is the maximally mixed state and since the 8 non-local bipartite pure states are all mapped to any other non-local bipartite pure state by the application of a single system reversible map, as shown in Eq.~\eqref{eq:reversible-mapping}, the purification is unique up to a reversible transformation on the purifying system.\\
	We now deal with the $N$-partite scenario. The more general state $\Psi\in\mathsf{St}(\otimes^N\rA)$ has the form:
	\begin{equation*}
		\begin{aligned}
			\Qcircuit @C=1.2em @R=.8em @! R {
				\multiprepareC{2}{\Psi}&			
				\poloFantasmaCn{\rA_1}\qw&
				\qw\\
				\pureghost{\Psi}&
				\vdots&\\
				\pureghost{\Psi_i}&
				\poloFantasmaCn{\rA_N}\qw&
				\qw\\
			}
		\end{aligned}
		\: = \:
		\begin{aligned}
			\Qcircuit @C=1.2em @R=.6em @! R {
				\multiprepareC{1}{\omega_1}&
				\poloFantasmaCn{\rA_1}\qw&
				\qw\\
				\pureghost{\omega_1}&
				\poloFantasmaCn{\rA_2}\qw&
				\qw\\
				\vdots&&\\
				\multiprepareC{1}{\omega_{M/2}}&
				\poloFantasmaCn{\rA_{M-1}}\qw&
				\qw\\
				\pureghost{\omega_{M/2}}&
				\poloFantasmaCn{\rA_M}\qw&
				\qw\\
				\prepareC{\omega_{M/2+1}}&
				\poloFantasmaCn{\rA_{M+1}}\qw&
				\qw\\
				\vdots&&\\
				\prepareC{\omega_{N-M/2}}&
				\poloFantasmaCn{\rA_{N}}\qw&
				\qw\\
			}
		\end{aligned}\, .
	\end{equation*}
	In order to be purificable, every monopartite state has to be pure or purificable and every bipartite state to be pure (since we are admitting no more than bipartite correlations, a bipartite state can be purificable only if it is pure). But if one of the states $\omega_j$ for $j=1,\ldots,N-M/2$ is pure then $\Psi$ can not be internal. Since the only purificable internal state is $\mu$, we have that the only purificable internal $N$-partite state is $\mu\otimes\ldots\otimes\mu$ (thanks to local discriminability the parallel composition of internal states is still an internal state). Furthermore, for what we have seen before, the purification of $\mu\otimes\ldots\otimes\mu$ is unique up to a reversible transformation. This transformation is nothing else that the parallel composition of the maps $(U_k\otimes\tI)$ where $U_k$ are the maps of Eq.~\eqref{eq:reversible-mapping}.
\end{proof}
\noindent\textit{Remark:} in the previous proof we noticed that the maximally mixed state of a system $\otimes^N\rA$ is not the unique mixed state that is purificable. In fact, the more general state $\Phi\in\mathsf{St}(\otimes^N\rA)$ that can be purificated is of the form: $\Phi=\phi_1\otimes\ldots\otimes\phi_N$ where
\begin{equation*}
	\phi_i=
	\begin{cases}
		&\mu\\
		&\omega_{j}\quad\text{for }j\in\left[0,3\right]\\
		&\Omega_{k}\quad\text{for }j\in\left[16,23\right]
	\end{cases}\quad\text{for }i=1,\ldots,N\,,
\end{equation*}
and its purification is still unique up to reversible transformations on the purifying systems.
\subsection{N-partite PR-boxes}
\label{sec:n-partite-boxes}
In the literature of PR-box theory a thorough and systematic study on $k$-partite correlated boxes, with $k\ge3$, has never been made. This represents an important absence within the model since it prevents the theory to be complete. In this Section we show some important consequences that emerge by just considering tripartite correlated boxes with some speculations about the complete $N$-partite model.\\

In our analysis of PR-box model integrated with tripartite boxes we make use of the classification that has been made in Ref.~\cite{tripartite-boxes}. In the article of Pironio \textit{et al.}, the no-signaling polytope is found to have 53856 extremal points, belonging to 46 inequivalent classes. The term inequivalent means that there not exist reversible local transformations that allow to move from a representative of one class to one of another class, while, inside the same class, all the extremal points are connected by local relabelling, see Eqs.~\eqref{eq:local-relabelling} (since we are dealing with three parties boxes, also permutation of the parties is a local relabelling, i.e. $x\rightarrow y$, $y\rightarrow z$, and $z\rightarrow x$ and so on for every possible permutation).\\

Firstly, it is no more granted that the maximally mixed state is the unique internal state purificable in the theory. In fact it could happen that between the 53856 pure tripartite states, there will be one whose marginal state is an internal state different from the maximally mixed one. This leads to think that increasing the number of correlated systems that we are considering the number of internal states that are purificable will also increase. Even if there are not academic works in this matter, it is a very likely and reasonable possibility and we address to future studies to investigate in this direction.\\

Secondly, even if the only internal purificable state would still be the maximally mixed one, the purification is no more unique. To see this it suffices to consider two states $\Psi_1,\Psi_2\in\mathsf{St}(\rA^{\otimes3})$ that have the following form:
\begin{equation*}
	\Psi_1 \, = \,
	\begin{aligned}
		\Qcircuit @C=1.2em @R=.8em @! R {
			\multiprepareC{1}{\Omega}&			
			\poloFantasmaCn{\rA}\qw&
			\qw\\
			\pureghost{\Omega}&
			\poloFantasmaCn{\rA}\qw&
			\qw\\
			\prepareC{\omega}&
			\poloFantasmaCn{\rA}\qw&
			\qw\\
		}
	\end{aligned}
	\quad \text{and}\quad \Psi_2 \, = \,
	\begin{aligned}
		\Qcircuit @C=1.2em @R=.8em @! R {
			\multiprepareC{2}{\Phi^{(44)}}&			
			\poloFantasmaCn{\rA}\qw&
			\qw\\
			\pureghost{\Phi^{(44)}}&
			\poloFantasmaCn{\rA}\qw&
			\qw\\
			\pureghost{\Phi^{(44)}}&
			\poloFantasmaCn{\rA}\qw&
			\qw\\
		}
	\end{aligned}\, ,
\end{equation*}
where $\omega\in\st{A}$ and $\Omega\in\mathsf{St}(\rA^{\otimes2})$ are two pure states and $\Phi^{(44)}$ is a pure state, representative of the $44^{th}$ class described in Ref.~\cite{tripartite-boxes}. If $\Omega$ is one of the non-local bipartite states then they have the same marginal, namely
\begin{equation*}
	\begin{aligned}
		\Qcircuit @C=1.2em @R=.8em @! R {
			\multiprepareC{1}{\Omega}&			
			\poloFantasmaCn{\rA}\qw&
			\qw\\
			\pureghost{\Omega}&
			\poloFantasmaCn{\rA}\qw&
			\multimeasureD{1}{e}\\
			\prepareC{\omega}&
			\poloFantasmaCn{\rA}\qw&
			\ghost{e}\\
		}
	\end{aligned}
	\, = \,
	\begin{aligned}
		\Qcircuit @C=1.2em @R=.8em @! R {
			\prepareC{\mu}&
			\poloFantasmaCn{\rA}\qw&
			\qw\\
		}
	\end{aligned}
	\, = \,
	\begin{aligned}
		\Qcircuit @C=1.2em @R=.8em @! R {
			\multiprepareC{2}{\Phi^{(44)}}&			
			\poloFantasmaCn{\rA}\qw&
			\qw\\
			\pureghost{\Phi^{(44)}}&
			\poloFantasmaCn{\rA}\qw&
			\multimeasureD{1}{e}\\
			\pureghost{\Phi^{(44)}}&
			\poloFantasmaCn{\rA}\qw&
			\ghost{e}\\
		}
	\end{aligned}\, ,
\end{equation*}
where $\mu$ is the maximally mixed state of Eq.~\eqref{eq:maximally-mixed-state} and the second equality derives straightforward once the probability rule of the $44^{th}$ class is written explicitly, as we will see in Eq.~\eqref{eq:tripartite-box-prob-rule}. So we found that $\Psi_1$ and $\Psi_2$ are two purification of the same state but since every reversible transformation $U\in\tU(\rA^{\otimes2})$ is the composition of local reversible maps, "correlation" can not be created and so
\begin{equation*}
	\begin{aligned}
		\Qcircuit @C=1.2em @R=.8em @! R {
			\multiprepareC{2}{\Psi_1}&			
			\poloFantasmaCn{\rA}\qw&
			\qw\\
			\pureghost{\Psi_1}&
			\poloFantasmaCn{\rA}\qw&
			\qw\\
			\pureghost{\Psi_1}&
			\poloFantasmaCn{\rA}\qw&
			\qw\\
		}
	\end{aligned}
	\, \ne \,
	\begin{aligned}
		\Qcircuit @C=1.2em @R=.8em @! R {
			\multiprepareC{2}{\Psi_2}&
			\qw&		
			\poloFantasmaCn{\rA}\qw&
			\qw&
			\qw\\
			\pureghost{\Psi_2}&
			\poloFantasmaCn{\rA}\qw&
			\multigate{1}{U}&
			\poloFantasmaCn{\rA}\qw&
			\qw\\
			\pureghost{\Psi_2}&
			\poloFantasmaCn{\rA}\qw&
			\ghost{U}&
			\poloFantasmaCn{\rA}\qw&
			\qw\\
		}
	\end{aligned}\quad\forall\,U\in\tU(\rA^{\otimes2})\,.
\end{equation*}

Furthermore, restricting our attention to the pure bipartite states, we have noticed that when we pick up two of these states that have the same marginal, than there is always a local reversible map from one to the other and vice-versa. This is the case for the 8 non-local bipartite boxes, that are all connected by local transformations in $\tU(\rA)$ of Eq.~\eqref{eq:reversible-mapping}. This mechanism will turn out to be exactly the one responsible for the impossibility of perfectly secure bit commitment, as we will see in detail in the next Chapter. Since in the tripartite scenario not all the tripartite non-local boxes are connected by local transformation (we remind that local relabelling is not enough to change from a class to another), it is reasonable to think that perfect bit commitment will be possible, or at least a completely new way of cheating has to be thought. With this purpose we will propose a scheme of bit commitment as the conclusive Section of the next Chapter.

\chapter{No Bit Commitment in PR-Boxes}
\label{chap:no-bc}	
In the past years numerous protocols have been proposed to realize bit commitment using PR-boxes. However, as outlined by A.J. Short, N. Gisin, and S. Popescu in Ref.~\cite{short-gisin-popescu}, ``\textit{it is surprising that the possibility that non-local correlations which are stronger than those in quantum mechanics could be used for bit commitment, because it is the very existence of non-local correlations which in quantum mechanics prevents bit commitment}''. In that article they particularly referred to the  protocol prosed by S. Wolf and J. Wullschleger \cite{wolf} and showed that it was erroneous by argument of causality. After that also Buhrman \textit{et al.} \cite{Buhrman_2006} proposed a bit commitment protocol in PR-box theory that was claimed to be unconditionally secure and where the counter-proof of Short, Gisin and Popescu did not work anymore.\\

From a OPT point of view, when dealing with PR-boxes, some issues arise since they still not have a complete and closed theory. In fact, $k$-partite boxes with $k\ge3$ have been studied only roughly and a coherent and comprehensive theory has not been proposed. As we have seen in Section~\ref{sec:n-partite-boxes}, simplistic generalizations are not adequate since admitting more than bipartite correlations alters significantly the theory. Nevertheless, in the literature not only PR-box model is generally considered admitting no more than bipartite states, but also local transformations (that are admissible in the theory) are ignored.\\

In this final Chapter we propose a proof of impossibility of perfectly secure bit commitment in PR-boxes (even if under two important limitations: pure input states and bipartite boxes, the proof includes almost all the protocols proposed in literature that make use of PR-boxes). Furthermore we will explicitly describe a cheating protocol, contextualized in OPTs, that confute both the scheme proposed by Wolf and Wullschleger and the one by Buhrman \textit{et al.}. We will show that just admitting local atomic reversible transformations the protocols proposed in literature can be cheated.\\
Our proof joins the work published by Barnum, Dahlsten, Leifer, and Toner \cite{non-classicality} where, in the framework of probabilistic theories, they prove that in all theories that are locally non-classical but do not have entanglement, there exists a bit commitment protocol that is exponentially secure in the number of systems used. If the protocol of Buhrman \textit{et al.} would have been turned out to be correct then it would have represented the first example of an unconditionally secure bit commitment protocol valid in a theory with entanglement. However the question if a theory with entanglement admits perfectly secure bit commitment is still open.\\

Finally we will sketch at the end of the Chapter a bit commitment scheme that make use of tripartite non-local boxes that is not more cheatable by local transformations and it could satisfy perfectly secure bit commitment. But advancements in the theory are necessary in order to give a definitive answer.

\section{No-Perfectly Secure Bit Commitment}

In this Section we provide an explicit proof of the impossibility of perfectly secure bit commitment in PR-box theory. However, two important limitation will be adopted. Even if we will consider arbitrary $N$-partite systems (with $N$ finite), we will not admit more than bipartite correlations (it will be taken for granted for the rest of the Section). Furthermore $\Psi_0$ and $\Psi_1$ will always be selected between the pure states. This is due to the fact that discrimination has never been studied in PR-box theory and the only results we rely on are those of Section~\ref{sec:discriminability-PRboxes} that refers to pure states. If it would turn out that the strategy exposed in Section~\ref{sec:discriminability-PRboxes} is the unique one to grant perfect discriminability between two arbitrary bipartite states then the protocol could be easily extended also to $\Psi_0,\,\Psi_1$ as arbitrary mixed states.\\

Actually we will prove our theorem in two different way. In the first proof, we will state the property of perfect bit commitment and we will show that inconsistencies arise. In the second, that we will call alternative proof, we will show that, with some shrewdness, the proof of Section~\ref{sec:lightening-noBC} can be used also in this context.

\subsection{First Proof}

In this Section we make use of the definition of the protocol given in Section~\ref{sec:defn-bit-commitment} and nomenclature of states and transformations given in Section~\ref{sec:prbox-as-opt}.\\
Before the main theorem a preliminary lemma is in order.
\begin{lemma}
	If a bit commitment protocol is correct with probability one then the two input states $\Psi_0,\,\Psi_1\in\mathsf{St}(\rA^{\otimes N}\rB^{\otimes N})$ shared by Alice ($\rA^{\otimes N}$) and Bob ($\rB^{\otimes N}$) are of the form
	\begin{equation}
		\label{eq:input-states}
		\begin{aligned}
			\Qcircuit @C=1.2em @R=.7em @! R {
				\multiprepareC{4}{\Psi_i}&			
				\poloFantasmaCn{\rA_1}\qw&
				\qw\\
				\pureghost{\Psi_i}&
				\poloFantasmaCn{\rB_1}\qw&
				\qw\\
				\pureghost{\Psi_i}&
				\vdots&\\
				\pureghost{\Psi_i}&
				\poloFantasmaCn{\rA_N}\qw&
				\qw\\
				\pureghost{\Psi_i}&
				\poloFantasmaCn{\rB_N}\qw&
				\qw
			}
		\end{aligned}
		\: = \:
		\begin{aligned}
			\Qcircuit @C=1.2em @R=.6em @! R {
				\multiprepareC{1}{\Omega_{k_1(i)}}&
				\poloFantasmaCn{\rA_1}\qw&
				\qw\\
				\pureghost{\Omega_{k_1(i)}}&
				\poloFantasmaCn{\rB_1}\qw&
				\qw\\
				\vdots&&\\
				\multiprepareC{1}{\Omega_{k_M(i)}}&
				\poloFantasmaCn{\rA_M}\qw&
				\qw\\
				\pureghost{\Omega_{k_M(i)}}&
				\poloFantasmaCn{\rB_M}\qw&
				\qw\\
				\prepareC{\omega_{k_{M+1}(i)}}&
				\poloFantasmaCn{\rA_{M+1}}\qw&
				\qw\\
				\prepareC{\omega_{k_{M+2}(i)}}&
				\poloFantasmaCn{\rB_{M+1}}\qw&
				\qw\\
				\vdots&&\\
				\prepareC{\omega_{k_{2N-M-1}(i)}}&
				\poloFantasmaCn{\rA_{N}}\qw&
				\qw\\
				\prepareC{\omega_{k_{2N-M}(i)}}&
				\poloFantasmaCn{\rB_{N}}\qw&
				\qw\\
			}
		\end{aligned}
		\quad \text{ for }i=0,\,1
	\end{equation}	
	where $k_j(i)\in\left[16,23\right]$ for $j=1,\ldots,M$ and $k_j(i)\in\left[0,3\right]$ for $j=M+1,\ldots,2N-M$.
\end{lemma}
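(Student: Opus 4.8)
The plan is to decompose the shared state into its correlated components and then argue that correctness forces each component to be \emph{pure} — so that no reversible transformation can be undone except along the lines already described — which in the PR-box (bipartite) theory pins the form down to exactly (\ref{eq:input-states}).

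First I would recall from Section~\ref{sec:prbox-as-opt} that every state of $\rA^{\otimes N}\rB^{\otimes N}$ (with only bipartite correlations allowed) is, up to a permutation of the elementary systems, a parallel composition of bipartite boxes (each either a local box $\Omega_{4i+j}$ or a non-local box $\Omega_k$, $k\in[16,23]$) together with single-system states $\omega_j$, $j\in[0,3]$. Since $\Psi_i$ is the \emph{input} of the honest protocol, it must be a state that the protocol can actually use; I would first argue that without loss of generality we may take $\Psi_0,\,\Psi_1$ to be pure (the protocol may always start from a pure state, purifying any mixture into an ancilla under one party's control), and then use Corollary~\ref{cor:parallelcompositionofpurestates}: a parallel composition of pure states is pure, and conversely a pure state of a composite system, when it factorizes across a cut, factorizes into pure pieces. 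So $\Psi_i$ decomposes into pure bipartite boxes and pure single-system states, matching the \emph{shape} of the right-hand side of (\ref{eq:input-states}) — the only thing still to justify is that no piece can be a genuinely correlated-but-mixed block (impossible here since bipartite correlated states that are purifiable must be pure, by the Remark after the purification theorem of Section~\ref{sec:purification-of-maximally-mixed-states}) and that the \emph{local} blocks are exactly the four $\omega_j$ while the \emph{correlated} blocks are exactly the eight non-local $\Omega_k$.

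The role of correctness (Eq.~\eqref{eq:correctness} with probability one) is the crux: I would show that if any monopartite block $\omega_{k_j(i)}$ of $\Psi_i$ on Alice's side (respectively Bob's side) were \emph{mixed} rather than pure, or if any correlated block were a non-extremal box, then Bob's verification measurement at the opening could not distinguish the two honest branches $b=0$ and $b=1$ with certainty — because a mixed local block contains, in its refinement set, states compatible with either branch, so Bob accepts with probability strictly less than one on at least one branch, contradicting perfect correctness. Hence each block must be pure, i.e.\ drawn from $\{\omega_0,\omega_1,\omega_2,\omega_3\}$ for the single-system factors and from $\{\Omega_0,\dots,\Omega_{23}\}$ for the bipartite factors; and since the bipartite local boxes $\Omega_{4i+j}=\omega_i\otimes\omega_j^T$ are themselves products, we may absorb them into the single-system list, leaving the genuinely bipartite factors to be the non-local ones, $k_j(i)\in[16,23]$. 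Relabelling the elementary systems so that the $M$ correlated (non-local) Alice--Bob pairs come first and the $2N-M$ uncorrelated single systems come last then yields precisely (\ref{eq:input-states}).

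The main obstacle, I expect, is the middle step: justifying that correctness-with-probability-one genuinely forbids mixed components, rather than merely making them redundant. One has to be careful that ``Bob accepts with probability one on the honest branch'' is a statement about the composite state $\Psi_i$ and Bob's POVM $\{a_0,a_1\}$, and to convert it into a constraint on the \emph{marginal} blocks one needs the factorization of the deterministic effect ($e_{\rA^{\otimes N}\rB^{\otimes N}}=\otimes\, e_\rA$, noted in Section~\ref{sec:prbox-as-opt}) together with the fact that in PR-box theory the only admissible bipartite effects are the factorized ones $B_{4i+j}=b^{(i)}\otimes b^{(j)\,T}$. That last feature — the absence of non-factorized measurements — is exactly what forces Bob's verification to ``see'' each block separately, and hence what makes the purity of each block not just convenient but mandatory; getting this reduction airtight, while also handling the case analysis over local versus non-local blocks, is where the real work lies.
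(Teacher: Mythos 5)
There is a genuine gap at the step you yourself identify as the crux. Correctness with probability one only says that Bob's two-outcome verification POVM $\{a_0,a_1\}$ satisfies $(a_b|\Psi_b)=1$ for $b=0,1$; together with $a_0+a_1=e$ this is exactly the statement that $\Psi_0$ and $\Psi_1$ are perfectly discriminable. It does \emph{not} force the individual blocks of $\Psi_i$ to be pure, because two distinct \emph{mixed} states can be perfectly discriminable. For instance, on a single squit the mixtures $\tfrac12(\omega_0+\omega_1)$ and $\tfrac12(\omega_2+\omega_3)$ of Eq.~\eqref{eq:local-states} are separated with certainty by the admissible binary test $\{b^{(3)},e-b^{(3)}\}$, since $\tfrac12\Tr[{b^{(3)}}^T(\omega_0+\omega_1)]=1$ while $\tfrac12\Tr[{b^{(3)}}^T(\omega_2+\omega_3)]=0$. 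So a mixed local block does not by itself spoil perfect correctness, and your claimed implication ``mixed block $\Rightarrow$ Bob accepts with probability $<1$ on some honest branch'' fails. Your preliminary move of ``purifying any mixture into an ancilla'' is also unavailable in this theory: in the bipartite-restricted PR-box model the only purifiable internal state is the maximally mixed one (Section~\ref{sec:purification-of-maximally-mixed-states}), so a generic mixed input simply has no purification to absorb.

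The paper does not attempt your derivation. Purity of $\Psi_0,\Psi_1$ is a standing \emph{assumption} of the whole section, announced before the lemma and motivated precisely by the fact that discriminability has only been analysed for pure states; the text even remarks that extending the result to mixed inputs would require first settling whether the pure-state discrimination strategy is the only one. Given that assumption together with the no-more-than-bipartite-correlations restriction, the factorized form \eqref{eq:input-states} is essentially immediate (a pure $2N$-partite state is a parallel composition of pure single-system and pure bipartite blocks, and the factorized bipartite local boxes are absorbed into the single-system list, exactly as in your last step), and the discriminability results of Section~\ref{sec:discriminability-PRboxes} are invoked only to check that correctness with probability one is \emph{achievable} for such states, via the parallel composition of the pairwise discriminating POVMs. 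If you want correctness to be the hypothesis genuinely doing the work, you would have to prove that perfect discriminability in this theory forces purity of every block, and the counterexample above shows that this is false as stated.
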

\begin{proof}
	The thesis follows immediately from the two assumptions we made: no more than bipartite correlated boxes and pure input states. In fact, for every two pure states a perfectly discriminating procedure always exists, as analysed in Section~\ref{sec:discriminability-PRboxes}. For discriminate between two parallel compositions of pure states the parallel composition of the discriminating POVMs for each pair of states is sufficient.
\end{proof}
In the previous lemma it would be possible that Alice and Bob have in control also non-factorized bipartite states, i.e. $\Omega_{k_{M+1}}\in\mathsf{St}(\rA_{M+1}\rA_{M+3})$ for $k_{M+1}\in\left[16,23\right]$ instead of $\omega_{k_{M+1}}\otimes\omega_{k_{M+3}}\in\mathsf{St}(\rA_{M+1}\rA_{M+3})$ for $k_{M+1},k_{M+3}\in\left[0,3\right]$, however this does not carry any modification in the proof and so, to not make the notation even more troublesome, we will refer to Eq.~\eqref{eq:input-states} as the more general input states.
\begin{theorem}
	Perfect bit commitment is impossible in PR-box theory.
\end{theorem}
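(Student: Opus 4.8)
The plan is to assume that a perfect bit commitment protocol exists in PR-box theory — so that it is simultaneously correct with probability one, perfectly concealing, and perfectly binding — and to derive a contradiction from the structural constraints these requirements impose on the shared states $\Psi_0,\Psi_1\in\mathsf{St}(\rA^{\otimes N}\rB^{\otimes N})$. First I would invoke the preliminary lemma: correctness with probability one forces $\Psi_0$ and $\Psi_1$ to have the factorized form of Eq.~\eqref{eq:input-states}, namely a parallel composition of $M$ bipartite blocks $\Omega_{k_j(i)}$ with $k_j(i)\in[16,23]$ (each with one factor held by Alice and one by Bob), together with single-system blocks $\omega_{k_j(i)}$ with $k_j(i)\in[0,3]$ held by one of the two parties. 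By the remark following the lemma, some of these blocks may instead be non-local bipartite boxes lying entirely on Alice's or on Bob's side, and I would note that this case is treated identically in everything that follows; the partition into non-local and local blocks, and the label $M$, are the same for $i=0$ and $i=1$.

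Next I would impose perfect concealing, $(e|_{\rA}|\Psi_0)_{\rB^{\otimes N}}=(e|_{\rA}|\Psi_1)_{\rB^{\otimes N}}$. Applying $(e|_{\rA}$ erases every system of Alice: each of her local boxes contributes a scalar $1$, and for each block shared with Bob the $\rA$-marginal of $\Omega_{k_j(i)}$ is the maximally mixed state $\mu$ on Bob's factor, \emph{independently of which of the eight non-local states was chosen} (Section~\ref{sec:purification-of-maximally-mixed-states}). Hence the $i$-dependence of Bob's marginal survives only through the blocks sitting entirely on Bob's side, so perfect concealing is equivalent to requiring that every block held by Bob be the same in the two branches, while it places no constraint whatsoever on which non-local box occupies each shared block, nor on Alice's own blocks.

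Then I would exhibit an explicit cheating transformation, contradicting perfect binding. By transitivity of the reversible group on the square (Eq.~\eqref{eq:reversible-mapping-on-local-states}), each of Alice's local boxes $\omega_{k_j(0)}$ is carried to $\omega_{k_j(1)}$ by a single-system reversible transformation; and by the fact (Eq.~\eqref{eq:reversible-mapping}) that any non-local bipartite pure state is carried to any other by a single-system reversible transformation acting on just \emph{one} of its two factors, the $\rA$-factor of each shared block $\Omega_{k_j(0)}$ can be rotated to $\Omega_{k_j(1)}$ by a reversible map acting on Alice's side alone. Taking $\tU$ to be the parallel composition over all of Alice's systems of these single-system reversible transformations (and the identity on the blocks that concealing has already forced to agree), one obtains a reversible channel with $(\tU\otimes\tI_{\rB^{\otimes N}})|\Psi_0)=|\Psi_1)$ — exactly the switch forbidden by Eq.~\eqref{eq:perfectly-binding}. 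So a protocol cannot be perfectly concealing and perfectly binding at once, which proves the theorem.

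The main obstacle I expect is bookkeeping rather than anything conceptual: one must be careful that the cheating transformation can always be localized on Alice's systems despite the shared bipartite blocks, which rests squarely on the PR-box peculiarity that non-local bipartite pure states are interconverted by \emph{local} single-system reversibles, so Alice never has to touch Bob's half; and one must correctly dispatch the edge cases of the lemma's remark (non-factorized bipartite blocks sitting entirely on one party). A secondary point worth stating carefully is that perfect concealing is genuinely \emph{used} in the cheat — it is precisely what guarantees that Bob's blocks coincide in the two branches, so that acting only on Alice's side indeed lands on $\Psi_1$ rather than on something else.
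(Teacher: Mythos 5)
Your proposal is correct and follows essentially the same route as the paper's own proof: the preliminary lemma forcing the factorized form of $\Psi_0,\Psi_1$, the observation that perfect concealing constrains only Bob's own blocks (since all eight non-local bipartite boxes share the maximally mixed marginal $\mu$), and the cheating map built as a parallel composition of single-system reversible transformations on Alice's side via Eq.~\eqref{eq:reversible-mapping}. Your explicit remark that concealing is what guarantees Bob's blocks coincide — so that Alice's local action really lands on $\Psi_1$ — is a welcome clarification of a step the paper leaves implicit, but it is not a different argument.
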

\begin{proof}
	If a bit commitment is perfect it means that it should be correct with probability one, perfectly concealing and perfectly binding.\\
	If it is correct with probability one then, by the previous Lemma, the input states $\Psi_0,\,\Psi_1\in\mathsf{St}(\rA^{\otimes N}\rB^{\otimes N})$ must have the form of Eq.~\eqref{eq:input-states}.\\
	If it is also perfectly concealing, then we have to impose the condition of Eq.~\eqref{eq:perfectly-concealing}, i.e. $(e|_{\rA^{\otimes N}}|\Psi_0)_{\rA^{\otimes N}\rB^{\otimes N}}=(e|_{\rA^{\otimes N}}|\Psi_1)_{\rA^{\otimes N}\rB^{\otimes N}}$:
	\begin{equation*}
		\begin{aligned}
			\Qcircuit @C=1.2em @R=.7em @! R {
				\multiprepareC{1}{\Psi_0}&
				\poloFantasmaCn{\rA^{\otimes N}}\qw&
				\measureD{e_{\rA^{\otimes N}}}\\
				\pureghost{\Psi_0}&
				\poloFantasmaCn{\rB^{\otimes N}}\qw&
				\qw\\
			}
		\end{aligned}
		\: = \:
		\begin{aligned}
			\Qcircuit @C=1.2em @R=.7em @! R {
				\multiprepareC{1}{\Omega_{k_1(0)}}&
				\poloFantasmaCn{\rA_1}\qw&
				\measureD{e_\rA}\\
				\pureghost{\Omega_{k_1(0)}}&
				\poloFantasmaCn{\rB_1}\qw&
				\qw\\
				\vdots&&\\
				\multiprepareC{1}{\Omega_{k_M(0)}}&
				\poloFantasmaCn{\rA_M}\qw&
				\measureD{e_\rA}\\
				\pureghost{\Omega_{k_M(0)}}&
				\poloFantasmaCn{\rB_M}\qw&
				\qw\\
				\prepareC{\omega_{k_{M+2}(0)}}&
				\poloFantasmaCn{\rB_{M+1}}\qw&
				\qw\\
				\vdots&&\\
				\prepareC{\omega_{k_{2N-M}(0)}}&
				\poloFantasmaCn{\rB_{N}}\qw&
				\qw\\
			}
		\end{aligned}
		\: = \:
		\begin{aligned}
			\Qcircuit @C=1.2em @R=.7em @! R {
				\multiprepareC{1}{\Omega_{k_1(1)}}&
				\poloFantasmaCn{\rA_1}\qw&
				\measureD{e_\rA}\\
				\pureghost{\Omega_{k_1(1)}}&
				\poloFantasmaCn{\rB_1}\qw&
				\qw\\
				\vdots&&\\
				\multiprepareC{1}{\Omega_{k_M(1)}}&
				\poloFantasmaCn{\rA_M}\qw&
				\measureD{e_\rA}\\
				\pureghost{\Omega_{k_M(1)}}&
				\poloFantasmaCn{\rB_M}\qw&
				\qw\\
				\prepareC{\omega_{k_{M+2}(1)}}&
				\poloFantasmaCn{\rB_{M+1}}\qw&
				\qw\\
				\vdots&&\\
				\prepareC{\omega_{k_{2N-M}(1)}}&
				\poloFantasmaCn{\rB_{N}}\qw&
				\qw\\
			}
		\end{aligned}
		\: = \:
		\begin{aligned}
			\Qcircuit @C=1.2em @R=.7em @! R {
				\multiprepareC{1}{\Psi_1}&
				\poloFantasmaCn{\rA^{\otimes N}}\qw&
				\measureD{e_{\rA^{\otimes N}}}\\
				\pureghost{\Psi_1}&
				\poloFantasmaCn{\rB^{\otimes N}}\qw&
				\qw\\
			}
		\end{aligned}\,.
	\end{equation*}
	So, for the systems from $\rB_{M+1}$ to $\rB_{N}$ we have that $\omega_{k_{M+2s}(0)}=\omega_{k_{M+2s}(1)}$ for $s=1,\ldots,N-M$. For the systems from $\rB_{M+1}$ to $\rB_{N}$ we can not make any further deductions since these 8 pure states have all the same marginal.\\
	
	In any case if the protocol is correct with probability one and perfectly concealing it can not be perfectly binding.\\
	In fact, for the non-local bipartite states there will be one of the local transformations of Eq.~\eqref{eq:single-system-unitaries} such that
	\begin{equation}
		\begin{aligned}
			\Qcircuit @C=1.2em @R=.7em @! R {
				\multiprepareC{1}{\Omega_{k_j(0)}}&
				\poloFantasmaCn{\rA}\qw&
				\gate{U_{k_j}}&
				\poloFantasmaCn{\rA}\qw&
				\qw\\
				\pureghost{\Omega_{k_j(i)}}&
				\qw&
				\poloFantasmaCn{\rB}\qw&
				\qw&\qw\\
			}
		\end{aligned}\,
		\: = \:
		\begin{aligned}
			\Qcircuit @C=1.2em @R=.7em @! R {
				\multiprepareC{1}{\Omega_{k_j(1)}}&
				\poloFantasmaCn{\rA}\qw&
				\qw\\
				\pureghost{\Omega_{k_j(1)}}&
				\poloFantasmaCn{\rB}\qw&
				\qw\\
			}
		\end{aligned}\,,
	\end{equation}
	as expressed by Eq.~\eqref{eq:reversible-mapping}, where $U_{k_j}\in\tU(\rA)$. Furthermore for all the other states, namely for the local states $\omega_{k_{M+2s+1}(i)}$ with $s=0,\ldots,N-M-1$, $i=0,1$, it is immediate that there will be local transformations $U_{k_{M+2s+1}}\in\tU(\rA)$, $s=0,\ldots,N-M-1$ chosen from the 4 of Eq.~\eqref{eq:single-system-unitaries} that will permit to Alice to switch unnoticed by Bob from $\Psi_0$ to $\Psi_1$ and vice-versa.
\end{proof}

\subsection{Alternative Proof}

In Section~\ref{sec:lightening-noBC} we pointed out the three sufficient conditions that a theory has to satisfy in order to ensure the impossibility of perfectly secure bit commitment: causality, atomicity of composition and the one required in Axiom~\ref{axiom:7}.\\

Clearly PR-box theory is manifestly causal, since Eq.~\eqref{eq:no-signalling} imposes exactly the no-signalling constraint.\\
Furthermore, also atomicity of composition is fulfilled by PR-box theory. The parallel composition of atomic operations is still atomic due to local discriminability, see Ref.~\cite{manessi}. To verify that also the sequential composition of atomic operation is still atomic it only need to consider all the atomic operations in the theory, see Ref.~\cite{d'ariano-tosini}, and straightforwardly compute their composition.\\

For what concern Axiom~\ref{axiom:7} some considerations are in order. We required the existence of at least one dynamically faithful pure state, $\Psi^{(\rA)}$, and the existence of a purification for every state $R$ that has the same marginal of $\Psi^{(\rA)}$. Actually the latter assumption is excessive. If we look at Theorem~\ref{thm:RevDila}, where this assumption needs to work, it would be enough that every state $|R)_{\rB\tilde{\rA}}$ that is obtained from $|\Psi^{(\rA)})_{\rA\tilde{\rA}}$ by a local channel $\tC$, i.e.  $|R)_{\rB\tilde{\rA}}\coloneqq(\tC\otimes\tI_{\tilde{\rA}})|\Psi^{(\rA)})_{\rA\tilde{\rA}}$, is purificable. In fact, if $R$ is so defined, it certainly has the same marginal of $\Psi^{(\rA)}$: 
\begin{equation*}
	\begin{aligned}
		\Qcircuit @C=1.2em @R=.7em @! R {
			\multiprepareC{1}{R}&
			\poloFantasmaCn{\rB}\qw&
			\measureD{e}\\
			\pureghost{R}&
			\poloFantasmaCn{\tilde{\rA}}\qw&
			\qw\\					
		}
	\end{aligned}
	\, = \,
	\begin{aligned}
		\Qcircuit @C=1.2em @R=.7em @! R {
			\multiprepareC{1}{\Psi^{(\rA)}}&
			\poloFantasmaCn{\rA}\qw&
			\gate{\tC}&
			\poloFantasmaCn{\rB}\qw&
			\measureD{e}\\
			\pureghost{\Psi^{(\rA)}}&
			\qw&
			\poloFantasmaCn{\tilde{\rA}}\qw&
			\qw&\qw\\					
		}
	\end{aligned}
	\, = \,
	\begin{aligned}
		\Qcircuit @C=1.2em @R=.7em @! R {
			\multiprepareC{1}{\Psi^{(\rA)}}&
			\poloFantasmaCn{\rA}\qw&
			\measureD{e}\\
			\pureghost{\Psi^{(\rA)}}&
			\poloFantasmaCn{\tilde{\rA}}\qw&
			\qw\\					
		}
	\end{aligned}\, .
\end{equation*}
Now if we limit to consider $\tC$ as a local reversible channel (since in the theory the reversible transformations are atomic), being $\Psi^{(\rA)}$ pure by hypothesis, due to atomicity of composition also $R$ is pure and hence trivially purificable. So, under this further constraint, we only need to check the existence of a dynamically faithful pure state within the theory.\\
We can help ourselves by the fact that the purification of an internal state is dynamically faithful (and obviously pure). This result was published in Ref.~\cite{chiribella} for theories that satisfy local discriminability and purification but it can easily extended to PR-box theory given the existence of at least one internal state that is purificable: $\mu$.\\

Now the answer is immediate, all the non-local bipartite extremal point of the 8-dimensional polytope are dynamically faithful pure states. At this point it is easy to see that we can just choose one of the pure states in Eq.~\eqref{eq:non-local-bipartite-states}, as the dynamically faithful pure state in Axiom~\ref{axiom:7}.\\
Finally, it is straightforward to verify that the purification is unique up to a local reversible transformation on the purifying system, see Eq.~\eqref{eq:reversible-mapping}.\\

The further limitation that we imposed, on the atomicity of $\tC$, practically requires that for every commitment protocol the two encodings $\tA_0,\,\tA_1\in\mathsf{Transf}(\rA_1...\rA_N\rightarrow\rB_1...\rB_N\rF_N)$ have the marginal atomic: $(e|_{\rF_N}\tA_0=(e|_{\rF_N}\tA_1\coloneqq\tC$. But from Eq.~\eqref{eq:reversible} we know that every element of the set of reversible transformation of the composite system $\rA\otimes\rA$ is nothing else than the tensor product of local reversible transformations (with eventually the swap map). So if we limit to consider only atomic local transformations the initial new constraint is satisfied.\\

Finally, thanks to local discriminability, given two pure faithful states $\Psi^{(\rA)}\in\mathsf{St}(\rA\tilde{\rA})$ and $\Psi^{(\rB)}\in\mathsf{St}(\rB\tilde{\rB})$ for system $\rA$ and $\rB$, respectively, then also $\Psi^{(\rA)}\otimes\Psi^{(\rB)}\in\mathsf{St}(\rA\tilde{\rA}\rB\tilde{\rB})$ is a pure dynamically faithful state for the compound system $\rA\rB$ (a rigorous proof can be found in Ref.~\cite{chiribella}) and so the previous discussion is easily generalizable to any $N$-partite system.\\

In conclusion, the proof of impossibility of perfectly secure bit commitment of Section~\ref{sec:lightening-noBC} can be easily extended to include PR-box theory (limited to no more than bipartite correlations and only reversible transformations).

\section{Unconditionally Secure Bit Commitment}
\label{sec:cheating-BC}
As outlined before, in Ref.~\cite{Buhrman_2006} Buhrman \textit{et al.} proposed a bit commitment protocol that was claimed to be unconditionally secure. They would like to show that superstrong non-local correlations in the form of non-local boxes enable to solve cryptographic problems otherwise known to be impossible. In particular, their result would imply that the no-signaling principle and secure computation are compatible in principle.\\
However, we now prove how it would be possible for Alice to perfectly cheat, i.e. with null probability of being detected by Bob, making use of the local reversible atomic transformations of Eq.~\eqref{eq:reversible-mapping}.\\

In our analysis, we begin from dealing with only one bipartite PR-box and we suppose that Alice's input is the committed bit (even if this is not a well defined bit commitment protocol it is as well an instructive example in order to simplify the following analysis).\\
Alice and Bob share the non-local bipartite PR-box $\Omega_{18}\in\mathsf{St}_\mathbb{R}(\rA\rB)$ (to which corresponds the probability rule $p_{000}$) but they have access only to system $\rA$ and $\rB$, respectively. We can summarise the protocol as follows.\\

\noindent\textbf{COMMIT:}
\begin{itemize}
	\item Alice select her committed bit $x$, inputs it and obtain output bit $a$;
	\item Bob inputs a random bit $y$ and obtains output bit $b$.
\end{itemize}
\textbf{REVEAL:}
\begin{itemize}
	\item Alice sends $x$ and $a$ to Bob;
	\item Bob checks to see if $a\oplus b=xy$. If this relation is true, Bob accepts $x$ as the revealed bit, otherwise he knows that Alice has cheated and rejects Alice’s revelation.
\end{itemize}
It is easy to see that Alice has a probability to cheat successfully equal to $\frac{1}{2}$.\\
With the help of local transformation Alice can make the probability of successful cheating equal to 1. In fact, it is sufficient to find $(\alpha,\beta,\gamma)$ of Eq.~\eqref{eq:15} such that, for given $x$ and $y$ and for the output couple $(a,b)$ (i.e. such that $p_{\alpha\beta\gamma}(a,b|x,y)\ne0$) exists a generic function $f:\{0,1\}\longrightarrow\{0,1\}$ such that, given $a^\prime=f(a)$ and $x^\prime=x\oplus1$, $p_{000}(a^\prime,b|x^\prime,y)\ne0$.\\
Mathematically, to find suitable $(\alpha,\beta,\gamma)$ it is sufficient to resolve
\begin{equation}\label{eq:cheating}
	\begin{cases}
		& a\oplus b=xy\oplus \alpha x\oplus \beta y\oplus\gamma\\
		& a^\prime\oplus b=x^\prime y
	\end{cases}.
\end{equation}
We find that $\beta=1$ and $f(a)=a\oplus\alpha x\oplus\gamma$, for every choice of $\alpha$ and $\gamma$.\\
In conclusion, if Alice perform a local transformation (given by Eq.~\eqref{eq:single-system-unitaries}) such that $p_{000}\longrightarrow p_{\alpha1\gamma}$ and then inputs her bit $x$ and gets output $a$, she can reveal $x^\prime$ and $a^\prime=a\oplus\alpha x\oplus\gamma$ to Bob, who will accept with probability 1.\\

We can now consider the unconditionally secure bit-commitment protocol proposed in Ref.~\cite{Buhrman_2006} where $2n+1$ non-local bipartite PR-boxes in the state $\Omega_{18}\in\st{AB}$ are shared between Alice and Bob.\\
The authors found that Alice probability of successfully cheating is at maximum equal to 1/2 but can be asymptotically reduced if the protocol is repeated $k$ times. However, using local transformation Alice can cheat without being detected with probability 1. We refer to the original article about the commitment protocol and we outline only the "cheating procedure".\\

\noindent\textbf{COMMIT:}
\begin{itemize}
	\item Alice wants to commit to bit $c$ but to send to Bob bit $c^\prime=c\oplus1$. So she chooses $x\in\{0,1\}^{2n+1}$ by choosing the first $2n$ bits such that $|x^\prime_1...x^\prime_{2n}|_{11}$ is even where $x_i^\prime=x_i\oplus1$ for $i=1,2,...,2n$ (given a string of even length $x$, $|x|_{11}$ is the number of substring "11" in $x$ starting at an odd position) and then choosing $x_{2n+1}=c$ (analogously she can choose $x\in\{0,1\}^{2n+1}$ such that $|x^\prime_1...x^\prime_{2n}|_{11}$ is odd and $x_{2n+1}=c\oplus1$);
	\item Alice, for each of the $2n+1$ shared boxes, apply a local transformation changing the probability law from $p_{000}\longrightarrow p_{\alpha1\gamma}$, then se puts the bits $x_1,x_2,...,x_{2n+1}$ into the boxes $1,2,...,2n+1$. Let $a_1,a_2,...,a_{2n+1}$ be Alice's output bits from the boxes;
	\item Alice computes the parity of all the "cheated" output bits
	$A^\prime=\oplus_{i=0}^{2n+1}a^\prime_i$ and send $A^\prime$ to Bob, where $a^\prime_i=a_i\oplus\alpha x_i\oplus\gamma$ for $i=1,2,...,2n+1$;
	\item Bob randomly chooses a string $y\in_\mathbb{R}\{0,1\}^{2n+1}$ and puts the bits $y_1,y_2,...,y_{2n+1}$ into his boxes. We call the output bits from his boxes $b_1,b_2,...,b_{2n+1}$.
\end{itemize}
Then the \textbf{REVEAL} phase:
\begin{itemize}
	\item Alice sends $c^\prime$, her string $x^\prime$ (where $x_i^\prime=x_i\oplus1$ for $i=1,2,...,2n+1$) and all her $2n+1$ "cheated" outputs bits (i.e. $a^\prime_i$) to Bob;
	\item Bob checks if Alice's data is consistent: $\forall i\in\{0,1\}^{2n+1}$, $x^\prime_i\cdot y_i=a^\prime_i\oplus b_i$ and $|x^\prime_1...x^\prime_{2n}|_{11}+x^\prime_{2n+1}+c^\prime$ is even. Since he finds no error, he accepts $c^\prime$ as the committed bit.
\end{itemize}
In fact, according to Eq.~\eqref{eq:cheating}, every couple $(x^\prime_i,a_i^\prime)$ for $i=1,2,...,2n+1$ sent by Alice to Bob	satisfies $x_i^\prime\cdot y=a_i^\prime\oplus b$ and by the proposed choices of $x\in\{0,1\}^{2n+1}$ also the parity constraints are
secured.

\section{Bit Commitment in Tripartite Scenario}

We have stressed that the proof of impossibility of perfectly secure bit commitment in PR-box theory was limited by considering no more than bipartite correlated boxes in $N$-partite systems. In fact, if we take in consideration just only tripartite boxes, the scenario changes considerably.\\
In this Section we highlight that a scheme of bit commitment protocol like the one in Ref.~\cite{Buhrman_2006} that would make use of tripartite non-local boxes would not subject to the cheating by local reversible transformations and it would represent a possible perfectly (or at least unconditionally) secure bit commitment protocol.\\

In Section~\ref{sec:n-partite-boxes} we used the classification done in Ref.~\cite{tripartite-boxes} where the tripartite correlated boxes were divided in 46 non equivalent classes, i.e. 46 classes whose states are not connected by local reversible transformations. In order to see it directly, it is sufficient to write explicitly the probability rules for those classes. As an example we write out the representatives of the probability rule for three non-local tripartite classes (number 44, 45, and 46 in Ref.~\cite{tripartite-boxes}):
\begin{equation}
	\label{eq:tripartite-box-prob-rule}
	\begin{aligned}
		\text{Class 44: }& p(a,b,c|x,y,z)=
		\begin{cases}
			1/4& a\oplus b\oplus c=xyz\\
			0 & \text{otherwise}
		\end{cases}\\
		\text{Class 45: }& p(a,b,c|x,y,z)=
		\begin{cases}
			1/4& a\oplus b\oplus c=xy\oplus xz\\
			0 & \text{otherwise}
		\end{cases}\\
		\text{Class 46: }& p(a,b,c|x,y,z)=
		\begin{cases}
			1/4& a\oplus b\oplus c=xy\oplus xz\oplus yz\\
			0 & \text{otherwise}
		\end{cases}\\
	\end{aligned},
\end{equation}
and we note that the local relabelling of Eq.~\eqref{eq:local-relabelling} plus the permutations of the parties do not permit to move from every representative of one class to any one of any other class.\\

Now to build our bit commitment protocol we decide to encode $b=0,1$ by choosing as input state $\Psi_0,\Psi_1\in\mathsf{St}(\rA^{\otimes3})$ one representative of the $44^{th}$ class and one of the $45^{th}$, say the ones in Eq.~\eqref{eq:tripartite-box-prob-rule} for the sake of simplicity.\\
Following the strategy in Section~\ref{sec:discriminability-PRboxes} we can construct a POVM that is able to discriminate between them, for example $\{a,e-a\}$ where $a\coloneqq b^{(0)} \otimes b^{(3)}\otimes b^{(0)} + b^{(0)}\otimes b^{(1)}\otimes b^{(2)}+ b^{(2)} \otimes b^{(1)}\otimes b^{(0)}+b^{(2)} \otimes b^{(3)}\otimes b^{(2)}$ and $e=e_\rA\otimes e_\rA\otimes e_\rA$. So the protocol is correct with probability one.\\
Furthermore $\Psi_0$ and $\Psi_1$ are not connected by any local reversible transformation (as pointed out before) and so the protocol is also perfectly binding. At the same time the two input states have the same marginal on Bob system, and so it is also perfectly concealing. Namely, it is not difficult to derive from Eq.~\eqref{eq:tripartite-box-prob-rule} that
\begin{equation*}
	\begin{aligned}
		\Qcircuit @C=1.2em @R=.8em @! R {
			\multiprepareC{2}{\Psi_0}&			
			\poloFantasmaCn{\rA}\qw&
			\qw\\
			\pureghost{\Psi_0}&
			\poloFantasmaCn{\rA}\qw&
			\multimeasureD{1}{e}\\
			\pureghost{\Psi_0}&
			\poloFantasmaCn{\rA}\qw&
			\ghost{e}\\
		}
	\end{aligned}
	\, = \,
	\begin{aligned}
		\Qcircuit @C=1.2em @R=.8em @! R {
			\prepareC{\mu}&
			\poloFantasmaCn{\rA}\qw&
			\qw\\
		}
	\end{aligned}
	\, = \,
	\begin{aligned}
		\Qcircuit @C=1.2em @R=.8em @! R {
			\multiprepareC{2}{\Psi_1}&			
			\poloFantasmaCn{\rA}\qw&
			\qw\\
			\pureghost{\Psi_1}&
			\poloFantasmaCn{\rA}\qw&
			\multimeasureD{1}{e}\\
			\pureghost{\Psi_1}&
			\poloFantasmaCn{\rA}\qw&
			\ghost{e}\\
		}
	\end{aligned}\, .
\end{equation*}
In conclusion, by simply choosing as our input states two suitable non-local tripartite boxes we build a protocol that is correct with probability 1, perfectly concealing and seems to be also perfectly binding (we are considering only reversible transformations). Naturally, since the theory is not complete, we would not like to fall in the same mistake of claiming our bit commitment protocol perfectly (or, when applied in Ref.~\cite{Buhrman_2006}, unconditionally) secure in PR-box theory but, less then unexpected turns in the theory, PR-box theory could really represent the first example of a theory with entanglement and bit commitment.

\clearpage
\thispagestyle{empty}
\phantom{a}
\newpage

\chapterstyle{thesis3}
\chapter*{Conclusions}
\label{conclusions}
\addcontentsline{toc}{chapter}{Conclusions}
\thispagestyle{plain}

In this thesis we have formalized the bit commitment protocol in the operational language to investigate its feasibility in a more general context than quantum theory. In this way we are willing to make the first step in the understanding of the relation that exists between bit commitment and the operational axioms of quantum theory. In particular we focused on the study of BC in PR-box theory, a theory that is more non-local than the quantum one but where the purification property does not hold.\\

In performing this analysis we were forced to investigate new aspects of PR-box theory, since it is far away to be closed and complete. Some of the most remarkable results that we achieved are the following. Firstly, we described a strategy that grants to always find a POVM able to perfectly discriminate between any two bipartite pure states. In addition, we proved that the maximally mixed state $\mu$, is purificable. Furthermore, if the theory is limited to no more than bipartite correlated boxes, then $\mu$ is the unique internal state that is purificable and its purification is also unique up to reversible transformations on the purifying system. Then, we were able to show how simplistic generalizations to the arbitrary $N$-partite case are not appropriate. For example just admitting tripartite boxes we showed how the purification of the maximally mixed state is not more unique, and how it is not even granted that $\mu$ is still the unique internal state that is purificable.\\

\thispagestyle{plain}

After the presentation of the PR-box theory in Chapter~\ref{chap:pr-box}, in Chapter~\ref{chap:no-bc} we presented the results about the impossibility of bit commitment in PR-boxes.\\
In the literature of BC performed on non-local boxes, these have been rarely considered as part of a coherent theory and in fact tripartite correlated boxes or local reversible transformations (that are admissible in the theory) have always been neglected.\\
By simply taking in consideration the reversible transformations we proposed a proof of impossibility of perfectly secure bit commitment in a PR-box theory limited by the following constraint:
\begin{enumerate}
	\item no more than bipartite correlated boxes admitted;
	\item only reversible transformations considered;
	\item only pure input states.
\end{enumerate}
Even under these three important limitation our scenario is still enough general to include all the protocol proposed in literature. Furthermore we were also able to adapt the solid proof in Ref.~\cite{chiribella} (in same context as above) to obtain an identical result for the impossibility of bit commitment in PR-boxes. In addition, even if we dealt only with perfectly secure bit commitment, we explicitly described a scheme in which Alice is able to cheat perfectly in the protocol proposed in Ref.~\cite{Buhrman_2006}, that was claimed to be unconditionally secure.\\

Finally we relaxed the limitations that we imposed on the theory and we proposed a protocol that seems to be perfectly secure. However, since the theory is not complete, we address future studies to investigate this and the other questions that remain unanswered in this thesis.\\

First of all, if the discriminating strategy could include not only pure states but all of them, then the proof of impossibility of perfectly secure BC could be extended to non-pure input states, too. Anyhow, the great unknown variable is still the integration of generic $N$-partite non-local correlated boxes in the theory. The consequences could be very surprising, for example it would even be possible that the number of internal states that are purificable asymptotically increases increasing $N$, and so that the purification principle could hold in the limit $N\rightarrow\infty$.\\

In conclusion we presented some results on the impossibility of perfectly secure bit commitment in PR-boxes and we precisely pointed out their limit of validity, that, even if under considerable assumptions, still have an important comparison with the literature on the subject.\\
However, as we have outlined many times in this work, to achieve definitive results, other progresses in the fundamental aspects of the theory are absolutely necessary.
\thispagestyle{plain}

\clearpage
\thispagestyle{empty}
\phantom{a}
\newpage

\chapterstyle{thesis}
\appendix
\chapter{Tables}
\label{app:tables}

\begin{table}[h]
	\centering
	\begin{tabular}{c|c}
		\label{tab:local}
		($\alpha\beta\gamma\delta$) & $a\oplus b$ \\
		\hline
		0000 & 0\\
		0001 & 1\\
		0010 & $y$\\
		0011 & $y\oplus1$\\
		0100 & 1\\
		0101 & 0\\
		0110 & $y\oplus1$\\
		0111 & $y$\\
		1000 & $x$\\
		1001 & $x\oplus1$\\
		1010 & $x\oplus y$\\
		1011 & $x\oplus y\oplus1$\\
		1100 & $x\oplus1$\\
		1101 & $x$\\
		1110 & $x\oplus y\oplus1$\\
		1111 & $x\oplus y$\\
	\end{tabular}
	\caption{Outcome relations for the 16 bilocal-boxes.}
\end{table}
\begin{table}
	\centering
	\begin{tabular}{c|c}
		\label{tab:nonlocal}
		($\alpha\beta\gamma$) & $a\oplus b$\\
		\hline
		000 & $xy$ \\
		001 & $xy\oplus 1$ \\
		010 & $xy\oplus y$ \\
		011 & $xy\oplus y\oplus1$ \\
		100 & $xy\oplus x$ \\
		101 & $xy\oplus x\oplus1$ \\
		110 & $xy\oplus x\oplus y$ \\
		111 & $xy\oplus x\oplus y\oplus1$ \\
	\end{tabular}
	\caption{Outcome relations for the 8 bipartite nonlocal-boxes.}
\end{table}

\clearpage
\thispagestyle{empty}
\phantom{a}
\backmatter

\chapterstyle{default}

\printindex 

\end{document}